\documentclass[12pt]{article}
\usepackage{amsmath}
\usepackage{amsfonts}
\usepackage{amssymb}
\usepackage{graphicx}
\usepackage{tablefootnote}
\usepackage{threeparttable}
\usepackage{float}
\usepackage{caption}
\usepackage{tikz}

\usepackage[margin=1in]{geometry}

\usepackage[
authordate,
backend=biber,
date=year,
maxcitenames=2,
mincitenames=1,
doi=false,
url=false,
isbn=false,
eprint=false
]{biblatex-chicago}
\newcommand*{\bibtitle}{References}

\DeclareDelimFormat{nameyeardelim}{\addcomma\space}

\let\cite\textcite

\usepackage{graphicx} % Required for inserting images
\usepackage{amsmath,amsthm,amssymb}
\usepackage{mathtools}
\usepackage{dsfont}
\usepackage{statmath}
\usepackage{calrsfs}
\usepackage{booktabs}
\usepackage{enumitem}
\usepackage{marvosym}
\usepackage{accents}
\usepackage{tikz}
\usepackage{float}
\usepackage{threeparttable}
\usepackage{caption}
\usepackage{subcaption}
\newcommand{\ubar}[1]{\underaccent{\bar}{#1}}
\newcommand{\sym}[1]{\ifmmode^{#1}\else$^{#1}$\fi}
\newcommand{\norm}[1]{\left\lVert#1\right\rVert}
\newtheorem{theorem}{Theorem} 
\newtheorem{prop}{Proposition}
\newtheorem{lemma}{Lemma} 
\begin{document}

\title{The Econometrics of Utility Transferability in Dyadic Network Formation Models}
\author{Joseph Marshall\thanks{joseph.marshall@nuffield.ox.ac.uk} \\
%EndAName
University of Oxford\thanks{Nuffield College, Oxford, United Kingdom, OX1 1NF}}
\maketitle

\begin{abstract}
This paper studies how to estimate an individual's taste for forming a connection with another individual in a network. It compares the difficulty of estimation with and without the assumption that utility is transferable between individuals, and with and without the assumption that regressors are symmetric across individuals in the pair. I show that when pair-specific regressors are symmetric, the sufficient conditions for consistency and asymptotic normality of the maximum likelihood estimator that assumes transferable utility (TU-MLE) are also sufficient for the maximum likelihood estimator that does not assume transferable utility (NTU-MLE). When regressors are asymmetric, I provide sufficient conditions for the consistency and asymptotic normality of the NTU-MLE. I also provide a specification test to assess the validity of the transferable utility assumption. Two applications from different fields of economics demonstrate the value of my results. I find evidence of researchers using the TU-MLE when the transferable utility assumption is violated, and evidence of researchers using NTU-model-based estimators when the validity of the transferable utility assumption cannot be rejected.
\end{abstract}

\newpage

\section{Introduction}

Given a group of individuals, can we determine how much benefit one individual derives from connecting with another? And, by extension, can we predict who will connect with whom? These questions are central to the study of social and economic networks. For example, suppose that a new technology is invented and trade economists wish to predict which countries will become exporters of the technology and which will become importers. Or imagine a NGO tasked with running group cognitive behavioral therapy sessions to improve the mental health of struggling adolescents. If the NGO wishes for the adolescents to form friendships with people in their groups so that they can discuss their shared experiences, then the NGO needs to know which group designs are most conducive to friendship formation. Would it be better to have separate groups of boys and girls? Or would it be better to construct groups of adolescents with different characteristics? This paper contributes to the econometric models that seek to answer questions like these. 

To understand the contribution of this paper, it is necessary to introduce two econometric models that are used to estimate individuals' taste for linking with one another. The first model assumes that utility is transferable between individuals; the second model does not. The primary goal of this paper is to compare the assumptions required for estimation of the non-transferable utility model relative to the transferable utility model. This is achieved by comparing the sufficient conditions for consistency and asymptotic normality of the maximum likelihood estimators corresponding to each model.

\subsection{Transferable Utility (TU) Model}

The simpler econometric model used to determine how much benefit two individuals derive from forming a connection is known as the transferable utility (TU) model. This is a binary choice model in which the outcome is an indicator for if the two individuals decide to form a link or not. The name of the model derives from the fact that utility is assumed to be transferable between individuals. If individual $i$ wishes to form a link with individual $j$, but $j$ does not wish to reciprocate, then this assumption means that $i$ can provide $j$ with transfers of some kind (e.g. money) such that forming the link becomes profitable for $j$. The practical implication of this assumption is that we only require a single indicator to summarize the joint linking decision of the two individuals. 

Formally, suppose that individuals $i$ and $j$ form a link, denoted $Y_{ij}^{\text{TU}} = 1$, if the joint utility that they derive from the link is non-negative. The joint utility is comprised of three parts: 1) some continuous function $W_{ij} \coloneq w(X_i, X_j)$ from $\mathbb{R}^{k \times 2}$ to $\mathbb{R}^{k}$ that combines $i$'s random vector of characteristics, $X_i \in \mathbb{R}^k$, with $j$'s random vector of characteristics, $X_j \in \mathbb{R}^k$, component-wise; 2) a vector of ``taste for linking parameters'', $\beta \in \Theta \subseteq \mathbb{R}^k$; and 3) a link-specific error term, $\varepsilon_{ij} \in \mathbb{R}$, that is assumed to follow a standard normal distribution. In sum, the TU model is given by:

$$
Y_{ij}^{\text{TU}} = \mathds{1} [W_{ij}^{'}\beta \geq \varepsilon_{ij}],
$$ where $\varepsilon_{ij} \sim \mathcal{N}(0, 1)$.

For example, suppose that $\beta \in \mathbb{R}$ and $W_{ij} = |X_i - X_j|$ where $X_i$ is $i$'s income and $X_j$ is $j$'s income. In this case, $\beta$ is the effect of the distance between $i$ and $j$'s income on the likelihood that they become friends. We typically find that individuals who share similar characteristics are more likely to be friends, a phenomenon known as ``homophily'' in the networks literature \parencite{jackson_social_2008}. Therefore, we might expect $\beta$ to be negative here to capture the fact that people with far apart incomes are less likely to be friends. In this sense, (the negative of) $\beta$ can be thought of as a homophily parameter.

Not allowing for self-links and summing across unique links, we have the log-likelihood function for the TU model:\footnote{We have duplicate links because $Y_{ij}^{\text{TU}} = Y_{ji}^{\text{TU}}$, i.e. the TU model generates undirected networks. To see how the double summation ignores self and duplicate links, set $n = 3$. Then the links that are considered are $(1, 2)$, $(1, 3)$, and $(2, 3)$.}

$$
\frac{1}{N}\sum^{n-1}_{i=1} \sum^{n}_{j=i+1} \Big[Y_{ij}^{\text{TU}} \log \big(\Phi(W_{ij}^{'}\beta) \big) + (1 - Y_{ij}^{\text{TU}})\log \big(1 - \Phi(W_{ij}^{'}\beta) \big) \Big],
$$
where $\Phi$ is the univariate standard normal CDF and $N = \binom{n}{2} = \frac{n(n-1)}{2}$ is the number of unique dyads excluding self-links. Compare this with the well-known probit log-likelihood function with non-dyadic data:

$$
\frac{1}{n}\sum^{n}_{i=1} \Big[Y_{i} \log \big(\Phi(X_{i}^{'}\beta) \big) + (1 - Y_{i})\log \big(1 - \Phi(X_{i}^{'}\beta) \big) \Big],
$$
where $Y_{i} = \mathds{1} [X_{i}^{'}\beta \geq \varepsilon_{i}]$ and $\varepsilon_{i} \sim \mathcal{N}(0, 1)$. We know from Example 1.2 in \cite{newey_chapter_1994} that the maximum likelihood estimator for the probit model is consistent and asymptotically normal provided that $\mathbb{E} [X_iX_i']$ exists and is nonsingular and the data $(Y_i, X_i)$ is i.i.d. The log-likelihood function for the TU model is essentially the probit log-likelihood function with a double summation and distinct $i$ and $j$ indices to account for the fact that observations are at the dyad level. With dyadic models, it is not possible to assume that the data $(Y_{ij}, X_i, X_j)$ is i.i.d.; $Y_{12}$ and $Y_{13}$ both depend on $X_1$, so are clearly correlated. Instead, we can only assume that $X_i$ is i.i.d., which then requires us to make a stronger finite moment assumption than in \cite{newey_chapter_1994} to establish asymptotic normality. In particular, it can be shown that the maximum likelihood estimator for the TU model (TU-MLE) is consistent provided that $\mathbb{E} [W_{ij} W_{ij}']$ exists and is nonsingular and $X_i$ is i.i.d. (plus some standard regularity conditions on the parameter space $\Theta$), and is asymptotically normal provided additionally that $\mathbb{E} [\|W_{ij}\|^8]$ exists.\footnote{Simply replicate the proofs for probit in \cite{newey_chapter_1994}, but account for the additional complications that arise with dyadic data exactly as is done in Section 2 below. See Section 2 for a discussion of these assumptions.}

\subsection{Non-Transferable Utility (NTU) Model}

If one is not willing to assume that utility is transferable between individuals, then one cannot express the link decision of the two individuals as a single indicator containing a joint utility term. Instead, for a link to form, denoted $Y_{ij}^{\text{NTU}} = 1$, we require that $i$ and $j$'s separate utilities from forming the link are simultaneously non-negative. Formally, we have the product of two indicators: an indicator containing $i$'s utility from forming the link and an indicator containing $j$'s utility from forming the link. $i$'s utility is comprised of $W_{ij} \coloneq w(X_i, X_j) \in \mathbb{R}^k$, $\beta \in \Theta \subseteq \mathbb{R}^k$, and $\varepsilon_{ij} \in \mathbb{R}$; $j$'s utility is comprised of $W_{ji} \coloneq w(X_j, X_i) \in \mathbb{R}^k$, $\beta \in \Theta \subseteq \mathbb{R}^k$, and $\varepsilon_{ji} \in \mathbb{R}$. We assume that $(\varepsilon_{ij}, \varepsilon_{ji})$ follows a bivariate normal distribution with $\mathrm{Var}(\varepsilon_{ij})$ and $\mathrm{Var}(\varepsilon_{ji})$ both normalised to 1 and $\mathrm{Cov}(\varepsilon_{ij}, \varepsilon_{ji}) = \rho \in (-1, 1)$. While the errors $(\varepsilon_{ij}, \varepsilon_{ji})$ can be correlated within dyads (governed by $\rho$), they are assumed to be independent across dyads. In sum, the NTU model is given by:

$$
Y_{ij}^{\text{NTU}} = \mathds{1} [W_{ij}^{'}\beta \geq \varepsilon_{ij}] \cdot \mathds{1} [W_{ji}^{'}\beta \geq \varepsilon_{ji}],
$$
where

\begin{equation*} \label{eq:bnorm}
\begin{pmatrix}
\varepsilon_{ij} \\
\varepsilon_{ji}
\end{pmatrix}
\sim \mathcal{N} \left(
\begin{pmatrix}
0 \\
0
\end{pmatrix},
\begin{pmatrix}
1 & \rho \\
\rho & 1
\end{pmatrix}
\right).
\end{equation*}

In the most general NTU model considered in this paper, we allow for pairs of individuals $ij$ such that $W_{ij} \neq W_{ji}$. Such regressors are called ``asymmetric regressors''. Although in many applications regressors are symmetric, e.g. if $W_{ij} = |X_i - X_j| = |X_j - X_i| = W_{ji}$, there are some situations in which one may wish to allow for asymmetric regressors. One such example is if an individual only cares about the income of the other person, i.e. $W_{ij} = X_j \neq X_i = W_{ji}$ where $X_i$ is $i$'s income. We do however assume throughout that both individuals share a common $\beta$. An alternative formulation of the model would be to impose regressor symmetry while allowing for individuals $i$ and $j$ to have different tastes for linking, $\beta_1$ and $\beta_2$; this is the formulation that \cite{poirier_partial_1980} opts for.\footnote{Note that \cite{poirier_partial_1980} is concerned solely with the identification of their model and does not consider consistency nor asymptotic normality. We will see in Section 2.2 that their identification issues resurface in our model, but to a lesser extent.} Since the order of $i$ and $j$ is arbitrary, I consider it less natural to impose differences in $i$ and $j$'s tastes, relative to differences in their characteristics.

Irrespective of regressor (a)symmetry, the NTU model generates symmetric outcomes, $Y_{ij}^{\text{NTU}} = Y_{ji}^{\text{NTU}}$, and therefore undirected networks. This is immediate from the definition of the model: $Y_{ij}^{\text{NTU}} = \mathds{1} [W_{ij}^{'}\beta \geq \varepsilon_{ij}] \cdot \mathds{1} [W_{ji}^{'}\beta \geq \varepsilon_{ji}] = \mathds{1} [W_{ji}^{'}\beta \geq \varepsilon_{ji}] \cdot \mathds{1} [W_{ij}^{'}\beta \geq \varepsilon_{ij}] = Y_{ji}^{\text{NTU}}$.

Analogous to the log-likelihood function for the TU model, we sum across unique links (excluding self-links) to obtain the log-likelihood function for the NTU model:

\[
\begin{aligned}
    \hat{Q}_n(\beta, \rho) &\coloneq \frac{1}{N}\sum^{n-1}_{i=1} \sum^{n}_{j=i+1} \Big[Y_{ij}^{\text{NTU}} \log \big(\Phi_{2}^{\rho}\bigg(\begin{matrix} W_{ij}^{'}\beta \\ W_{ji}^{'}\beta \end{matrix} \bigg) \big) + (1 - Y_{ij}^{\text{NTU}})\log \big(1 - \Phi_{2}^{\rho}\bigg(\begin{matrix} W_{ij}^{'}\beta \\ W_{ji}^{'}\beta \end{matrix} \bigg) \big) \Big],
\end{aligned}
\]
where $\Phi_{2}^{\rho}$ is the CDF of the bivariate normal distribution with covariance $\rho$. The theoretical results presented in this paper are all under the special case of $\rho = 0$, i.e. independence between $\varepsilon_{ij}$ and $\varepsilon_{ji}$.\footnote{Simulation results when $\rho \neq 0$ are provided in the Appendix.} When $\rho = 0$, we write $\hat{Q}_n(\beta)$ instead of $\hat{Q}_n(\beta, 0)$. It is simple to see that:

\[
\begin{aligned}
    \hat{Q}_n(\beta) &\coloneq \frac{1}{N}\sum^{n-1}_{i=1} \sum^{n}_{j=i+1} \Big[Y_{ij}^{\text{NTU}} \log \big(\Phi\big(W_{ij}^{'}\beta\big) \Phi\big( W_{ji}^{'}\beta\big)\big) \\
    &\quad\quad\quad\quad\quad\quad\quad+ (1 - Y_{ij}^{\text{NTU}})\log \big(1 - \Phi\big(W_{ij}^{'}\beta\big) \Phi\big( W_{ji}^{'}\beta\big)\big) \Big].
\end{aligned}
\]
This log-likelihood is nicer to work with because dealing with the univariate standard normal CDF, $\Phi$, is much easier than dealing with bivariate normal CDF, $\Phi_{2}^{\rho}$. As well as simplifying the analysis, $\rho = 0$ is a natural special case to consider. Indeed, in the first application in Section 4.1, I estimate $\rho$ to be very close to 0 using real-world data.

The central object of interest in this paper is the maximum likelihood estimator for the NTU model (NTU-MLE) with $\rho = 0$:

$$
\hat{\beta} \coloneq \argmax_{\beta \in \Theta} \hat{Q}_n(\beta).
$$
Proving that the NTU-MLE, $\hat{\beta}$, is consistent and asymptotically normal under standard assumptions is harder than with the TU-MLE; this is the focus of Section 2 and the primary contribution of this paper. The cases of symmetric and asymmetric regressors are considered separately. This is because, somewhat surprisingly, when regressors are symmetric the sufficient conditions for consistency and asymptotic normality of the TU-MLE turn out to be sufficient for consistency and asymptotic normality of the NTU-MLE. However, when regressors are asymmetric, slightly stronger sufficient conditions are imposed.

\subsection{Contribution}

I now outline how this paper is structured and the contributions it makes. 

Within the econometrics literature on network formation, there is a dichotomy between ``strategic'' and ``dyadic'' models. Strategic models allow the decision of two agents to form a link to depend on the existence of other links in the network; dyadic models ignore the wider network and focus on the link decision of the two agents independent of the link decisions of others. While arguably more realistic in some settings, strategic models are plagued with issues of multiple equilibria, forcing researchers to either take a stance on the equilibrium selection mechanism as in \cite{christakis_empirical_2020} or resort to set identification as in \cite{sheng_structural_2020}.\footnote{For excellent reviews of strategic models of network formation, see Section 4 of \cite{chandrasekhar_econometrics_2016} and Section 4 of \cite{de_paula_econometric_2020}.} 

The TU and NTU models introduced above are dyadic models. As well as circumventing issues of model incompleteness, dyadic models are more realistic in settings where agents do not consider the wider network when making their link decisions. For example, as an application of the tools developed in this paper, I study risk-sharing decisions among households in a small community in rural Tanzania. It seems rather implausible that households would know or be able to predict the entire lending network in the community and, consequently, they would be unable to account for the network structure when making their link decisions.

Most papers in the dyadic literature assume that utility is transferable between agents. The landmark paper on the TU dyadic network formation model is \cite{graham_econometric_2017}, which adds unobserved individual heterogeneity terms for individual $i$, $A_i$, and individual $j$, $A_j$, to the TU link decision indicator:

$$
Y_{ij}^{\text{TU}} = \mathds{1} [W_{ij}^{'}\beta + A_i + A_j \geq \varepsilon_{ij}].
$$
The introduction of fixed effects leads to the incidental parameter problem \parencite{neyman_consistent_1948}. To eliminate the fixed effects, \cite{graham_econometric_2017} assumes that $\varepsilon_{ij}$ is a standard logistic random variable that is i.i.d. across dyads and then uses sufficient statistics that depend on $\varepsilon_{ij}$'s parametric distribution. \cite{candelaria_semiparametric_2020} and \cite{toth_semiparametric_2017} extend \cite{graham_econometric_2017} to a semi-parametric setting in which the distribution of $\varepsilon_{ij}$ is left unspecified; \cite{jochmans_semiparametric_2018} and \cite{dzemski_empirical_2019} consider related models that generate directed networks.

At the time of writing, two papers have been written on dyadic network formation models with non-transferable utility: \cite{gao_logical_2023} and \cite{li_ming_estimation_2024}.\footnote{There are strategic network formation models with NTU, such as in \cite{goldsmith-pinkham_social_2013}, but again these are not the focus of this paper.} Both papers include unobserved individual heterogeneity terms for $i$ and $j$ in their respective indicators, and initially assume that regressors are symmetric,\footnote{\cite{gao_logical_2023} extend their results to asymmetric regressors in their Appendix C, but \cite{li_ming_estimation_2024} do not.} such that their main model is given by:

$$
Y_{ij}^{\text{NTU}} = \mathds{1} [W_{ij}^{'}\beta + A_i \geq \varepsilon_{ij}] \cdot \mathds{1} [W_{ij}^{'}\beta + A_j \geq \varepsilon_{ji}].
$$

Since \cite{gao_logical_2023} and \cite{li_ming_estimation_2024} are the works that are most similar to this paper, it is beneficial to describe what exactly they do and how their contributions are distinct from this paper's contributions. \cite{li_ming_estimation_2024} is parametric, although the joint distribution of $(\varepsilon_{ij}, \varepsilon_{ji})$ is kept general enough to incorporate both the logistic and normal distributions as special cases (with $\varepsilon_{ij}$ and $\varepsilon_{ji}$ assumed to be independent). Their estimation procedure is fairly complex: first solve a high-dimensional system of moment equations, then plug this solution into a one-step estimator of \cite{le_cam_l_m_theorie_1969}, then refine the estimator to achieve $\sqrt{N}$-consistency via split-network jackknife, and finally perform bootstrap aggregating to achieve Cramer-Rao efficiency. This procedure allows for estimation of the fixed effects as well as $\beta$.

On the other hand, \cite{gao_logical_2023} is semi-parametric in that they leave the joint distribution of $(\varepsilon_{ij}, \varepsilon_{ji})$ unspecified. They eliminate the fixed effects, which are not additively separable in the NTU model unlike in the TU model, via a procedure called ``logical differencing''. Namely, consider an individual $i$ who is a) more popular than $j$ in some group with characteristics $X_k = \bar{x}$, but b) less popular than $j$ in some group with characteristics $X_k = \ubar{x}$. a) implies that either $w(X_i, \bar{x})'\beta > w(X_j, \bar{x})'\beta$ or $A_i > A_j$; b) implies that either $w(X_i, \ubar{x})'\beta < w(X_j, \ubar{x})'\beta$ or $A_i < A_j$. Since $A_i > A_j$ and $A_i < A_j$ cannot both be true, \cite{gao_logical_2023} conclude that either $w(X_i, \bar{x})'\beta > w(X_j, \bar{x})'\beta$ or $w(X_i, \ubar{x})'\beta < w(X_j, \ubar{x})'\beta$. These restrictions on $\beta$ are used to identify and then estimate it, similar to maximum score estimation. While the semi-parametric set-up is more general than the set-up in \cite{li_ming_estimation_2024}, \cite{gao_logical_2023} lack inference results and cannot estimate the fixed effects because they eliminate them.

The literature developed from the TU model, to the TU model with fixed effects, to the NTU model with fixed effects (see Figure 1). To the best of my knowledge, no one has studied the NTU model without fixed effects given in Section 1.2 (the focus of this paper). Although unobserved individual heterogeneity is important in many applications, the obvious advantage of ignoring it is that we can avoid the incidental parameter problem and use maximum likelihood estimation, which is considerably easier to implement than the estimation procedures in \cite{gao_logical_2023} and \cite{li_ming_estimation_2024}. One might worry that the results presented in this paper are special cases of the results in \cite{gao_logical_2023} and \cite{li_ming_estimation_2024} with the fixed effects set equal to 0. This is not the case because the logical differencing estimator in \cite{gao_logical_2023} and the moment condition-based estimator in \cite{li_ming_estimation_2024} are very different to the standard MLE.

\begin{figure}[H]
\centering
\caption{Visualizing the Gap in the Literature}
\begin{tikzpicture}
    \path (0,3) node(a) [rectangle, rounded corners, minimum width=3cm, minimum height=2cm, text centered, draw=black, fill=white!30, align=center] {
        \textbf{TU + FE} \\ \cite{graham_econometric_2017} \\ etc.
    };
    \path (0,0) node(b) [rectangle, rounded corners, minimum width=3cm, minimum height=2cm, text centered, draw=black, fill=white!30] {\textbf{TU}};
    \path (9,3) node(c) [rectangle, rounded corners, minimum width=3cm, minimum height=2cm, text centered, draw=black, fill=white!30, align=center] {
        \textbf{NTU + FE} \\ \cite{gao_logical_2023} \\ \cite{li_ming_estimation_2024}};
    \path (9,0) node(d) [rectangle, rounded corners, minimum width=3cm, minimum height=2cm, text centered, draw=black, fill=white!30, align=center] {
        \textbf{NTU} \\ This Paper};

    \draw [teal, line width=1mm,->,>=stealth] (b.north) -- (a.south);
    \draw [teal, line width=1mm,->,>=stealth] (a.east) -- (c.west);
    \draw [dashed, gray, line width=1mm,->,>=stealth] (d.north) -- (c.south);
    \draw [dashed, gray, line width=1mm,->,>=stealth] (b.east) -- (d.west);
\end{tikzpicture}
\caption*{\footnotesize The teal line in the figure depicts the methodological progression that the econometrics literature on dyadic network formation has followed. The gray dashed line depicts the route that this paper will take by considering a non-transferable utility model without unobserved individual heterogeneity.}
\end{figure}

Going directly from the TU model (without fixed effects) to the NTU model (without fixed effects) also reveals some novel theoretical insights. In particular, Theorems 1 and 2 in Section 2.1 establish that the standard assumptions for consistency and asymptotic normality of the TU-MLE are sufficient for consistency and asymptotic normality of the NTU-MLE with independent errors and symmetric regressors. That is, existence and nonsingularity of $\mathbb{E} [W_{ij} W_{ij}']$, i.i.d-ness of $X_i$, and some standard regularity conditions on the parameter space $\Theta$ are sufficient for consistency of the NTU-MLE (Theorem 1). And if one further assumes that $\mathbb{E} [\|W_{ij}\|^8]$ exists, then asymptotic normality of the NTU-MLE is guaranteed (Theorem 2). The proofs follow the probit example in \cite{newey_chapter_1994} closely, but have to deal with the additional complications that arise due to the dyadic structure of the data and the fact that the CDF of the standard normal distribution in the log-likelihood function is replaced with the CDF squared. Relative to \cite{gao_logical_2023} and \cite{li_ming_estimation_2024}, these sufficient conditions are easier to interpret and make clear how much more needs to be assumed relative to the TU model -- which turns out to be nothing in the symmetric regressor case. This result stems primarily from the surprising fact that the log-likelihood function for the NTU model is concave in this case.

Unlike \cite{li_ming_estimation_2024}, this paper also considers what happens with asymmetric regressors (theoretical results in Section 2.2) and dependent errors (simulation results in the Appendix). Although \cite{gao_logical_2023} also consider the case of asymmetric regressors, they do not have any results on the asymptotic distribution of their estimator, while this paper does (Theorem 4). Related to the work of \cite{poirier_partial_1980}, identification is tricky to establish with asymmetric regressors, so I provide a proposition with an easy-to-verify condition that ensures identification (Proposition 1). Namely, as well as imposing standard nonsingularity conditions, one simply needs to check that either all asymmetric regressors take on at least three distinct values, or that there is a pair of individuals for which the regressors are symmetric. At least one of these conditions is almost always true in practice. Consistency is also trickier to establish with asymmetric regressors because the log-likelihood function is no longer globally concave. Without concavity, I establish consistency with asymmetric regressors by invoking a uniform law of large numbers for U-statistics (Theorem 3). Finally, for asymptotic normality, I impose a further nonsingularity condition that essentially rules out independence of $W_{ij}$ and $W_{ji}$ (Theorem 4). All conditions are discussed in due course. Even with asymmetric regressors, remarkably little extra is assumed relative to the sufficient conditions for consistency and asymptotic normality of the TU-MLE.

An additional contribution of this paper is to provide a specification test for whether a given dataset is better explained by the TU or NTU model. \cite{li_ming_estimation_2024} discuss what happens if, say, one believes that the errors follow a logistic distribution when in fact they follow a normal distribution. This form of misspecification is interesting, but arguably not as interesting as if one believes that the TU model best explains link formation when in fact the NTU model does; this is the focus of Section 3. I provide a likelihood ratio specification test that accounts for the fact that the parameter of interest is at its boundary in the spirit of \cite{self_asymptotic_1987} and present simulation-based evidence for the test's correct size and strong power.

In Section 4, I apply the estimation and testing procedures developed in this paper to two different network datasets. The first application focuses on the determinants of adolescent friendship at an all-girls school in California, as explored in \cite{goeree_1d_2010}. I find that \cite{goeree_1d_2010} estimate the TU model when the TU assumption is rejected by my specification test, and that the NTU-MLEs suggest different determinants of friendship than their TU-MLEs. The second application focuses on the determinants of risk-sharing in rural Tanzania. This is the application also considered in \cite{gao_logical_2023} and \cite{li_ming_estimation_2024}. I find that this dataset is best explained by the TU model, suggesting that the NTU-model-based estimators used in \cite{gao_logical_2023} and \cite{li_ming_estimation_2024} may be unnecessary, if not inappropriate, for this dataset. Without a testing procedure for TU vs NTU, other authors are at risk of applying NTU-model-based estimators to TU-model-based data incorrectly, and vice versa.

Finally, I conclude in Section 5. In the Appendix, I provide the proofs of all results and simulation-based evidence on the consistency of the NTU-MLE with $\rho \neq 0$ and the exceptional small-sample behavior of the NTU-MLE.

\section{Properties of the NTU-MLE}

Consider again the NTU model without fixed effects and with independent errors that was introduced in detail in Section 1.2:\footnote{Now writing $Y_{ij} = Y_{ij}^{\text{NTU}}$ for notational convenience.}

$$
Y_{ij} = \mathds{1} [W_{ij}^{'}\beta \geq \varepsilon_{ij}] \cdot \mathds{1} [W_{ji}^{'}\beta \geq \varepsilon_{ji}],
$$
where \begin{equation*} \label{eq:bnorm}
\begin{pmatrix}
\varepsilon_{ij} \\
\varepsilon_{ji}
\end{pmatrix}
\sim \mathcal{N} \left(
\begin{pmatrix}
0 \\
0
\end{pmatrix},
\begin{pmatrix}
1 & 0 \\
0 & 1
\end{pmatrix}
\right),
\end{equation*}
$W_{ij} \coloneq w(X_i, X_j) \in \mathbb{R}^{k}$, $W_{ji} \coloneq w(X_j, X_i) \in \mathbb{R}^{k}$, and $\beta \in \Theta \subseteq \mathbb{R}^k$.

Also consider again the corresponding log-likelihood for the NTU model:

\[
\begin{aligned}
    \hat{Q}_n(\beta) &\coloneq \frac{1}{N}\sum^{n-1}_{i=1} \sum^{n}_{j=i+1} \Big[Y_{ij} \log \big(\Phi\big(W_{ij}^{'}\beta\big) \Phi\big( W_{ji}^{'}\beta\big)\big) + (1 - Y_{ij})\log \big(1 - \Phi\big(W_{ij}^{'}\beta\big) \Phi\big( W_{ji}^{'}\beta\big)\big) \Big],
\end{aligned}
\]
where $\Phi$ is the univariate standard normal CDF, $N = \binom{n}{2} = \frac{n(n-1)}{2}$, and $n$ is the number of individuals in the network. 

The NTU-MLE is given by:

$$
\hat{\beta} \coloneq \argmax_{\beta \in \Theta} \hat{Q}_n(\beta).
$$
I now present the main theoretical contribution of this paper: conditions for identification of $\beta$ and for consistency and asymptotic normality of the NTU-MLE.

\subsection{With Symmetric Regressors}

In this section, I simplify the analysis even further by assuming that the regressors are symmetric, i.e. $W_{ij} = W_{ji}$ for all pairs of individuals $ij$. For the asymmetric regressor case, see Section 2.2 below.

\subsubsection{Identification}

 With symmetric regressors, the likelihood of observing a particular $Y_{ij}$ conditional on $X_i$, $X_j$, and $\beta$ is given by:

$$
f(Y_{ij} \mid X_i, X_j, \beta) = \left[\Phi (W_{ij}' \beta)^2\right]^{Y_{ij}} \left[1 -  \Phi (W_{ij}' \beta)^2\right]^{1 - Y_{ij}}.
$$

The first theoretical result to establish is identification of $\beta$, i.e. that there exists a pair of individuals $ij$ such that, for $\beta, \tilde{\beta} \in \Theta$:

\begin{align*}
&f(Y_{ij} \mid X_i, X_j, \beta) = f(Y_{ij} \mid X_i, X_j, \tilde{\beta})
\implies \beta = \tilde{\beta}.
\end{align*}
Or equivalently, via the contrapositive, that there exists a pair of individuals $ij$ such that, for $\beta, \tilde{\beta} \in \Theta$, if $\beta \neq \tilde{\beta}$, then $f(Y_{ij} \mid X_i, X_j, \beta) \neq f(Y_{ij} \mid X_i, X_j, \tilde{\beta})$.

As discussed in Section 1.1, the key condition for identification of the probit model in \cite{newey_chapter_1994} (hereafter NM), where the (random) regressors are simply $X_i$ instead of $W_{ij}$, is that $\mathbb{E} [X_iX_i']$ exists and is nonsingular, and hence positive definite. This is a standard assumption that rules out perfect multicollinearity of the regressors. In particular, NM show that for all individuals $i$ and for all $\beta, \tilde{\beta} \in \Theta$, if $\beta \neq \tilde{\beta}$ and $\mathbb{E} [X_i X_i']$ exists and is nonsingular, then $\Phi(X_i' \beta) \neq \Phi(X_i' \tilde{\beta})$. The analogous result in our setting, where the regressors are $W_{ij}$ instead of simply $X_i$, is that: for all pairs of individuals $ij$ and for $\beta, \tilde{\beta} \in \Theta$, if $\beta \neq \tilde{\beta}$ and $\mathbb{E} [W_{ij} W_{ij}']$ exists and is nonsingular, then $\Phi(W_{ij}' \beta) \neq \Phi(W_{ij}' \tilde{\beta})$. Moreover, clearly $\Phi(W_{ij}' \beta) \neq - \Phi(W_{ij}' \tilde{\beta})$ since $\Phi(W_{ij}' \beta) > 0$ and $-\Phi(W_{ij}' \tilde{\beta}) < 0$. So, $\Phi(W_{ij}' \beta)^2 \neq \Phi(W_{ij}' \tilde{\beta})^2$, and therefore:

$$
\begin{aligned}
    f(Y_{ij} \mid X_i, X_j, \beta) &= \big[\Phi(W_{ij}' \beta)^2\big]^{Y_{ij}} \big[1 -  \Phi (W_{ij}' \beta)^2\big]^{1 - Y_{ij}} \\
    &\neq \big[\Phi (W_{ij}' \tilde{\beta})^2\big]^{Y_{ij}} \big[1 -  \Phi (W_{ij}' \tilde{\beta})^2\big]^{1 - Y_{ij}} \\
    & = f(Y_{ij} \mid X_i, X_j, \tilde{\beta}),
\end{aligned}
$$
as is required for identification. 

\subsubsection{Consistency}

With identification guaranteed under $\rho = 0$, $W_{ij} = W_{ji}$, and existence and nonsingularity of $\mathbb{E} [W_{ij} W_{ij}']$, we now move towards consistency. We will continue to rely on results from NM; namely, we will rely on their consistency theorem with concave log-likelihood functions.\footnote{See Theorem 2.7 in \cite{newey_chapter_1994}.} Consistency requires that (2.7.i) $Q_0 (\beta) \coloneq \mathbb{E} [ \log f(Y_{ij} \mid X_i, X_j, \beta) ]$ has a unique maximum at the true value of $\beta$, $\beta_0$, (2.7.ii) $\beta_0$ is an element of the interior of a convex set, (2.7.iii) $\hat{Q}_n(\beta) \coloneq \frac{1}{N} \sum^{n-1}_{i=1} \sum^{n}_{j=i+1} \big[Y_{ij} \log \big(\Phi(W_{ij}^{'}\beta)^2 \big) + (1 - Y_{ij})\log \big(1 - \Phi(W_{ij}^{'}\beta)^2 \big) \big]$ is concave, and (2.7.iv) $\hat{Q}_n(\beta) \overset{p}{\to} Q_0 (\beta)$ for all $\beta \in \Theta$. 

Lemma 1, which can be found in the Appendix along with its proof (and all other proofs in this paper unless stated otherwise), guarantees that $Q_0 (\beta)$ is uniquely maximised at $\beta_0$ under existence and nonsingularity of $\mathbb{E} [W_{ij} W_{ij}']$ and boundedness of $\Theta$.\footnote{Technically, there is no formal requirement in NM for the parameter space to be bounded in this case. But boundedness aids the proof and will be assumed when the regressors are asymmetric in Section 2.2 anyway.}

The difficulty lies in establishing concavity of the log-likelihood function, $\hat{Q}_n(\beta)$; this is achieved through Lemma 2 in the Appendix. While the log-likelihood function for the TU model is known to be concave, it is somewhat surprising that the log-likelihood function for the NTU model (with $\rho = 0$ and $W_{ij} = W_{ji}$), where $\Phi$ is replaced with $\Phi^2$, is still concave. Concavity helps significantly with the theory; we will see in Section 2.2 that when we relax $W_{ij} = W_{ji}$ for all pairs $ij$ the log-likelihood function loses its concavity and the theory becomes more complicated.

Finally, to establish the high-level convergence condition, $\hat{Q}_n(\beta) \overset{p}{\to} Q_0 (\beta)$, we will assume that the characteristics, $X_i$, are drawn i.i.d. across dyads.\footnote{The same assumption is made in \cite{graham_econometric_2017}, \cite{gao_logical_2023}, and \cite{li_ming_estimation_2024}.} However, this does not imply that $Y_{ij}$ is independent across individuals: $Y_{12}$ and $Y_{13}$ both depend on $X_1$, so are clearly correlated. Since $\hat{Q}_n(\beta)$ contains $Y_{ij}$, and $Y_{ij}$ is not i.i.d., we cannot simply invoke the usual weak law of large numbers to show that $\hat{Q}_n(\beta) \overset{p}{\to} Q_0 (\beta)$. Instead, we will need to use a LLN that accounts for the dependence between $Y_{12}$ and $Y_{13}$, for example.\footnote{Note that this is a fairly weak form of dependence (but not to be confused with the formal definition of \textit{weak dependence}) since most pairs of dyads do not contain a common individual. Namely, there are $\frac{N(N-1)}{2}$ unordered pairs of distinct dyads and $\frac{n(n-1)(n-2)}{2}$ such pairs have an individual in common. So, given that $N = \frac{n(n-1)}{2}$, the proportion of pairs of dyads with an individual in common is equal to $\frac{4(n-2)}{n(n-1) - 2}$, which goes to zero as $n \to \infty$.} This can be achieved by recognizing that $\hat{Q}_n(\beta)$ is a U-statistic of degree 2 and then invoking the strong LLN for U-statistics in \cite{serfling_approximation_1980}, courtesy of \cite{hoeffding_strong_1961}. Specifically, Theorem 5.4.A in \cite{serfling_approximation_1980} applied to this setting states that if (5.4.A.i) $X_i$ is i.i.d., (5.4.A.ii) $h_1(X_i, X_j) \coloneq Y_{ij} \log \big(\Phi(W_{ij}^{'}\beta)^2 \big) + (1 - Y_{ij}) \log \big(1 - \Phi(W_{ij}^{'}\beta)^2 \big)$ is such that $h_1(X_i, X_j) = h_1(X_j, X_i)$, and (5.4.A.iii) $\mathbb{E} [ \|h_1(X_i, X_j)\|] < \infty $, then $\hat{Q}_n(\beta) \overset{a.s.}{\to} Q_0 (\beta)$. Condition (5.4.A.i) is assumed; Condition (5.4.A.ii) follows from the fact that $W_{ij} = W_{ji}$ in this case; and Condition (5.4.A.iii) was shown in the proof of Lemma 1 given existence of $\mathbb{E} [W_{ij} W_{ij}']$ and boundedness of $\Theta$. Consequently, under the same conditions as Lemma 1 plus i.i.d-ness of $X_i$, $\hat{Q}_n(\beta) \overset{a.s.}{\to} Q_0 (\beta)$, so, in particular, $\hat{Q}_n(\beta) \overset{p}{\to} Q_0 (\beta)$.

Combining Lemma 1 and Lemma 2 with convergence in probability of the sample log-likelihood to the population log-likelihood, we therefore arrive at the following theorem:

\begin{theorem}[Consistency with Symmetric Regressors]
Consider the NTU model with $\rho = 0$ and $W_{ij} = W_{ji}$ for all pairs $ij$. If (1a) $\Theta$ is bounded and convex and $\beta_0 \in \text{interior}(\Theta)$, (1b) $X_i$ is i.i.d., and (1c) $\mathbb{E} [W_{ij} W_{ij}']$ exists and is nonsingular, then the NTU-MLE, $\hat{\beta} = \argmax_{\beta \in \Theta} \hat{Q}_n(\beta)$, exists with probability approaching one and is consistent for $\beta_0$ as $n \to \infty$.
\end{theorem}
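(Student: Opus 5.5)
The proof applies Theorem 2.7 of NM directly, so the work is to verify its four hypotheses (2.7.i)–(2.7.iv) under (1a)–(1c). Hypothesis (2.7.ii) is assumption (1a) verbatim. Hypothesis (2.7.i) is Lemma 1, which uses (1c) and boundedness of $\Theta$. The argument for Lemma 1 goes through the information inequality. Since $\log f$ is integrable (see below), $Q_0(\beta)\le Q_0(\beta_0)$, with equality only if $f(Y_{ij}\mid X_i,X_j,\beta)=f(Y_{ij}\mid X_i,X_j,\beta_0)$ almost surely. By the identification argument of Section 2.1.1 this forces $\Phi(W_{ij}'\beta)^2=\Phi(W_{ij}'\beta_0)^2$ a.s., hence $W_{ij}'(\beta-\beta_0)=0$ a.s. Then $(\beta-\beta_0)'\mathbb{E}[W_{ij}W_{ij}'](\beta-\beta_0)=0$, and nonsingularity gives $\beta=\beta_0$.

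Hypothesis (2.7.iv) is the pointwise convergence $\hat{Q}_n(\beta)\overset{p}{\to}Q_0(\beta)$. I will obtain it from Serfling's Theorem 5.4.A, as sketched before the statement: $X_i$ i.i.d. by (1b), and the kernel is symmetric because $W_{ij}=W_{ji}$. One point needs care here. The kernel also involves $(\varepsilon_{ij},\varepsilon_{ji})$, so I will treat each dyad's kernel as a function of $(X_i,X_j)$ together with dyad-specific independent noise. Alternatively, I will split $\hat Q_n$ into its conditional mean given $\{X_i\}$, which is a genuine U-statistic, plus a remainder. The remainder is an average of $N$ conditionally independent mean-zero terms, and its variance is $O(1/N)$ once second moments are finite.

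For the integrability condition (5.4.A.iii), I will use the Mills-ratio bounds $|\log\Phi(v)|\le C(1+|v|+v^2)$ and $|\log(1-\Phi(v)^2)|\le C(1+|v|+v^2)$. The second follows from $1-\Phi(v)^2\ge 1-\Phi(v)$. Together with $|W_{ij}'\beta|\le\|W_{ij}\|\sup_\Theta\|\beta\|$, which is finite by boundedness, this gives $\mathbb{E}|h_1|\le C(1+\mathbb{E}\|W_{ij}\|^2)<\infty$ by (1c). The remainder-variance argument above needs second moments of $h_1$, which these bounds do not give under (1c) alone. I would therefore lean on the Serfling formulation with noise included, where first moments suffice.

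The main obstacle is hypothesis (2.7.iii), concavity of $\hat Q_n$, i.e. Lemma 2. Since $W_{ij}'\beta$ is linear in $\beta$, it suffices to show that $v\mapsto 2\log\Phi(v)$ and $v\mapsto\log(1-\Phi(v)^2)$ are concave on $\mathbb{R}$. The first is the classical log-concavity of $\Phi$. For the second, I would factor $1-\Phi^2=(1-\Phi)(1+\Phi)$. Then $\log(1-\Phi(v))$ is concave, again by log-concavity of the normal survival function. The term $\log(1+\Phi(v))$ is not obviously concave, so I would compute its second derivative,
\begin{equation*}
\frac{\phi'(v)(1+\Phi(v))-\phi(v)^2}{(1+\Phi(v))^2}.
\end{equation*}
For $v\ge0$ the numerator is $-v\phi(1+\Phi)-\phi^2<0$. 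For $v<0$ it is not automatically negative, so I would instead bound the full second derivative of $\log(1-\Phi^2)$. Writing $g=\log(1-\Phi)$, I would use the strict inequality $g''<0$ together with the sharp Mills-ratio inequality to absorb the possibly positive part of $(\log(1+\Phi))''$ on $v<0$, where $\phi\le\Phi$-type bounds keep it small. If this estimate becomes delicate, I would fall back on verifying directly the numerator inequality $2\phi(\Phi\phi+v\Phi(1-\Phi^2))\ldots\le0$ for $(\log(1-\Phi^2))''$, splitting into $v\ge0$ and $v<0$.

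With the four conditions verified, Theorem 2.7 yields that $\hat\beta$ exists with probability approaching one and that $\hat\beta\overset{p}{\to}\beta_0$.
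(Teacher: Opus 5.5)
Your plan follows the paper's proof: Theorem 2.7 of NM, with (2.7.i) from Lemma 1 via the information inequality, (2.7.iii) from Lemma 2, and (2.7.iv) from Serfling's LLN. Your quadratic-form step is the correct almost-sure version of the identification argument. Two steps are either unfinished or rest on a citation that does not quite apply.

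\textbf{Concavity (Lemma 2).} The case $v<0$ is easier than you fear, but the mechanism you sketch is wrong. The inequality $\phi\le\Phi$ fails once $v$ is below about $-0.3$, and $\phi/\Phi\sim|v|$ as $v\to-\infty$. The positive part of $(\log(1+\Phi))''$ is of order $|v|\phi$, which is not small. What actually happens is an exact cancellation: the $v$-terms of $(\log(1-\Phi))''$ and $(\log(1+\Phi))''$ combine to $v\phi\bigl(\frac{1}{1-\Phi}-\frac{1}{1+\Phi}\bigr)$. This gives
\[
\frac{d^2}{dv^2}\log\bigl(1-\Phi(v)^2\bigr)=\frac{2\phi(v)\bigl[v\,\Phi(v)\bigl(1-\Phi(v)^2\bigr)-\phi(v)\bigl(1+\Phi(v)^2\bigr)\bigr]}{\bigl(1-\Phi(v)^2\bigr)^2},
\]
which is negative for $v\le 0$ by inspection. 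For $v>0$ your split works, since both $\log(1-\Phi)$ and $\log(1+\Phi)$ are concave there. Equivalently, combine $\phi/(1-\Phi)>v$ with $1+\Phi^2\ge\Phi(1+\Phi)$; this is the paper's Gordon-inequality argument.

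\textbf{Pointwise convergence (2.7.iv).} Serfling's Theorem 5.4.A covers kernels of the i.i.d.\ $X_i$ alone. ``Serfling with the noise included'' is therefore not justified by that citation; the paper applies it to $h_1$ without addressing this either. Your decomposition closes the gap under (1c) alone if you truncate instead of bounding the variance. Set $p_{ij}=\Phi(W_{ij}'\beta_0)^2$ and $a_{ij}=\log\Phi(W_{ij}'\beta)^2-\log\bigl(1-\Phi(W_{ij}'\beta)^2\bigr)$, so the remainder is $R_n=N^{-1}\sum_{i<j}(Y_{ij}-p_{ij})a_{ij}$.
\begin{itemize}
\item The part with $|a_{ij}|\le M$ has conditional mean zero given $\{X_i\}$ and conditional variance at most $M^2/(4N)$.
\item The part with $|a_{ij}|>M$ is bounded in absolute value by $N^{-1}\sum_{i<j}|a_{ij}|\mathds{1}\{|a_{ij}|>M\}$. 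This is a genuine symmetric U-statistic in the $X_i$, so it converges a.s.\ to $\mathbb{E}\bigl[|a_{12}|\mathds{1}\{|a_{12}|>M\}\bigr]$.
\item That limit can be made arbitrarily small by choosing $M$ large, because your Mills-ratio bounds already give $\mathbb{E}|a_{12}|<\infty$ under second moments of $W_{ij}$.
\end{itemize}
Alternatively, Hoeffding's reverse-martingale proof extends directly to the dissociated, jointly exchangeable array $h_1(X_i,X_j,\varepsilon_{ij},\varepsilon_{ji})$ and needs only a first moment.
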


\begin{proof}
We establish the conditions of Theorem 2.7 in NM. Condition (2.7.i) is established by Lemma 1 given (1a) and (1c). Condition (2.7.ii) is guaranteed by (1a). Condition (2.7.iii) is established by Lemma 2. Finally, Condition (2.7.iv) is established by the strong LLN for U-statistics in \cite{serfling_approximation_1980} given (1a), (1b), (1c), and Lemma 1, as argued above. 
\end{proof}

It is interesting that the conditions for consistency of the NTU-MLE (with symmetric regressors and independent errors) are the same as the conditions for consistency of the TU-MLE that were stated in Section 1.1.

\subsubsection{Asymptotic Normality}

With identification and consistency established, we now show asymptotic normality of the NTU-MLE.\footnote{Using Theorem 3.3 in NM.} Recall that the effective sample size is the number of unique dyads excluding self-links, $N = \binom{n}{2} = \frac{n(n-1)}{2}$. We will have $\sqrt{N}$-asymptotic-normality in the sense of $\sqrt{N}(\hat{\beta} - \beta_0) \overset{d}{\to} \mathcal{N}(0, J^{-1})$ as $n \to \infty$ under the following conditions: i.i.d-ness of $z_{ij} = (Y_{ij}, X_i, X_j)$, consistency of the MLE $\hat{\beta}$, (3.3.i) $\beta_0 \in \text{interior}(\Theta)$, (3.3.ii) $f(Y_{ij} \mid X_i, X_j, \beta)$ is twice continuously differentiable and $f(Y_{ij} \mid X_i, X_j, \beta) > 0$ in a neighborhood $\mathcal{N}_0$ around $\beta_0$, (3.3.iii) $\int \sup_{\beta \in \mathcal{N}_0} \|\nabla_{\beta} f(Y_{ij} \mid X_i, X_j, \beta)\| dz < \infty$ and $\int \sup_{\beta \in \mathcal{N}_0} \|\nabla_{\beta \beta} f(Y_{ij} \mid X_i, X_j, \beta)\| dz < \infty$, (3.3.iv) $J \coloneq \mathbb{E}[\nabla_{\beta} \log f( Y_{ij} \mid X_i, X_j, \beta_0) \nabla_{\beta} \log f( Y_{ij} \mid X_i, X_j, \beta_0)^\prime]$ exists and is nonsingular, and (3.3.v) $\mathbb{E}[\sup_{\beta \in \mathcal{N}_0} \|\nabla_{\beta\beta} \log f(Y_{ij} \mid X_i, X_j, \beta)\|] < \infty$.

As discussed in the previous section on consistency, we cannot assume that the data $(Y_{ij}, X_i, X_j)$ are i.i.d., only that the individual characteristics $X_i$ are i.i.d. The i.i.d-ness assumption is used ``under the hood'' in NM to apply the Lindeberg-Levy CLT to establish that $\sqrt{N} \nabla_{\beta} \hat{Q}_n(\beta_0)$ is asymptotically normal. We can still establish asymptotic normality of $\sqrt{N} \nabla_{\beta} \hat{Q}_n(\beta_0)$ despite dependence of $Y_{ij}$ by recognizing that $Y_{ij}$ are independent \textit{conditional} on the individual characteristics (since the errors are independent across dyads), and then using a conditional version of the Lindeberg-Feller CLT.\footnote{Unlike for consistency, we cannot rely on limit theory for U-statistics because the U-statistic $\nabla_{\beta} \hat{Q}_n(\beta_0)$ is degenerate.} This is achieved via Lemma 3 in the Appendix, provided now $\mathbb{E} [\|W_{ij}\|^8] < \infty$ instead of just $\mathbb{E} [W_{ij} W_{ij}'] < \infty$ as was assumed for consistency. Finite eighth moments may seem like a strong assumption, but we are attempting to control a complicated form of dependence that arises in this network setting. Indeed, \cite{li_ming_estimation_2024} assume a bounded support for $X_i$ and therefore that all moments are finite, not just eighth moments. Furthermore, the same dependence structure is present in the TU model, and therefore finite eighth moments are also assumed for asymptotic normality of the TU-MLE.

Given Lemma 3 and consistency of the NTU-MLE, we verify conditions (3.3.i)-(3.3.v) to arrive at the following asymptotic normality theorem:\footnote{Where $\Phi$ is the CDF and $\phi$ is the PDF of the standard normal distribution.}

\begin{theorem}[Asymptotic Normality with Symmetric Regressors]
Consider the NTU model with $\rho = 0$ and $W_{ij} = W_{ji}$ for all pairs $ij$. If (2a) $\Theta$ is bounded and convex and $\beta_0 \in \text{interior}(\Theta)$, (2b) $X_i$ is i.i.d., (2c) $\mathbb{E} [\|W_{ij}\|^8]$ exists, and (2d) $\mathbb{E} [W_{ij} W_{ij}']$ is nonsingular, then the NTU-MLE, $\hat{\beta} = \argmax_{\beta \in \Theta} \hat{Q}_n(\beta)$, is asymptotically normal in the sense of: 
    \[
    \sqrt{N}(\hat{\beta} - \beta_0) \xrightarrow{d} \mathcal{N}(0, J_1^{-1}) \quad \text{as} \quad n \to \infty,
    \]
    where
    \begin{align*}
    J_1 = \mathbb{E}\left[\frac{4\phi(W_{ij}'\beta)^2}{1 - \Phi(W_{ij}'\beta)^2} W_{ij} W_{ij}' \right].
    \end{align*}
\end{theorem}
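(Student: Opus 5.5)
The proof will follow the standard mean-value expansion of the score, organised around Theorem 3.3 in NM. The i.i.d.\ requirement there is used only to obtain a CLT for the score, and that CLT is replaced here by Lemma 3. Consistency of $\hat{\beta}$ comes from Theorem 1, whose hypotheses are implied by (2a)--(2d), since finite eighth moments give existence of $\mathbb{E}[W_{ij}W_{ij}']$. Because $\beta_0$ is interior, with probability approaching one $\nabla_\beta \hat{Q}_n(\hat{\beta}) = 0$. Expanding around $\beta_0$ gives
\[
\sqrt{N}(\hat{\beta}-\beta_0) = -\big[\nabla_{\beta\beta}\hat{Q}_n(\bar{\beta})\big]^{-1}\sqrt{N}\,\nabla_\beta \hat{Q}_n(\beta_0),
\]
where $\bar{\beta}$ is a mean value, taken row by row. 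It then suffices to establish two things: (a) $\sqrt{N}\nabla_\beta\hat{Q}_n(\beta_0)\xrightarrow{d}\mathcal{N}(0,J_1)$, and (b) $\nabla_{\beta\beta}\hat{Q}_n(\bar{\beta})\xrightarrow{p}-J_1$ with $J_1$ nonsingular.

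\textbf{Computing the score and $J_1$.} With $p(\beta)=\Phi(W_{ij}'\beta)^2$ we have $\nabla_\beta p = 2\Phi\phi\,W_{ij}$. The per-dyad score is therefore
\[
\frac{Y_{ij}-p}{p(1-p)}\,2\Phi\phi\,W_{ij}.
\]
Conditional on $(X_i,X_j)$ it has mean zero at $\beta_0$ and variance $4\Phi^2\phi^2 W_{ij}W_{ij}'/\{p(1-p)\}=4\phi^2W_{ij}W_{ij}'/(1-\Phi^2)$. Taking expectations yields $J_1$, and the information equality holds. For nonsingularity (3.3.iv), take $a\neq 0$. Then $a'J_1a=\mathbb{E}[c(W_{ij})(a'W_{ij})^2]$ with $c>0$ everywhere, and (2d) gives $P(a'W_{ij}\neq0)>0$, so $a'J_1a>0$.

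\textbf{Step (a).} This step is Lemma 3. The $Y_{ij}$ are independent conditional on $\{X_i\}$, so I apply the conditional Lindeberg--Feller CLT to $N^{-1/2}\sum_{i<j}s_{ij}$, where $s_{ij}$ is the per-dyad score above. Two ingredients are needed. First, the conditional variance $N^{-1}\sum_{i<j}\mathrm{Var}(s_{ij}\mid X)$ must converge to $J_1$. This is a U-statistic in the $X_i$, so Serfling's SLLN applies as in the consistency argument. Second, a Lyapunov condition must hold. I will bound $\|s_{ij}\|\le C(1+|W_{ij}'\beta_0|)\|W_{ij}\|$ using the Mills-ratio fact that $\phi/(1-\Phi^2)$ and $\phi/\Phi$ grow at most linearly. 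Fourth conditional moments are then controlled by $\mathbb{E}\|W_{ij}\|^8$ with $\Theta$ bounded. This is where (2c) enters.

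\textbf{Step (b).} I will establish the uniform dominance condition (3.3.v): $\sup_{\beta\in\mathcal{N}_0}\|\nabla_{\beta\beta}\log f\|\le C(1+\|W_{ij}\|^2)^2$, which is integrable. This again uses the linear-growth bounds on the ratios $\phi/\Phi$ and $\phi/(1-\Phi^2)$ and their derivatives. Boundedness of $\Theta$ is used here as well. A ULLN for U-statistics (Hoeffding's SLLN combined with a dominance argument, as in NM Lemma 2.4) then gives uniform convergence of $\nabla_{\beta\beta}\hat{Q}_n$ on $\mathcal{N}_0$. Together with $\bar{\beta}\xrightarrow{p}\beta_0$ and the information equality, this yields $\nabla_{\beta\beta}\hat{Q}_n(\bar{\beta})\xrightarrow{p}-J_1$. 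Conditions (3.3.ii)--(3.3.iii) are immediate, since $Y_{ij}$ is binary and $0<\Phi^2<1$. Slutsky's theorem then finishes the proof.

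The main obstacle is the tail bounds in Steps (a) and (b). The factor $1/(1-\Phi(t)^2)=1/\{(1-\Phi)(1+\Phi)\}$ behaves like $1/(1-\Phi)$ as $t\to\infty$. The factor $1/\Phi^2$ is the new feature relative to probit as $t\to-\infty$, but it cancels against $\Phi\phi$ in the score and leaves $\phi/\Phi$. I expect this cancellation to reduce everything to probit-type Mills-ratio bounds.
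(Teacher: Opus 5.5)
Your proposal is correct and follows the paper's route. Like the paper, you verify NM Theorem~3.3, take consistency from Theorem~1, replace the i.i.d.\ CLT with the conditional Lindeberg--Feller argument of Lemma~3 (Lyapunov via eighth moments, conditional covariance via Serfling's U-statistic SLLN), and compute $J_1$ and its nonsingularity directly. If anything you are more careful than the paper: you make explicit that the Hessian also needs a U-statistic ULLN (the paper claims i.i.d.-ness is used only for the score CLT), and you give a genuine Mills-ratio envelope for (3.3.v), whereas the paper's bound there cites only $0<\Phi<1$, which does not control the $1/\Phi^2$ terms.
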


\begin{proof}
  \renewcommand{\qedsymbol}{} % Make the QED symbol blank for this proof
  See Appendix.
\end{proof}

Again, the sufficient conditions for asymptotic normality of the NTU-MLE (with symmetric regressors and independent errors) are the same as the sufficient conditions for asymptotic normality of the TU-MLE. This means that if the TU-MLE is theoretically justified, then the NTU-MLE is too (with symmetric regressors and independent errors).

\subsection{With Asymmetric Regressors}

Now allow for the possibility that $W_{ij} \neq W_{ji}$ for some pairs $ij$ while maintaining that $\rho = 0$. We can try to follow the steps above to establish identification, consistency, and asymptotic normality in this case, but we run into difficulties very quickly.

\subsubsection{Identification}

The first issue is with identification. Identification here means that there exists a pair of individuals $ij$ such that, for $\beta, \tilde{\beta} \in \Theta$, if $\beta \neq \tilde{\beta}$, then:

\[
\begin{aligned}
    f(Y_{ij} \mid X_i, X_j, \beta) &\coloneq \big[\Phi (W_{ij}' \beta) \Phi (W_{ji}' \beta)\big]^{Y_{ij}} \cdot\big[1 -  \Phi (W_{ij}' \beta) \Phi (W_{ji}' \beta)\big]^{1 - Y_{ij}} \\
    &\ \neq \big[\Phi (W_{ij}' \tilde{\beta}) \Phi (W_{ji}' \tilde{\beta})\big]^{Y_{ij}} \cdot\big[1 -  \Phi (W_{ij}' \tilde{\beta}) \Phi (W_{ji}' \tilde{\beta})\big]^{1 - Y_{ij}} \\
    & \ \eqcolon f(Y_{ij} \mid X_i, X_j, \tilde{\beta}). 
\end{aligned}
\]

The question becomes whether there are pairs of individuals with symmetric regressors. For example, suppose that individuals only care about the education of the other person when deciding to form a link with them. Say $W_{ij} = \text{education}_j$ and $W_{ji} = \text{education}_i$, where $\text{education}$ is a \textit{discrete} random variable taking on four values: at least tertiary, at least secondary, at least primary, and less than primary. Provided there are two individuals $i$ and $j$ with at least tertiary education, for example, for this pair of individuals $W_{ij} = \text{education}_j = \text{education}_i = W_{ji}$. Then, we know from Section 2.1.1 that $\beta$ is identified assuming existence and nonsingularity of $\mathbb{E} [W_{ij} W_{ij}']$.

The difficulty with identification arises when there are no pairs of individuals with symmetric regressors, i.e., $W_{ij} \neq W_{ji} \ \forall ij$, so that we cannot rely on the identification result from Section 2.1.1. An example of such a scenario would be if we had a single regressor which equaled the income of the other individual in the pair, i.e., $W_{ij} = \text{income}_j$ and $W_{ji} = \text{income}_i$, where income is a \textit{continuous} random variable such that it is possible for everyone in the sample to have different values of it. We can show identification if we can show that there exists a pair of individuals $ij$ such that for $\beta, \tilde{\beta} \in \Theta$, if $\beta \neq \tilde{\beta}$, then:

$$
\Phi (W_{ij}' \beta) \Phi (W_{ji}' \beta) \neq \Phi (W_{ij}' \tilde{\beta}) \Phi (W_{ji}' \tilde{\beta}).
$$
The problem is that even if we assume existence and nonsingularity (and hence positive definiteness) of $\mathbb{E} [W_{ij} W_{ij}']$, $\mathbb{E} [W_{ji} W_{ji}']$, and $\mathbb{E} [W_{ij} W_{ji}']$, such that:

$$
W_{ij}' \beta \neq W_{ji}' \beta \neq W_{ij}' \tilde{\beta} \neq W_{ji}' \tilde{\beta},
$$
we do not have identification because it is still possible to have:

$$
\Phi (W_{ij}' \beta) \Phi (W_{ji}' \beta) = \Phi (W_{ij}' \tilde{\beta}) \Phi (W_{ji}' \tilde{\beta}).
$$ 

I illustrate the lack of identification through the following example. Write $W_{ij}'\beta = \beta_0 + \beta_1 W_{1ij} + ... + \beta_{k-1} W_{k-1ij}$ and $W_{ji}'\beta = \beta_0 + \beta_1 W_{1ji} + ... + \beta_{k-1} W_{k-1ji}$ without loss of generality, where $W_{lij}$ is the $l$-th component of $W_{ij}$ and $W_{lji}$ is the $l$-th component of $W_{ji}$. Since $W_{ij} \neq W_{ji}$ we cannot have $W_{1ij} = W_{1ji}, ..., W_{k-1ij} = W_{k-1ji}$ simultaneously.

\textbf{Example 1:} $\beta, \tilde{\beta} \in \mathbb{R}^2$ and $W_{1ij}, W_{1ji} \in \{0, 1\}$ with $W_{1ij} \neq W_{1ji}$.
Then identification requires that if $\beta \neq \tilde{\beta}$:

$$
\Phi (\beta_0 + \beta_1 W_{1ij}) \Phi (\beta_0 + \beta_1 W_{1ji}) \neq \Phi (\tilde{\beta}_0 + \tilde{\beta}_1 W_{1ij}) \Phi (\tilde{\beta}_0 + \tilde{\beta}_1 W_{1ji}).
$$
WLOG,\footnote{Considering $W_{1ij} = 1$ and $W_{1ji} = 0$ instead will yield an identical condition.} when $W_{1ij} = 0$ and therefore $W_{1ji} = 1$, we require that if $\beta \neq \tilde{\beta}$:

$$
\Phi (\beta_0) \Phi (\beta_0 + \beta_1) \neq \Phi (\tilde{\beta}_0) \Phi (\tilde{\beta}_0 + \tilde{\beta}_1).
$$
But $\beta_0 = 0, \beta_1 = 1, \tilde{\beta}_0 = 1, \tilde{\beta}_1 = -1$ is such that $\beta \neq \tilde{\beta}$ while $\Phi (\beta_0) \Phi (\beta_0 + \beta_1) = \Phi (\tilde{\beta}_0) \Phi (\tilde{\beta}_0 + \tilde{\beta}_1)$, thus violating identification in this case. Plus, since the regressors are binary, and we have asymmetric regressors for all pairs of individuals, there are no additional conditions that we can use to establish identification. Therefore, we follow the suggestion in \cite{poirier_partial_1980} and allow the regressors to take on a third value to establish identification in the following example.

\textbf{Example 2:} $\beta, \tilde{\beta} \in \mathbb{R}^2$ and $W_{1ij}, W_{1ji} \in \{0, 1, 2\}$ with $W_{1ij} \neq W_{1ji}$. As in Example 1 where we used the regressor configuration $(W_{1ij}, W_{1ji}) = (0, 1)$ to derive the restriction that $\Phi (\beta_0) \Phi (\beta_0 + \beta_1) \neq \Phi (\tilde{\beta}_0) \Phi (\tilde{\beta}_0 + \tilde{\beta}_1)$, we can use $(W_{1ij}, W_{1ji}) = (0, 2)$ to derive $\Phi(\beta_0) \Phi (\beta_0 + 2\beta_1) \neq \Phi (\tilde{\beta}_0) \Phi(\tilde{\beta}_0 + 2\tilde{\beta}_1)$, and $(W_{1ij}, W_{1ji}) = (1, 2)$ to derive $\Phi(\beta_0 + \beta_1) \Phi(\beta_0 + 2\beta_1) \neq \Phi (\tilde{\beta}_0 + \tilde{\beta}_1) \Phi (\tilde{\beta}_0 + 2\tilde{\beta}_1)$. Then, using the contrapositive of the definition of identification above, we can establish identification by showing that if:
\begin{enumerate}
    \item $\Phi (\beta_0) \Phi (\beta_0 + \beta_1) = \Phi (\tilde{\beta}_0) \Phi (\tilde{\beta}_0 + \tilde{\beta}_1)$
    \item $\Phi (\beta_0) \Phi (\beta_0 + 2\beta_1) = \Phi (\tilde{\beta}_0) \Phi (\tilde{\beta}_0 + 2\tilde{\beta}_1)$
    \item $\Phi (\beta_0 + \beta_1) \Phi (\beta_0 + 2\beta_1) = \Phi (\tilde{\beta}_0 + \tilde{\beta}_1) \Phi (\tilde{\beta}_0 + 2\tilde{\beta}_1)$,
\end{enumerate}
then $\beta_0 = \tilde{\beta}_0$ and $\beta_1 = \tilde{\beta}_1$. First, divide equation 1. by equation 2. to get:

$$
\frac{\Phi (\beta_0 + \beta_1)}{\Phi (\beta_0 + 2\beta_1)} = \frac{\Phi (\tilde{\beta}_0 + \tilde{\beta}_1)}{\Phi (\tilde{\beta}_0 + 2\tilde{\beta}_1)}.
$$
Then, substitute for $\Phi (\beta_0 + \beta_1)$ in equation 3. to get:

$$
\Phi (\beta_0 + 2\beta_1)^2 = \Phi (\tilde{\beta}_0 + 2\tilde{\beta}_1)^2.
$$
Then, by monotonicity of $\Phi$, we get the condition $\beta_0 + 2\beta_1 = \tilde{\beta}_0 + 2\tilde{\beta}_1$. Plugging this into the ratio above and cancelling out the $\Phi \left(\beta_0 + 2\beta_1\right)$ term, we get the second condition $\beta_0 + \beta_1 = \tilde{\beta}_0 + \tilde{\beta}_1$. Together, these conditions imply that $\beta_0 = \tilde{\beta}_0$ and $\beta_1 = \tilde{\beta}_1$, as is required for identification.

\cite{poirier_partial_1980} conjectured that the key to identification is allowing the regressors to take on more values than there are unknown parameters, but they do not provide a proof of this claim. I show formally that $\beta$ is identified in the NTU model with $\rho = 0$ and a general number of asymmetric and symmetric regressors if the asymmetric regressors can take on at least 3 distinct values.\footnote{It is worth noting that identification is typically easier to establish here than in \cite{poirier_partial_1980}. This is because \cite{poirier_partial_1980} considers asymmetry in tastes, i.e. different betas in the two indicators in the NTU model, instead of asymmetry in characteristics. Since characteristics depend on $i$ and $j$ (unlike $\beta$), we can use individuals with symmetric characteristics to establish identification. It is only in the fully asymmetric case with $W_{ij} \neq W_{ji} \ \forall ij$ that we are forced to rely on the ``Poirier principle'' of allowing the regressors to take on more values.} Of course, we also need that there is no perfect multicollinearity between the regressors, which is guaranteed by nonsingularity of $\mathbb{E} [W_{ij} W_{ij}']$ and $\mathbb{E} [W_{ji} W_{ji}']$. This is summarized in the following proposition:

\begin{prop}
Consider the NTU model with $\rho = 0$, $W_{ij}'\beta = \beta_0 + \beta_1 W_{1ij} + ... + \beta_{k-1} W_{k-1ij}$, $W_{ji}'\beta = \beta_0 + \beta_1 W_{1ji} + ... + \beta_{k-1} W_{k-1ji}$, and asymmetry in the sense of $W_{ij} \neq W_{ji} \ \forall ij$. Let $p$ be the number of asymmetric regressors, i.e. those for which $W_{sij} \neq W_{sji} \ \forall ij$. Then $\beta$ is identified if all $p$ asymmetric regressors can take on at least 3 distinct values and both $\mathbb{E} [W_{ij} W_{ij}']$ and $\mathbb{E} [W_{ji} W_{ji}']$ exist and are nonsingular.
\end{prop}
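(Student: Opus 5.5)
The plan is to generalize Example 2: for each asymmetric regressor, use three regressor configurations built from its three distinct values. These yield three product equations. Dividing and substituting, as in Example 2, collapses them to equalities of the single indices $W'\beta = W'\tilde\beta$ at enough support points. Nonsingularity then finishes the argument.

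\emph{Step 1 (reduction to the conditional probabilities).} Since $Y_{ij}$ is binary, equality of $f(\cdot\mid X_i,X_j,\beta)$ and $f(\cdot\mid X_i,X_j,\tilde\beta)$ for almost every $(X_i,X_j)$ is equivalent to
\[
\Phi(W_{ij}'\beta)\Phi(W_{ji}'\beta)=\Phi(W_{ij}'\tilde\beta)\Phi(W_{ji}'\tilde\beta)
\]
for almost every $(X_i,X_j)$. I will work on the log scale, with $g(t)\coloneq\log\Phi(t)$, which is strictly increasing. The condition then reads
\[
g(W_{ij}'\beta)+g(W_{ji}'\beta)=g(W_{ij}'\tilde\beta)+g(W_{ji}'\tilde\beta).
\]

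\emph{Step 2 (the three-point argument).} Fix an asymmetric regressor $s$. By hypothesis there are pairs realizing three distinct values $a,b,c$ of it. Along these pairs, the remaining coordinates should take values for which the three indices $u_a=w_a'\beta$, $u_b=w_b'\beta$, $u_c=w_c'\beta$ are formed from three vectors $w_a,w_b,w_c$ appearing pairwise as $(W_{ij},W_{ji})$. This gives the three equations
\[
g(u_a)+g(u_b)=g(\tilde u_a)+g(\tilde u_b),\quad
g(u_a)+g(u_c)=g(\tilde u_a)+g(\tilde u_c),\quad
g(u_b)+g(u_c)=g(\tilde u_b)+g(\tilde u_c).
\]
This is a linear system in the differences $d_m=g(u_m)-g(\tilde u_m)$ with invertible matrix
\[
\begin{pmatrix}1&1&0\\1&0&1\\0&1&1\end{pmatrix},
\]
whose determinant is $-2$. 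Hence every $d_m=0$, and monotonicity of $g$ gives $w_m'\beta=w_m'\tilde\beta$ for $m=a,b,c$. This is exactly the division and substitution of Example 2, written additively. If some pair has $W_{ij}=W_{ji}$ (symmetric regressors), the argument of Section 2.1.1 gives $W_{ij}'\beta=W_{ij}'\tilde\beta$ directly.

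\emph{Step 3 (from pointwise index equality to $\beta=\tilde\beta$).} Step 2 should be applied to all such triples in the support, of positive probability, to obtain $W_{ij}'(\beta-\tilde\beta)=0$ almost surely. Then
\[
(\beta-\tilde\beta)'\,\mathbb{E}[W_{ij}W_{ij}']\,(\beta-\tilde\beta)=0,
\]
and nonsingularity (positive definiteness) of $\mathbb{E}[W_{ij}W_{ij}']$ forces $\beta=\tilde\beta$. Nonsingularity of $\mathbb{E}[W_{ji}W_{ji}']$ serves symmetrically if the argument is run from the $ji$ side.

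The main obstacle is Step 2's requirement that the three vectors $w_a,w_b,w_c$ close a ``triangle'', meaning that each of the three pairings actually occurs as a realized $(W_{ij},W_{ji})$. This is automatic when $W_{ij}=w(X_i,X_j)$ depends only on $X_j$, as in the income example, but not for general $w$. I would first try to exploit the i.i.d. structure: draw three individuals $i,j,l$ whose characteristics hit the three values. If $w$ does not close such triangles, I would impose this as part of what ``can take on at least 3 distinct values'' means, in the spirit of Poirier's condition. A secondary issue is that Step 2 must hold on a set large enough that nonsingularity applies, so I will phrase it for almost every triple in the support.
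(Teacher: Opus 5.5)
Your core step is the paper's. After restricting asymmetric regressors to $\{0,1,2\}$ and fixing symmetric ones at $0$, the paper builds exactly such a triangle from three vectors:
\begin{itemize}
\item $w_A$, with ones on a group $A$ of asymmetric coordinates and zeros elsewhere;
\item $w_B$, with ones on the complementary group;
\item $w_C$, with all twos.
\end{itemize}
Its divide-and-substitute step is your determinant-$(-2)$ linear system written multiplicatively.

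The finish differs. The paper extracts only $w_A'\beta=w_A'\tilde\beta$. It then moves each $\beta_s$ between the two groups and differences to isolate coefficients one at a time, and finally toggles each symmetric regressor from $0$ to $1$. It invokes nonsingularity only loosely, as ``no perfect multicollinearity'', to justify choosing configurations. You instead collect index equalities at every triangle vertex and close with $\mathbb{E}[(W_{ij}'(\beta-\tilde\beta))^2]=0$ and positive definiteness. This puts the nonsingularity hypothesis where it actually does work and avoids the bookkeeping.

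The cost is that you need triangle vertices to cover the support almost surely. It is enough, and closer to what the paper uses, that they span $\mathbb{R}^k$, since then $\beta-\tilde\beta$ is orthogonal to a spanning set. For continuous regressors, where individual triples have probability zero, use continuity of $\Phi$ to extend the a.s.\ equalities to the whole support.

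Both arguments rest on a support-richness condition that the statement does not literally contain. Triangles must be realized, with symmetric coordinates common to all three vertices and each asymmetric coordinate taking three distinct values among them; this is exactly why three values are required. The paper assumes this silently, whereas you correctly identify it as the real obstacle. You are also right that it must be imposed rather than derived: for general $w$, three i.i.d.\ individuals generate six vectors $w(X_i,X_j),w(X_j,X_i),\dots$ rather than a triangle of three.
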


\begin{proof}
  \renewcommand{\qedsymbol}{} % Make the QED symbol blank for this proof
  See Appendix.
\end{proof}

Consider again the income example where $W_{ij} = \text{income}_j$, $W_{ji} = \text{income}_i$, and income is a \textit{continuous} random variable such that it is possible to have $W_{ij} \neq W_{ji} \ \forall ij$. Since the asymmetric regressor is continuous and therefore takes on at least 3 distinct values, by Proposition 1 we know that $\beta$ is identified provided $\mathbb{E} [W_{ij} W_{ij}'] = \mathbb{E} [\text{income}_j^2]$ and $\mathbb{E} [W_{ji} W_{ji}'] = \mathbb{E} [\text{income}_i^2]$ exist and are nonsingular, i.e. there is finite and non-zero variation in the income variable. In practice, once one writes down their model, one must simply determine if they have any asymmetric regressors, and if they do, that either these regressors take on at least 3 distinct values, or that there are some individuals for which these regressors are symmetric; I provide further guidance on this in the application section. Assuming existence and nonsingularity of $\mathbb{E} [W_{ij} W_{ij}']$ and $\mathbb{E} [W_{ji} W_{ji}']$, the only situation in which $\beta$ is not identified is if one has a large number of binary, asymmetric regressors such that one cannot find a pair of individuals with the same values for all these regressors. This problem is alleviated as one collects a larger and richer sample of individuals.

Because identification is case-by-case in this sense, the consistency and asymptotic normality theorems with asymmetric regressors below include a generic identification condition.

\subsubsection{Consistency}

Identification aside, the second issue with applying our previous proof strategy for consistency of the NTU-MLE is that the log-likelihood function is no longer globally concave. Therefore, we need to rely on a more general consistency theorem that does not require concavity of the log-likelihood function.\footnote{Namely, we rely on Theorem 2.1 in NM.} In particular, the NTU-MLE will be consistent if (2.1.i) $Q_0 (\beta) \coloneq \mathbb{E} [ \log f(Y_{ij} \mid X_i, X_j, \beta) ]$ has a unique maximum at $\beta_0$, (2.1.ii) $\Theta$ is compact, (2.1.iii) $Q_0 (\beta)$ is continuous, and (2.1.iv) $\hat{Q}_n(\beta) \coloneq \frac{1}{N}\sum^{n-1}_{i=1} \sum^{n}_{j=i+1} \big[Y_{ij} \log \big(\Phi(W_{ij}^{'}\beta) \Phi(W_{ji}^{'}\\\beta) \big) + (1 - Y_{ij})\log \big(1 - \Phi(W_{ij}^{'}\beta) \Phi(W_{ji}^{'}\beta) \big) \big]$ converges \textit{uniformly} in probability to $Q_0 (\beta)$. The key difference relative to the symmetric regressor case is that, without concavity, we have to establish \textit{uniform} convergence of the sample log-likelihood to the population log-likelihood, not just pointwise convergence.

Just as Lemma 1 provided conditions for $\beta_0$ to be the unique maximizer of $Q_0 (\beta)$ with symmetric regressors, Lemma 4 provides conditions for $\beta_0$ to be the unique maximizer of $Q_0 (\beta)$ with asymmetric regressors. Unsurprisingly, as well as assuming that $\mathbb{E} [W_{ij} W_{ij}'] < \infty$ as we did in the symmetric regressor case, we also assume that $\mathbb{E} [W_{ji} W_{ji}'] < \infty$ in the asymmetric regressor case.

We assume Condition (2.1.ii)  to be true, i.e. that $\Theta$ is compact. If we had i.i.d. data $(Y_{ij}, X_i, X_j)$, then establishing Conditions (2.1.iii) and (2.1.iv) would be easy; all we would have to do is use the bound on $\mathbb{E} [| \log f(Y_{ij} \mid X_i, X_j, \beta) |]$ from the proof of Lemma 4 to invoke a standard uniform LLN. Since $Y_{ij}$ is not i.i.d., but $\hat{Q}_n(\beta)$ is a U-statistic of degree 2 again, we can use a uniform LLN for U-statistics instead. In particular, Theorem 7 in \cite{nolan_u-processes_1987} applied to this setting states that $\hat{Q}_n(\beta)$ will converge uniformly in probability to $Q_0 (\beta)$ if (7.i) $X_i$ is i.i.d., (7.ii) we can bound $| \log f(Y_{ij} \mid X_i, X_j, \beta) |$ by some dominating function $d(X_i, X_j)$ with a finite expectation, (7.iii) $\Theta$ is compact and finite-dimensional, and (7.iv) $\log f(Y_{ij} \mid X_i, X_j, \beta)$ is globally Lipschitz in $\beta$ with respect to some integrable envelope $G(X_i, X_j)$.\footnote{Technically, Theorem 7 in \cite{nolan_u-processes_1987} involves a series of ``entropy'' conditions, but these are implied by (7.iii) and (7.iv).} Lemma 5 in the Appendix shows that the existence of $\mathbb{E} [W_{ij} W_{ij}']$ and $\mathbb{E} [W_{ji} W_{ji}']$ (plus i.i.d-ness and compactness) are sufficient for the conditions of Theorem 7 in \cite{nolan_u-processes_1987}, and therefore for Condition (2.1.iv). Lemma 5 also establishes Condition (2.1.iii) as a by-product.

Combining Lemma 4 and Lemma 5, we arrive at the following consistency theorem with asymmetric regressors:

\begin{theorem}[Consistency with Asymmetric Regressors]
    Consider the NTU model with $\rho = 0$ but now allow for $W_{ij} \neq W_{ji}$ for some pairs $ij$. If (3a) $\Theta$ is compact, (3b) $X_i$ is i.i.d., (3c) $\beta$ is identified, and (3d) both $\mathbb{E} [W_{ij} W_{ij}']$ and $\mathbb{E} [W_{ji} W_{ji}']$ exist, then the NTU-MLE, $\hat{\beta} = \argmax_{\beta \in \Theta} \hat{Q}_n(\beta)$, is consistent for $\beta_0$ as $n \to \infty$.
\end{theorem}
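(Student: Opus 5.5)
The plan is to verify the four conditions of Theorem 2.1 in NM, in the same way Theorem 1 verified those of Theorem 2.7.

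\textbf{Condition (2.1.ii).} Compactness of $\Theta$ is exactly (3a).

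\textbf{Condition (2.1.i).} For unique maximization of $Q_0(\beta) = \mathbb{E}[\log f(Y_{ij}\mid X_i,X_j,\beta)]$ at $\beta_0$, I will invoke Lemma 4. Its argument is the standard information inequality. First, $|Q_0(\beta)|<\infty$, because the probit-type log terms satisfy $|\log \Phi(t)| \le C(1+|t|+t^2)$ and $|\log(1-\Phi(a)\Phi(b))| \le |\log(1-\Phi(\min(a,b)))|$. This requires finiteness of $\mathbb{E}[W_{ij}W_{ij}']$ and $\mathbb{E}[W_{ji}W_{ji}']$ together with boundedness of $\Theta$, which is (3d) and (3a). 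Second, conditional on $(X_i,X_j)$, $\mathbb{E}[\log f(\beta) - \log f(\beta_0)] \le 0$ by Jensen, with equality only if $f(\cdot\mid X_i,X_j,\beta)=f(\cdot\mid X_i,X_j,\beta_0)$ with positive probability in the required sense. Identification (3c) rules this out for $\beta\neq\beta_0$, so the inequality is strict in expectation.

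\textbf{Conditions (2.1.iii) and (2.1.iv).} Continuity of $Q_0$ and uniform convergence in probability of $\hat Q_n$ to $Q_0$ both come from Lemma 5. This is where the non-i.i.d.\ dyadic structure matters. I treat $\hat Q_n(\beta)$ as a U-process indexed by $\beta\in\Theta$. One subtlety is that the kernel also depends on $\varepsilon_{ij},\varepsilon_{ji}$, not just on $(X_i,X_j)$. I would handle this either by augmenting the i.i.d.\ structure with dyad-level independent errors, or by working with $\mathbb{E}[Y_{ij}\mid X_i,X_j]$ plus a conditionally independent remainder whose uniform average vanishes by a conditional Chebyshev argument. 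The kernel must also be symmetrized: $\log f$ is invariant under swapping $(i,j)$ together with swapping $\varepsilon_{ij}$ and $\varepsilon_{ji}$, because the product $\Phi(W_{ij}'\beta)\Phi(W_{ji}'\beta)$ is symmetric.

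The U-process step then rests on checking the hypotheses of Theorem 7 in \cite{nolan_u-processes_1987}:
\begin{enumerate}
\item[(7.i)] $X_i$ i.i.d.\ is (3b).
\item[(7.ii)] A dominating function $d(X_i,X_j)=C(1+\|W_{ij}\|+\|W_{ji}\|)^2$ follows from the log-bounds above with $\sup_{\Theta}\|\beta\|<\infty$.
\item[(7.iii)] $\Theta$ is compact and finite-dimensional by (3a).
\item[(7.iv)] The Lipschitz envelope is the key computation. The derivative of $\log f$ in $\beta$ involves the inverse Mills ratios $\phi/\Phi$ and $\phi(a)\Phi(b)/(1-\Phi(a)\Phi(b))$, each bounded by $C(1+|t|)$. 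By the mean value theorem, $|\log f(\beta)-\log f(\tilde\beta)| \le G(X_i,X_j)\|\beta-\tilde\beta\|$ with $G = C(1+\|W_{ij}\|+\|W_{ji}\|)(\|W_{ij}\|+\|W_{ji}\|)$, which is integrable under (3d).
\end{enumerate}
The same domination yields continuity of $Q_0$ by dominated convergence, giving (2.1.iii).

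\textbf{Main obstacle.} I expect the hardest part to be the bound on the derivative of $\log(1-\Phi(a)\Phi(b))$ when both indices are large and positive, since then $1-\Phi(a)\Phi(b)\to 0$. I would first try the lower bound $1-\Phi(a)\Phi(b)\ge 1-\Phi(\min(a,b))$. This reduces the Mills-ratio bound to the univariate probit case, because $\phi(a)\Phi(b) \le \phi(a)$ and $\phi(a) \le$ const$\cdot\phi(\min)$ when $a\ge \min$.

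With all four conditions in place, Theorem 2.1 of NM yields $\hat\beta\overset{p}{\to}\beta_0$.
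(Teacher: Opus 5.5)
Your proposal is correct and follows the paper's proof. It applies Theorem 2.1 of NM, takes (2.1.i) from Lemma 4 (integrability of $\log f$ via Mills-ratio bounds, then the information inequality under the assumed identification), and takes (2.1.iii)--(2.1.iv) from Lemma 5 (Theorem 7 of Nolan and Pollard via an integrable dominating function and a Lipschitz envelope $G(X_i,X_j)$, with continuity of $Q_0$ as a by-product).

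The differences are in your favour or cosmetic. Your split into $\mathbb{E}[Y_{ij}\mid X_i,X_j]$ plus a conditionally independent, error-driven remainder handles a point the paper glosses over, since the paper treats $\log f(Y_{ij}\mid X_i,X_j,\beta)$ as a kernel in $(X_i,X_j)$ alone. Under (3d) only, show that remainder's conditional variance vanishes in probability pathwise, e.g.\ via $N^{-1}\max_{i<j}$ of the summands, and not through its expectation, which could need fourth moments. Your ``main obstacle'' is settled in the paper by the one-line bound $\Phi(b)\bigl(1-\Phi(a)\bigr)\le 1-\Phi(a)\Phi(b)$. This gives $\phi(a)\Phi(b)/\bigl(1-\Phi(a)\Phi(b)\bigr)\le\phi(a)/\bigl(1-\Phi(a)\bigr)\le C(1+|a|)$ without any case split.
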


\begin{proof}
    We establish the conditions of Theorem 2.1 in NM. Condition (2.1.i) follows from Lemma 4 given (3a), (3c), and (3d); Condition (2.1.ii) is the same as (3a). Conditions (2.1.iii) and (2.1.iv) follow from Lemma 5 given (3a), (3b), and (3d).
\end{proof}

It is worth comparing the sufficient conditions for consistency with and without regressor symmetry. Since we were willing to assume existence of $\mathbb{E} [W_{ij} W_{ij}']$ in the symmetric case, assuming existence of $\mathbb{E} [W_{ji} W_{ji}']$ too in the asymmetric case is natural. Non-concavity of the log-likelihood function means we need to rely on a uniform LLN instead of a pointwise LLN, but we do not need to make stronger assumptions for this, we just require more complicated asymptotic theory. Non-concavity also changes the assumptions we make about $\Theta$: with symmetric regressors, $\Theta$ is assumed to be bounded and convex; with asymmetric regressors, $\Theta$ is assumed to be compact. That is, we add closed-ness and drop convexity when regressors are asymmetric. If one wishes to impose unified assumptions on the parameter space for consistency irrespective of regressor symmetry, then simply assume that $\Theta$ is compact and convex throughout (and that $\beta_0$ is in the interior of $\Theta$).

Identification is assumed. However, as discussed above, identification can be established by existence and nonsingularity of $\mathbb{E} [W_{ij} W_{ij}']$ and $\mathbb{E} [W_{ji} W_{ji}']$ if there exists a pair of individuals with symmetric regressors, or by Proposition 1 if there is no such pair but all asymmetric regressors can take on at least 3 distinct values. Proposition 1 also imposes existence and nonsingularity of $\mathbb{E} [W_{ij} W_{ij}']$ and $\mathbb{E} [W_{ji} W_{ji}']$, so, although we do not formally assume nonsingularity in Theorem 3, it is practically always assumed to establish identification as a precursor to Theorem 3. This is in contrast with the asymptotic normality theorem below (Theorem 4) which imposes nonsingularity for reasons other than identification.

The bottom-line is that, although the asymptotic theory required to establish consistency is more complicated when regressors are asymmetric, the sufficient conditions are only marginally stronger. Practically speaking, the main difference is that identification (as a precursor to consistency) now requires all asymmetric regressors to take on at least 3 distinct values. This means we simply need to avoid pathological specifications with many binary regressors, as then it might be possible to find a pair of individuals with different values for all such regressors.

\subsubsection{Asymptotic Normality}

For asymptotic normality with asymmetric regressors, we can apply the same proof strategy as with symmetric regressors, but with more algebra and an additional nonsingularity condition. As in the symmetric regressor case, we begin by invoking the conditional Lindeberg-Feller CLT to show that $\sqrt{N} \nabla_{\beta} \hat{Q}_n(\beta_0) \coloneq \frac{1}{\sqrt{N}} \sum^{n-1}_{i=1} \sum^{n}_{j=i+1} \big[Y_{ij} \nabla_{\beta}\log \big(\Phi(W_{ij}^{'}\beta_0) \Phi(W_{ji}^{'}\beta_0)\big) + (1 - Y_{ij}) \nabla_{\beta}\log \big(1 - \Phi(W_{ij}^{'}\beta_0) \Phi(W_{ji}^{'}\beta_0) \big) \big]$ is asymptotically normal with mean zero. This is guaranteed by Lemma 6 in the Appendix assuming finite eighth moments again.

Given consistency, and Lemma 6 so that there is convergence in distribution ``under the hood'', we have the following theorem:

\begin{theorem}[Asymptotic Normality with Asymmetric Regressors]
    Consider the NTU model with $\rho = 0$ but now allow for $W_{ij} \neq W_{ji}$ for some pairs $ij$. If (4a) $\Theta$ is compact, (4b) $X_i$ is i.i.d., (4c) $\beta$ is identified, (4d) both $\mathbb{E} [\|W_{ij}\|^8]$ and $\mathbb{E} [\|W_{ji}\|^8]$ exist, and (4e) $\mathbb{E}[W_{ij} W_{ij}']$, $\mathbb{E}[W_{ji} W_{ji}']$, and $\mathbb{E}[W_{ij} W_{ji}' + W_{ji} W_{ij}']$ are all nonsingular, then the NTU-MLE, $\hat{\beta} = \argmax_{\beta \in \Theta} \\ \hat{Q}_n(\beta)$, is asymptotically normal in the sense of: 
    \[
    \sqrt{N}(\hat{\beta} - \beta_0) \xrightarrow{d} \mathcal{N}(0, J_2^{-1})  \quad \text{as} \quad n \to \infty,
    \]
    where
    \begin{align*}
    J_2 &= \mathbb{E}\bigg[
    \frac{\phi\bigl(W_{ij}'\beta\bigr)^2 \Phi\bigl(W_{ji}'\beta\bigr)}
         {\Phi\bigl(W_{ij}'\beta\bigr)\bigl(1 - \Phi\bigl(W_{ij}'\beta\bigr)\Phi\bigl(W_{ji}'\beta\bigr)\bigr)} 
    W_{ij} W_{ij}' \\
    &\quad\quad+
    \frac{\phi\bigl(W_{ji}'\beta\bigr)^2 \Phi\bigl(W_{ij}'\beta\bigr)}
         {\Phi\bigl(W_{ji}'\beta\bigr)\bigl(1 - \Phi\bigl(W_{ij}'\beta\bigr)\Phi\bigl(W_{ji}'\beta\bigr)\bigr)} 
    W_{ji} W_{ji}' \\
    &\quad\quad+
    \frac{\phi\bigl(W_{ij}'\beta\bigr) \phi\bigl(W_{ji}'\beta\bigr)}
         {1 - \Phi\bigl(W_{ij}'\beta\bigr) \Phi\bigl(W_{ji}'\beta\bigr)} 
    \big(W_{ij} W_{ji}' + W_{ji} W_{ij}'\big)
    \bigg].
    \end{align*}
\end{theorem}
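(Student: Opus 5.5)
The proof will follow the template of Theorem 2, now organised around the general asymptotic normality result (Theorem 3.3 in NM) but with the i.i.d. CLT step replaced by Lemma 6. I will first record consistency of $\hat{\beta}$ from Theorem 3: conditions (4a)--(4c) are assumed directly, and (4d) implies (3d) because finite eighth moments give finite second moments. With this, the argument is the usual mean-value expansion of the first-order condition $\nabla_\beta \hat{Q}_n(\hat{\beta}) = 0$ around $\beta_0$. Because $\hat{\beta}$ lies in the interior with probability approaching one, I need two ingredients: (a) $\sqrt{N}\nabla_\beta \hat{Q}_n(\beta_0) \xrightarrow{d} \mathcal{N}(0, J_2)$, which Lemma 6 supplies under (4b) and (4d); and (b) $\sup_{\beta \in \mathcal{N}_0}\|\nabla_{\beta\beta}\hat{Q}_n(\beta) - H(\beta)\| \xrightarrow{p} 0$ with $H$ continuous, together with $H(\beta_0) = -J_2$ nonsingular. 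Slutsky's lemma then delivers $\sqrt{N}(\hat{\beta}-\beta_0) \xrightarrow{d} \mathcal{N}(0, J_2^{-1} J_2 J_2^{-1}) = \mathcal{N}(0, J_2^{-1})$.

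The calculations come next. Write $a = W_{ij}'\beta$, $b = W_{ji}'\beta$, $P = \Phi(a)\Phi(b)$. The score is $\big(\frac{Y_{ij}}{P} - \frac{1-Y_{ij}}{1-P}\big)\big(\phi(a)\Phi(b)W_{ij} + \Phi(a)\phi(b)W_{ji}\big)$. Taking conditional expectations given $(X_i,X_j)$ at $\beta_0$ and using $\mathbb{E}[Y_{ij}\mid X_i,X_j] = P$ shows that the score has mean zero. Its conditional variance is $\frac{1}{P(1-P)} v v'$ with $v = \phi(a)\Phi(b)W_{ij} + \Phi(a)\phi(b)W_{ji}$. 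Expanding $vv'/(P(1-P))$ and cancelling one factor of $\Phi(a)\Phi(b)$ reproduces the three terms of $J_2$ exactly. The information matrix equality, that $-\mathbb{E}[\nabla_{\beta\beta}\log f]$ equals the outer-product form at $\beta_0$, also holds conditionally on $(X_i,X_j)$, since the term multiplying $Y_{ij} - P$ integrates to zero. For the Hessian, I will bound every entry by a polynomial in $\|W_{ij}\|,\|W_{ji}\|$. This uses the standard Mills-ratio bounds $\phi(t)/\Phi(t) \le C(1+|t|)$ and $|\phi'(t)/\Phi(t)| \le C(1+|t|)^2$, together with the analogous bounds for $\phi(a)\Phi(b)/(1-\Phi(a)\Phi(b))$, which is controlled because $1-\Phi(a)\Phi(b) \ge 1-\Phi(a)$. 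The result is a dominating envelope of order $\|W_{ij}\|^4 + \|W_{ji}\|^4$, uniformly over the compact $\Theta$ (condition (3.3.v)). The uniform convergence in (b) then follows from the same U-process uniform LLN (Theorem 7 of Nolan and Pollard) that was used in Lemma 5, applied now to the entries of the Hessian.

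I expect the main obstacle to be showing that $J_2$ is nonsingular (condition (3.3.iv)); this is where (4e) enters. My first approach: $J_2 = \mathbb{E}\big[\frac{1}{P(1-P)} vv'\big]$ is positive semidefinite. If $c'J_2 c = 0$, then $c'v = 0$ almost surely, that is, $\phi(a)\Phi(b)\,c'W_{ij} + \Phi(a)\phi(b)\,c'W_{ji} = 0$ a.s. Because both weights are strictly positive, this forces $c'W_{ij}$ and $c'W_{ji}$ to have opposite signs or both vanish. I would then show this contradicts (4e). Multiplying through by $c'W_{ij}$ gives $\mathbb{E}[w_1 (c'W_{ij})^2] = -\mathbb{E}[w_2\, c'W_{ij}\,c'W_{ji}]$ with positive weights $w_1,w_2$. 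Combining this with its symmetric counterpart, and using definiteness of $\mathbb{E}[W_{ij}W_{ij}']$, $\mathbb{E}[W_{ji}W_{ji}']$ and nonsingularity of the symmetrised cross-moment, should yield $c = 0$. The varying weights make this step delicate, so I may need to argue on a positive-probability region where the weights are bounded above and below. Finally, the remaining NM regularity conditions on $f$ (positivity, twice continuous differentiability, and integrability of $\sup\|\nabla f\|$ and $\sup\|\nabla_{\beta\beta} f\|$) follow from the same polynomial envelopes and (4d), because $Y_{ij}$ takes only two values.
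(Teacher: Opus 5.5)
\textbf{The gap: nonsingularity of $J_2$.} Your outline follows the paper's route: consistency from Theorem 3, then NM Theorem 3.3 with Lemma 6 in place of the i.i.d.\ CLT. Your score, conditional-variance and information-equality calculations are correct. Your polynomial Mills-ratio envelopes for the Hessian are more careful than the paper's claim that the bracketed factors are bounded. The genuine gap is the step you flagged yourself. Nonsingularity of $J_2$ does not follow from (4e), so ``show that $c'v=0$ a.s.\ contradicts (4e)'' cannot be carried out.

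Your reduction is right. Since $c'J_2c=\mathbb{E}[(c'v)^2/(P(1-P))]$, $J_2$ is singular iff $\phi(a)\Phi(b)\,c'W_{ij}+\Phi(a)\phi(b)\,c'W_{ji}=0$ a.s. But this only tells you $(c'W_{ij})(c'W_{ji})\le 0$ a.s. That gives $c'\mathbb{E}[W_{ij}W_{ji}'+W_{ji}W_{ij}']c\le 0$, which a nonsingular \emph{indefinite} cross-moment allows.

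\emph{First example.} Take $X_i$ i.i.d.\ $N(0,1)$, $W_{ij}=(1,X_j-X_i)'$ and interior $\beta_0=(\beta_{00},0)'$. Then $a=b=\beta_{00}$ and $v=\phi(\beta_{00})\Phi(\beta_{00})(2,0)'$, so $c=(0,1)'$ annihilates $v$. Yet $\mathbb{E}[W_{ij}W_{ij}']=\mathbb{E}[W_{ji}W_{ji}']=\mathrm{diag}(1,2)$ and $\mathbb{E}[W_{ij}W_{ji}'+W_{ji}W_{ij}']=\mathrm{diag}(2,-4)$ are both nonsingular.

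\emph{Identification does not rescue it.} Take $k=1$, $\beta_0=1$ and $h(t)=t\phi(t)/\Phi(t)$. Let $w(x,y)=y-x$ for $y\ge x$ and $w(x,y)=\gamma(x-y)$ for $y<x$, where $\gamma(s)<0$ solves $h(\gamma(s))=-h(s)$; this $w$ is continuous. Then $v=\Phi(a)\Phi(b)[h(a)+h(b)]\equiv 0$, so $J_2=0$. Meanwhile $\mathbb{E}[W_{ij}^2]>0$, $\mathbb{E}[W_{ij}W_{ji}]<0$, and all moments are finite because $\gamma$ is bounded. $\beta_0$ is also identified: each pair's log link probability is strictly concave in $\beta$ with its maximum at $\beta=1$. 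Expanding in $|X_j-X_i|\to 0$ shows global injectivity as well; the $s^4$ coefficient breaks the $\beta\mapsto 2-\beta$ symmetry of the $s^2$ term. So the statement as written is false, not merely hard to prove.

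\textbf{What needs to change.} The paper does not supply the missing step. Its verification of (3.3.iv) simply asserts that positive, finite weights on three nonsingular matrices give a nonsingular $J_2$, which these examples refute. The hypothesis has to change. One option is to strengthen the last part of (4e) to \emph{positive definiteness} of $\mathbb{E}[W_{ij}W_{ji}'+W_{ji}W_{ij}']$. Your sign observation then closes the argument immediately, with no need to control the varying weights. The sharper option is to assume nonsingularity of $J_2$ directly, i.e.\ that no $c\neq 0$ satisfies the a.s.\ identity above.

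Condition (4e) is not necessary either. In the running example $W_{ij}=(1,X_j)'$ with i.i.d.\ $X_i$ and $\mu=\mathbb{E}[X_i]$, the cross-moment is $2\begin{pmatrix}1&\mu\\ \mu&\mu^2\end{pmatrix}$, which is singular. Yet $J_2$ is nonsingular there.

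\textbf{Interior assumption.} Your step ``$\hat\beta$ is interior with probability approaching one'' needs $\beta_0$ in the interior of $\Theta$, which (4a) does not state. The paper likewise just declares (3.3.i) ``assumed'', so add it as an explicit hypothesis.
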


\begin{proof}
  \renewcommand{\qedsymbol}{} % Make the QED symbol blank for this proof
  See Appendix.
\end{proof}

Note that we do not need to assume existence of $\mathbb{E}[W_{ij} W_{ji}' + W_{ji} W_{ij}']$ because it is implied by the existence of $\mathbb{E} [W_{ij} W_{ij}']$ and $\mathbb{E} [W_{ji} W_{ji}']$, which is implied by the existence of $\mathbb{E} [\|W_{ij}\|^8]$ and $\mathbb{E} [\|W_{ji}\|^8]$ (use the Cauchy-Schwarz inequality). However, nonsingularity of $\mathbb{E} [W_{ij} W_{ij}']$ and $\mathbb{E} [W_{ji} W_{ji}']$ does not imply nonsingularity of $\mathbb{E}[W_{ij} W_{ji}' + W_{ji} W_{ij}']$. To see this, suppose $\mathbb{E} [W_{ij} W_{ij}']$ and $\mathbb{E} [W_{ji} W_{ji}']$ are nonsingular, and also that $W_{ij}$ and $W_{ji}$ are independent and zero mean. Then $\mathbb{E} [W_{ij} W_{ji}'] = \mathbb{E} [W_{ij}] \mathbb{E} [W_{ji}'] = 0$ (and likewise with $i$ and $j$ interchanged). So, $\mathbb{E}[W_{ij} W_{ji}' + W_{ji} W_{ij}'] = \mathbb{E}[W_{ij} W_{ji}'] + \mathbb{E}[W_{ji} W_{ij}'] = 0$, is the zero matrix which is clearly singular. Hence, we assume nonsingularity of $\mathbb{E}[W_{ij} W_{ji}' + W_{ji} W_{ij}']$ via assumption (4e) to satisfy Condition (3.3.iv) in the proof of Theorem 4. This assumption rules out situations exactly like those described above where $W_{ij}$ and $W_{ji}$ are independent and zero mean. It is interesting that some degree of dependence between $W_{ij}$ and $W_{ji}$ assists in establishing asymptotic normality.

As with consistency, although the theory required to establish asymptotic normality is more complicated when regressors are asymmetric, the sufficient conditions are only marginally stronger. Namely, on top of the conditions for consistency, we now impose non-zero dependence between $W_{ij}$ and $W_{ji}$. E.g., if $W_{ij} = \text{income}_j$ and $W_{ji} = \text{income}_i$ as in our running example, we would require some correlation in incomes across individuals, which would be natural in a setting with common shocks to income. Or e.g., if individuals cared about the weighted average of the pair's income, but gave more weight to their own income, say $W_{ij} = 0.7 \text{income}_i + 0.3 \text{income}_j$, then dependence between $W_{ij}$ and $W_{ji}$ would be structurally present.

\subsection{Other Model Extensions}

As well as extending the model to allow for asymmetric regressors, one could allow for an arbitrary covariance between $\varepsilon_{ij}$ and $\varepsilon_{ji}$, $\rho \in (-1, 1)$. For reasons described in Section 1.2, we opt for simulation-based results only for the consistency of the NTU-MLEs, $\hat{\beta}$ and $\hat{\rho}$ -- the maximizers of the generic NTU log-likelihood $\hat{Q}_n(\beta, \rho)$. In the Appendix, the simulation results suggest that the NTU-MLEs are consistent for all values of $\rho \in (-1, 1)$ that we consider, irrespective of regressor (a)symmetry. Furthermore, $\hat{\beta}$ and $\hat{\rho}$ are close to the true values of $\beta$ and $\rho$ with very small $n = 20$.

\section{Testing for Utility Transferability}

In this section, we propose a specification test for whether a given dataset is better explained by the TU or NTU model. This test is useful because it can be shown that the TU-MLE will be inconsistent if a) the regressors are asymmetric or b) the regressors are not distributed symmetrically around 0.\footnote{a) is natural because the TU model has no way of incorporating asymmetric regressors. b) relies on a standard misspecified MLE argument à la \cite{white_maximum_1982}.} The test relies on recognizing that the TU and NTU models can be nested in a more general model. In particular, both models can be nested in the general NTU model from Section 1.2 if we allow $\rho \in [-1, 1]$ instead of $\rho \in (-1, 1)$. The fact that this model nests the NTU model is obvious. To see that this model also nests the TU model, set $W_{ij} = W_{ji}$ and $\rho = 1$. If $\rho = 1$, then $\varepsilon_{ij} = \varepsilon_{ji}$, and therefore $Y_{ij} = \mathds{1} [W_{ij}^{'}\beta \geq \varepsilon_{ij}]^2 = \mathds{1} [W_{ij}^{'}\beta \geq \varepsilon_{ij}]$, which is the TU model. Thus, we can test if the DGP is the TU model vs the NTU model (with symmetric regressors) by testing:

$$
H_0 : \rho_0 = 1 \quad \text{vs} \quad H_A : \rho_0 < 1,
$$
where $\rho_0$ is the true value of $\rho$. 

Let $\hat{Q}_n(\beta, \rho)$ be the log-likelihood function of the general NTU model from Section 1.2, now allowing for $\rho \in [-1, 1]$, and let $(\hat{\beta}, \hat{\rho}) \coloneq \argmax_{(\beta, \rho)}\hat{Q}_n(\beta, \rho)$. A naive testing approach would be to form the Wald statistic, $W = \frac{1 - \hat{\rho}}{\text{SE}(\hat{\rho})}$, and reject $H_0$ at significance level $\alpha$ if $W > Z(1 - \frac{\alpha}{2})$, where $Z(1 - \frac{\alpha}{2})$ is the $1 - \frac{\alpha}{2}$-th quantile of the standard normal distribution. Alternatively, one could form the likelihood ratio statistic, $LR = -2 \left[\hat{Q}_n(\hat{\beta}, 1) - \hat{Q}_n(\hat{\beta}, \hat{\rho})\right]$, and reject $H_0$ at significance level $\alpha$ if $LR > \chi^2_1(1-\alpha)$, where $\chi^2_1(1-\alpha)$ is the $1 - \alpha$-th quantile of the chi-squared distribution with 1 degree of freedom.\footnote{Technically, the $\hat{Q}_n(\hat{\beta}, 1)$ in the LR statistic is usually $\hat{Q}_n(\beta_0, 1)$ instead. However, because for this specification test we are only interested in testing for the value of $\rho_0$ and not $\beta_0$, and because $\hat{\beta}$ is consistent such that $\hat{Q}_n(\hat{\beta}, 1)$ and $\hat{Q}_n(\beta_0, 1)$ are indistinguishable in large samples, I include $\hat{Q}_n(\hat{\beta}, 1)$ such that the $LR$ statistic is computable irrespective of the hypothesized value of $\beta_0$. If one wishes to test a joint hypothesis about both $\rho_0$ and $\beta_0 \in \mathbb{R}^k$, then simply replace $\hat{Q}_n(\hat{\beta}, 1)$ with $\hat{Q}_n(b, 1)$, where $b$ is the value of $\beta$ under the joint null of interest, and compare to the chi-squared distribution with $k+1$ degrees of freedom.} However, simulation evidence suggests that neither the naive Wald nor naive LR approach have correct size: the Wald test seriously over-rejects while the LR test under-rejects.

This is perhaps unsurprising because we are testing whether a parameter is equal to the boundary of the parameter space: $\rho_0 = 1$. Focusing on the LR test, \cite{self_asymptotic_1987} derive the asymptotic distributions of LR test statistics for tests of parameters at their boundaries. Thinking of $\beta$ as a nuisance parameter,\footnote{Again, if instead one wishes to test a joint hypothesis about both $\rho$ and $\beta \in \mathbb{R}^k$, then one would be in Case 6 of \cite{self_asymptotic_1987} and the relevant asymptotic distribution would be $\frac{1}{2}\chi^2_k + \frac{1}{2}\chi^2_{k+1}$ under $H_0$.} we have one parameter of interest at the boundary ($\rho$) and one nuisance parameter not at the boundary ($\beta$). This means that we are in Case 5 of \cite{self_asymptotic_1987} and thus the LR test statistic converges in distribution to a $50:50$ mixture of $\chi^2_0$ and $\chi^2_1$ under $H_0$. That is, the correct procedure to test $H_0: \rho_0 = 1$ vs $H_A: \rho_0 < 1 $ at significance level $\alpha$ is to reject $H_0$ if:

$$
-2 \left[\hat{Q}_n(\hat{\beta}, 1) - \hat{Q}_n(\hat{\beta}, \hat{\rho})\right] > \left(\frac{1}{2}\chi^2_0 + \frac{1}{2}\chi^2_{1}\right)(1 - \alpha),
$$
where $\left(\frac{1}{2}\chi^2_0 + \frac{1}{2}\chi^2_{1}\right)(1 - \alpha)$ is the $(1 - \alpha)$-th quantile of a mixture distribution that is 50\% $\chi^2_0$ and 50\% $\chi^2_1$.

I verify that this testing procedure works via simulation. Namely, I plot power curves, i.e. the proportion of simulations for which the null $H_0: \rho_0 = 1$ is rejected against the true value of $\rho$ used to generate the networks, for various sample sizes in Figure 2. The empirical test sizes can be read off the figure by looking at the proportion of rejections when $\rho = 1$ (as indicated by the small dotted line). The true size is $\alpha = 0.05$; empirical sizes are 0.058 when $n=50$, 0.048 when $n=20$, and 0.058 when $n= 10$. As well as achieving size control, the test has strong power, with power jumping to 100\% at $\rho = 0.8$ when $n=50$. The reason why we have such good finite-sample performance is because, due to the dyadic nature of the data, we have $\sqrt{N}$-convergence instead of $\sqrt{n}$-convergence in Theorems 2 and 4. The number of dyads, $N = \frac{n(n-1)}{2}$, is quadratic in the number of individuals, $n$. So, although $n=20$ is small, $N=\frac{20(20-1)}{2}=190$ is not. This result is echoed by the exceptional small-sample behavior of the NTU-MLE in the Appendix.

\begin{figure}[H]
  \centering
  \caption{Specification Test Power Curves}
  \label{fig:corrected_lr_test_power_curves}
  \begin{subfigure}[b]{0.8\textwidth}
    \centering
    \includegraphics[width=\textwidth]{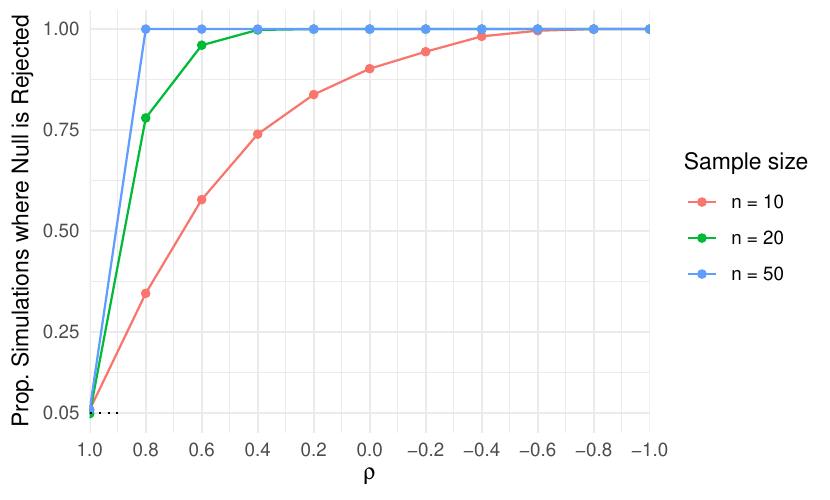}
    \label{fig:lr_test_power_curves_symmetric}
  \end{subfigure}
  \caption*{\footnotesize This figure shows empirical rejection probabilities of likelihood ratio tests of $H_0 : \rho_0 = 1$ (reject $H_0$ at level $\alpha = 0.05$ if $-2[\hat{Q}_n(\hat{\beta}, 1) - \hat{Q}_n(\hat{\beta}, \hat{\rho})] > \left(\frac{1}{2}\chi^2_0 + \frac{1}{2}\chi^2_{1}\right)(0.95)$) against the true value of $\rho$ for three different values of the number of individuals in the network, $n$. All calculations are based on 500 networks from the general NTU model with the corresponding true value of $\rho$, a single symmetric regressor from the standard normal distribution, and $\beta=1$. Empirical sizes are very close to the nominal size of $\alpha = 0.05$ when $\rho_0 = 1$, and empirical power quickly exceeds 80\% when $n \geq 20$.}
\end{figure}

\section{Applications}

\subsection{Adolescent Friendships in California}

I now apply the estimation and testing procedures developed in this paper to two network datasets from different fields of economics. The first comes from \cite{goeree_1d_2010}. \cite{goeree_1d_2010} use dictator games to show that 10-18-year-old students from an all-girls school in Pasadena, California are more likely to give to their friends. Furthermore, students are more likely to be friends with those that are more similar to themselves (homophily). In their Table 2, \cite{goeree_1d_2010} estimate a TU model with logistic errors via MLE to determine which shared characteristics explain the decision of two students to form a link (friendship). Specifically, they consider whether both students a) are the same race, b) have the same height, c) have the same amount of confidence, and d) have had the same boyfriend, as well as e) whether the student in the pair that was the recipient in the dictator game is shy and f) the height of the other student. When I estimate their TU model with logistic errors via MLE in Column 1 of Table 1 below, I confirm the finding in \cite{goeree_1d_2010} that all factors except for the shyness of the recipient and height of the partner are significantly correlated with the link decision. The same is true with normal errors as per Column 2 of Table 1.

However, the assumption of transferable utility is questionable here. If Abbie wants to be friends with Becca but Becca does not want to be friends with Abbie, is it reasonable to assume that Abbie will provide Becca with transfers of some kind (e.g. money) such that the friendship becomes mutually beneficial? This does not seem to be a very plausible model of how adolescent friendship works. More formally, we can test for the appropriateness of the TU model here by using the specification test developed in Section 3. The NTU-MLE of $\rho$ and the p-value for the test of $H_0: \rho_0 = 1$ vs $H_A: \rho_0 < 1$ using the LR specification test procedure from Section 3 are reported at the bottom of Column 3 of Table 1. We find that $\hat{\rho} = -0.031$ and that the p-value for the specification test is essentially $0$. That is, we reject the appropriateness of the TU model in favor of the NTU model with $\rho$ practically equal to 0. It is satisfying that we seem to be in the $\rho = 0$ case here as this was the case for which consistency and asymptotic normality of the NTU-MLE were proven formally in Section 2. 

Column 3 of Table 1 also reports the estimates for the homophily parameters in the NTU model.\footnote{The p-values for the homophily parameters are derived from standard Wald tests of $H_0: \beta_0=0$ vs $H_A: \beta_0\neq0$. For justification that standard Wald/LR tests work for hypotheses involving $\beta$, see the simulation evidence in the Appendix. For example, when $\rho = 0$ as seems to be the case here, the Wald test has empirical size 0.052 and the standard LR test has empirical size 0.050.} While the TU estimates suggest that having had the same boyfriend matters for friendship formation and that the height of one's partner does not, the NTU estimates suggest the opposite. This application highlights that, when the validity of the TU assumption can be rejected, the more appropriate NTU-MLE can yield qualitatively different estimates to the TU-MLE, therefore underlining the value of the testing and estimation procedures developed in this paper.

Readers may have noticed that the final regressor in Table 1 -- the height of the other student -- is asymmetric. For example, the first individual in the dataset has a height of 54 inches and the second individual has a height of 58 inches. Nonetheless, $\beta$ in the NTU model is identified for two reasons here (assuming $\rho$ = 0 and existence and nonsingularity of $\mathbb{E} [W_{ij} W_{ij}']$ and $\mathbb{E} [W_{ji} W_{ji}']$). First, there are pairs of individuals that have the same height, e.g. the third and fourth individual in the dataset are both 56 inches tall. Second, even if there were not two individuals with the same height, $\beta$ would still be identified by Proposition 1 because the height variable takes on 24 ($>3$) distinct values. As we know from Theorems 3 and 4, the complications that arise due to regressor asymmetry can be dealt with by assuming compactness of the parameter space and an additional nonsingularity condition.\footnote{Strictly speaking, since we have an asymmetric regressor, the specification test developed in Section 3 that assumes symmetric regressors does not apply here. Nonetheless, given that $\hat{\rho} = 0$ is very far away from 1, and only one of the six regressors are asymmetric, it seems highly likely that a specification test generalized to asymmetric regressors would still reject the TU assumption. For a discussion of how one might generalize the specification test, see the end of Section 4.2.}

\begin{table}[H]
    \centering
    \caption{Explaining Adolescent Friendship Decisions by Personal Traits}
    \begin{tabular}{lccc}
        \toprule
        & \multicolumn{3}{c}{Outcome: link} \\
        \cmidrule(lr){2-4}
        & (1) & (2) & (3) \\
        & \textbf{TU - logit} & \textbf{TU - probit} & \textbf{NTU - probit} \\
        \midrule
        samerace & 0.430*** & 0.177*** & 0.138*** \\
                 & (0.057) & (0.024) & (0.021) \\
        sameheight & 0.270*** & 0.110*** & 0.073*** \\
                   & (0.057) & (0.023) & (0.018) \\
        sameconf & 0.148** & 0.061** & 0.035* \\
                 & (0.057) & (0.024) & (0.018) \\
        sameboyfriend & 0.471* & 0.193* & 0.061 \\
                      & (0.210) & (0.093) & (0.076) \\
        shy\_recipient & -0.045 & -0.019 & -0.010 \\
                       & (0.029) & (0.012) & (0.009) \\
        height\_partner & -0.006 & -0.003 & -0.018*** \\
                          & (0.008) & (0.003) & (0.002) \\
        \midrule
        $\hat{\rho}$ & -- & -- & \textbf{-0.031} \\
        p-value for $H_0: \rho_0 = 1$ & -- & -- & 0.000 \\
        Observations & 55,666 & 55,666 & 55,666 \\
        \bottomrule
    \end{tabular}
      \caption*{\footnotesize Column 1 of this table replicates the results in Table 2 in \cite{goeree_1d_2010} by estimating the TU model with logistic errors to explain the friendship decisions of Californian adolescents. Column 2 is like Column 1, but using the TU model with normal errors instead; the same statistically significant factors remain. In Column 3, I estimate the general NTU model from this paper. sameboyfriend loses its statistical significance while height\_partner gains significance. $\hat{\rho}$ is close to 0 and, using the LR specification test developed in Section 3, we reject the appropriateness of the TU model. Standard errors in parentheses. \\ \sym{*} \(p<0.05\), \sym{**} \(p<0.01\), \sym{***} \(p<0.001\)}
\end{table}

\subsection{Risk-sharing in Rural Tanzania}

The second application is the same application considered in the two papers that are most similar to this paper: \cite{gao_logical_2023} and \cite{li_ming_estimation_2024}. It concerns the determinants of risk-sharing across households in a small community called Nyakatoke in rural Tanzania.\footnote{\cite{de_weerdt_risk-sharing_2002} is to thank for the collection of this data.} In particular, \cite{gao_logical_2023} and \cite{li_ming_estimation_2024} attempt to explain the decision of two households to share cash, kind, or labor by a) the absolute value of the difference between the (log) wealth of the households, b) the (log) physical distance between the households, and c) the strength of the tie between the households in terms of blood relationships and shared religions.\footnote{All three regressors are symmetric.} Focusing on \cite{li_ming_estimation_2024}'s results since they provide statistical inference, not just point estimation as in \cite{gao_logical_2023}, they use their more complicated NTU-model-based estimator to find a zero coefficient on the difference in wealth, a negative and significant coefficient on distance, and a positive and significant coefficient on the tie variable.

In Table 2, I compute the simpler NTU-MLEs from this paper and perform the specification test for the TU vs NTU model. I find coefficients that are in line with those in \cite{li_ming_estimation_2024}: a zero coefficient on the difference in wealth, a negative and significant coefficient on distance, and a positive and significant coefficient on the tie variable. What is interesting is that we estimate $\hat{\rho}$ to be very close to 1 such that we cannot reject the null hypothesis that the TU model is appropriate for explaining risk-sharing decisions in Nyakatoke. This suggests that we might not need to worry about the additional identification and estimation complications that arise in NTU models here. Since neither \cite{gao_logical_2023} nor \cite{li_ming_estimation_2024} have a specification test for whether the TU model or NTU model is more appropriate, they apply their NTU-model-based estimators to this (seemingly) TU-model-based dataset when they may have been able to use a simpler TU-model-based estimator here, such as the one in \cite{graham_econometric_2017}.

\begin{table}[H]
    \centering
    \caption{Explaining Risk-sharing Decisions by Household Traits}
    \begin{tabular}{lc}
        \toprule
        & \multicolumn{1}{c}{Outcome: link} \\
        & \textbf{NTU - probit} \\
        \midrule
        $|$(log) wealth difference $|$ & 0.026 \\
                 & (0.030) \\
        (log) distance & -0.274*** \\
                   & (0.007) \\
        tie & 0.403*** \\
                 & (0.032) \\
        \midrule
        $\hat{\rho}$ & \textbf{1.000} \\
        p-value for $H_0: \rho_0 = 1$ & 1.000 \\
        Observations & 7021 \\
        \bottomrule
    \end{tabular}
      \caption*{\footnotesize This table re-estimates Table 6 in \cite{li_ming_estimation_2024} but, instead of using their more complicated moments-based estimator assuming a NTU model with fixed effects, it uses the simple NTU-MLE (with generic $\rho$) from this paper assuming a NTU model without fixed effects. The coefficient signs and statistical significance are the same as in \cite{li_ming_estimation_2024}. $\hat{\rho}$ is very close to 1 meaning that we cannot reject the appropriateness of the TU model. Standard errors in parentheses. \\ \sym{*} \(p<0.05\), \sym{**} \(p<0.01\), \sym{***} \(p<0.001\)}
\end{table}

Although the specification test developed in this paper  has proven useful here, skeptics of the results in Tables 1 and 2 are welcome to respond with the fact that the errors might not be normally distributed and there might unobserved individual heterogeneity that is not being accounted for. A specification test in a semi-parametric setting with fixed effects would be advantageous, but this is beyond the scope of this paper. Of course, then the specification testing procedure would not be as simple as testing for $\rho = 1$. I conjecture that the easiest way to implement a specification test in a more general setting would be via a non-nested likelihood ratio test \parencite{vuong_likelihood_1989}, like the one used in \cite{comola_testing_2014}.\footnote{\cite{comola_testing_2014} use a non-nested LR test to determine if link formation is done unilaterally, i.e. one individual decides whether to form a link without the consent of the other individual, or bilaterally, as in all the models considered in this paper.} It is also not clear whether we actually need to be concerned about fixed effects here, especially in light of the fact that my estimates (which ignore fixed effects) are similar to \cite{li_ming_estimation_2024}'s estimates (which explicitly account for fixed effects). This motivates future work on a specification test for the NTU model without fixed effects vs the NTU model with fixed effects.

\section{Conclusion}

This paper has contributed to the literature on dyadic network formation models by providing sufficient conditions for the consistency and asymptotic normality of the NTU-MLE with and without symmetric regressors. It turns out that, when regressors are symmetric, the sufficient conditions for consistency and asymptotic normality of the TU-MLE are also sufficient for the NTU-MLE. That is, if the TU-MLE is theoretically justified, then the NTU-MLE is too. Furthermore, when regressors are asymmetric, remarkably little extra is assumed. Although this paper has produced simulation-based evidence for the consistency of the NTU-MLE with correlated errors (see the Appendix), it has not provided formal conditions, and this is an obvious next step for researchers to work on.

Furthermore, this paper has provided a specification test to determine whether the TU or NTU model is more appropriate in a given setting. Two applications have confirmed the value of this specification test, but I have been careful to state that this test assumes that a) the regressors are symmetric, b) the errors follow a normal distribution and c) there is no unobserved individual heterogeneity. I reiterate the merit of a more general specification test in a semi-parametric setting with asymmetric regressors and fixed effects.

From an applied perspective, this paper has shown that the choice between the TU-MLE and NTU-MLE matters, yet, with symmetric regressors, the NTU-MLE is no harder to justify than the TU-MLE. Furthermore, relative to other NTU-model-based estimators that allow for fixed effects, the NTU-MLE is very easy to compute using standard maximum likelihood routines. Given this ease of implementation and the possibility of substantive differences in empirical conclusions, researchers should treat the choice between TU and NTU as an empirical question rather than a default assumption. And the specification test developed in this paper provides an answer to this empirical question.

\section*{Acknowledgments}
I thank Francis DiTraglia and Martin Weidner for insightful comments and guidance.

%%%%% REFERENCES

% JEM: Quote for the top of references (just like a chapter quote if you're using them).  Comment to skip.
%\begin{savequote}[8cm]
%The first kind of intellectual and artistic personality belongs to the hedgehogs, the second to the foxes \dots
%  \qauthor{--- Sir Isaiah Berlin \cite{berlin_hedgehog_2013}}
%\end{savequote}

\setlength{\baselineskip}{0pt} % JEM: Single-space References

{\renewcommand*\MakeUppercase[1]{#1}%
\printbibliography[heading=bibintoc,title={\bibtitle}]}

%% APPENDICES %% 
% Starts lettered appendices, adds a heading in table of contents, and adds a
%    page that just says "Appendices" to signal the end of your main text.
\newpage
\appendix

\section{Lemmas and Proofs}

\setcounter{lemma}{0}

\begin{lemma}
    Consider the NTU model with $\rho = 0$ and $W_{ij} = W_{ji}$ for all pairs of individuals $ij$. If $\Theta$ is bounded and $\mathbb{E} [W_{ij} W_{ij}']$ exists and is nonsingular, then $Q_0 (\beta) \coloneq \mathbb{E} [ \log f(Y_{ij} \mid X_i, X_j, \beta) ]$ has a unique maximum at $\beta_0$.
\end{lemma}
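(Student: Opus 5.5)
The plan follows the standard information-inequality argument for MLE identification (Lemma 2.2 in NM), adapted to the symmetric NTU likelihood $f(Y_{ij}\mid X_i,X_j,\beta)=[\Phi(W_{ij}'\beta)^2]^{Y_{ij}}[1-\Phi(W_{ij}'\beta)^2]^{1-Y_{ij}}$. It has three steps: show $Q_0(\beta)$ is finite on $\Theta$, reduce the maximization to a pointwise conditional comparison, and show that comparison is strict with positive probability.

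\textbf{Step 1: finiteness of $Q_0$.} I need $\mathbb{E}[|\log f(Y_{ij}\mid X_i,X_j,\beta)|]<\infty$ for every $\beta\in\Theta$. Since $0\le Y_{ij}\le 1$, it suffices to bound $|\log \Phi(W_{ij}'\beta)^2| = 2|\log\Phi(W_{ij}'\beta)|$ and $|\log(1-\Phi(W_{ij}'\beta)^2)|$.
\begin{itemize}
\item For the first term, use the Mills-ratio bound: $|\log\Phi(v)| \le C(1+|v|+|v|^2)$ for some constant $C$.
\item For the second term, use the factorisation $1-\Phi(v)^2=(1-\Phi(v))(1+\Phi(v))$ together with $1+\Phi(v)\in[1,2]$. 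This gives $|\log(1-\Phi(v)^2)|\le \log 2+|\log\Phi(-v)| \le C'(1+|v|+|v|^2)$.
\item Since $\Theta$ is bounded, $|W_{ij}'\beta|\le \|W_{ij}\|\sup_{\Theta}\|\beta\|$. Existence of $\mathbb{E}[W_{ij}W_{ij}']$ gives $\mathbb{E}\|W_{ij}\|^2<\infty$, and hence also $\mathbb{E}\|W_{ij}\|<\infty$.
\end{itemize}
Together these give a dominating bound $d(X_i,X_j)=C''(1+\|W_{ij}\|+\|W_{ij}\|^2)$ with finite expectation, uniform over $\beta\in\Theta$. This is the integrability fact cited in Section 2.1.2 for condition (5.4.A.iii).

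\textbf{Step 2: conditional information inequality.} Write $p_0=\Phi(W_{ij}'\beta_0)^2$ and $p=\Phi(W_{ij}'\beta)^2$. Conditioning on $(X_i,X_j)$, with $\mathbb{E}[Y_{ij}\mid X_i,X_j]=p_0$, gives
\[
Q_0(\beta_0)-Q_0(\beta)=\mathbb{E}\big[\,p_0\log(p_0/p)+(1-p_0)\log\big((1-p_0)/(1-p)\big)\big].
\]
The integrand is the Kullback--Leibler divergence between two Bernoulli distributions. By strict concavity of $\log$ (Jensen), it is nonnegative and equals zero if and only if $p=p_0$. Because $\Phi$ is positive and strictly increasing, $p=p_0$ holds exactly when $W_{ij}'\beta=W_{ij}'\beta_0$. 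So $Q_0(\beta)\le Q_0(\beta_0)$, with equality only if $W_{ij}'(\beta-\beta_0)=0$ almost surely.

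\textbf{Step 3: strictness, the main obstacle.} Step 2 only yields \emph{pointwise} strict inequality, but uniqueness needs a strictly positive expectation. That requires $\Pr(W_{ij}'\beta\neq W_{ij}'\beta_0)>0$ whenever $\beta\neq\beta_0$. Nonsingularity of $\mathbb{E}[W_{ij}W_{ij}']$ delivers this: if $\delta=\beta-\beta_0\neq 0$, then
\[
\mathbb{E}\big[(W_{ij}'\delta)^2\big]=\delta'\,\mathbb{E}[W_{ij}W_{ij}']\,\delta>0,
\]
so $W_{ij}'\delta\neq 0$ on a set of positive probability. On that set the KL integrand is strictly positive. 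Since Step 1 makes everything integrable, the expectation is strictly positive, so $Q_0(\beta)<Q_0(\beta_0)$ and the maximum at $\beta_0$ is unique.
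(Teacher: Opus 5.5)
Your proof is correct and takes essentially the paper's route. The paper obtains $\mathbb{E}[|\log f(Y_{ij}\mid X_i,X_j,\beta)|]<\infty$ from a first-order Taylor expansion around $\beta=0$ with the Mills-ratio bounds $\phi(v)/\Phi(v)\le C(1+|v|)$ and $2\Phi(v)\phi(v)/(1-\Phi(v)^2)\le \phi(-v)/\Phi(-v)$, which is equivalent to your quadratic bounds; it then cites NM's Lemma 2.2 and the identification argument of Section 2.1.1, which your Steps 2--3 unpack as a conditional Bernoulli Kullback--Leibler argument with identification stated correctly as $\mathbb{P}(W_{ij}'\beta\neq W_{ij}'\beta_0)>0$ (slightly more careful than the paper's pointwise phrasing).
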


\setlist[description]{font=\normalfont\itshape}
\makeatletter
\renewenvironment{proof}[1][\proofname]{%
  \par\pushQED{\qed}%
  \normalfont\topsep6\p@\@plus6\p@\relax
  \trivlist
  \item[\hskip\labelsep\bfseries #1.]%
}{%
  \popQED\endtrivlist\@endpefalse
}
\makeatother

\begin{proof}[Proof of Lemma 1]
   \vspace{1ex} % forces a line break
    Given identification under $\rho = 0$, $W_{ij} = W_{ji}$, and existence and nonsingularity of $\mathbb{E} [W_{ij} W_{ij}']$ (as shown in the main body of the paper), we remain to show that $\mathbb{E} [| \log f(Y_{ij} \mid X_i, X_j, \beta) |] < \infty$ for all $\beta \in \Theta$ so that we can invoke Lemma 2.2 in NM to achieve the desired result.
   $\log f(Y_{ij} \mid X_i, X_j, \beta) = Y_{ij} \log\left(\Phi\left(W_{ij}'\beta\right)^2\right) + (1-Y_{ij})\log\left(1 - \Phi\left(W_{ij}'\beta\right)^2\right)$, so consider the $\log\left(\Phi\left(W_{ij}'\beta\right)^2\right)$ and $\log\left(1 - \Phi\left(W_{ij}'\beta\right)^2\right)$ terms separately:
  \begin{description}
     \item[Term 1: $\log\left(\Phi\left(W_{ij}'\beta\right)^2\right) = 2 \log\left(\Phi\left(W_{ij}'\beta\right)\right)$]
     \ \\
     Then a first-order Taylor expansion around $\beta = 0$ yields: \\
     \[
     2 \log\left(\Phi\left(W_{ij}'\beta\right)\right) = 2 \log(\Phi(0)) + 2 \frac{\phi(W_{ij}'\tilde{\beta})}{\Phi(W_{ij}'\tilde{\beta})} W_{ij}'\beta,
     \]
     where $\tilde{\beta}$ is in between $\beta$ and 0. So by the triangle inequality and using the bound in NM, $\frac{\phi(v)}{\Phi(v)} \leq C (1 + |v|)$ for some $C \geq 1$: \\
     \[
     \begin{aligned}
     |2 \log\left(\Phi\left(W_{ij}'\beta\right)\right)| &\leq |2 \log(\Phi(0))| + \left|2 \frac{\phi(W_{ij}'\tilde{\beta})}{\Phi(W_{ij}'\tilde{\beta})} W_{ij}'\beta\right| \\
     & = |2 \log(\Phi(0))| + 2 \frac{\phi(W_{ij}'\tilde{\beta}}{\Phi(W_{ij}'\tilde{\beta}} \left|W_{ij}'\beta\right| \\
     &\leq |2 \log(\Phi(0))| + 2 C (1 + |W_{ij}'\tilde{\beta}|) \left|W_{ij}'\beta\right|. \\
     \end{aligned}
     \]
     \item[Term 2: $\log\left(1 - \Phi\left(W_{ij}'\beta\right)^2\right)$]
     \ \\
     Then a first-order Taylor expansion around $\beta = 0$ yields: \\
     \[
\begin{aligned}
     \log\left(1 - \Phi\left(W_{ij}'\beta\right)^2\right) &= \log(1 - \Phi(0)^2) - \frac{2\Phi(W_{ij}'\tilde{\beta}) \phi(W_{ij}'\tilde{\beta})}{1 - \Phi(W_{ij}'\tilde{\beta})^2} W_{ij}'\beta.
\end{aligned}
     \]
     To bound the absolute value of this term we use the fact that: \\
     \[
     \frac{2 \Phi(v) \phi(v)}{1 - \Phi(v)^2} \leq \frac{\phi(-v)}{\Phi(-v)} \leq C (1 + |-v|) = C (1 + |v|),
      \]
     which is easy to show given that $\phi(-v) = \phi(v)$, $\Phi(-v) = 1 - \Phi(v)$, and $\Phi(v) \leq 1$. \\
     So given this bound and using the triangle inequality again, we have: \\
     \[
\begin{aligned}
     \left|\log\left(1 - \Phi\left(W_{ij}'\beta\right)^2\right)\right| &\leq \left|\log\left(1 - \Phi(0)^2\right)\right| + \left|\frac{2\Phi(W_{ij}'\tilde{\beta}) \phi(W_{ij}'\tilde{\beta})}{1 - \Phi(W_{ij}'\tilde{\beta})^2} W_{ij}'\beta\right| \\
     &= \left|\log\left(1 - \Phi(0)^2\right)\right| + \frac{2\Phi(W_{ij}'\tilde{\beta}) \phi(W_{ij}'\tilde{\beta})}{1 - \Phi(W_{ij}'\tilde{\beta})^2} \left|W_{ij}'\beta\right| \\
     &\leq \left|\log\left(1 - \Phi(0)^2\right)\right| + C (1 + |W_{ij}'\tilde{\beta}|) \left|W_{ij}'\beta\right|.
\end{aligned}
     \]
     \item[Combining Terms:]
     \ \\
     Use the triangle inequality again and the fact that $Y_{ij} \in \{0, 1\}$ to establish: 
     \[
\begin{aligned}
     \left| \log f(Y_{ij} \mid X_i, X_j, \beta) \right| &\leq \left|2 \log\left(\Phi\left(W_{ij}'\beta\right)\right)\right| + \left|\log\left(1 - \Phi\left(W_{ij}'\beta\right)^2\right)\right| \\
     &\leq |2 \log(\Phi(0))| + \left|\log\left(1 - \Phi(0)^2\right)\right| \\
     &\quad+ 3 C (1 + |W_{ij}'\tilde{\beta}|) \left|W_{ij}'\beta\right| \\
     &\leq |2 \log(\Phi(0))| + \left|\log\left(1 - \Phi(0)^2\right)\right| \\
     &\quad+ 3 C (1 + \norm{W_{ij}} \norm{\tilde{\beta}}) \norm{W_{ij}} \norm{\beta}.
\end{aligned}
     \]
     Finally, after taking the expectation of both sides and noting that all other terms are finite because $\Theta$ is bounded, we see that $\mathbb{E} [| \log f(Y_{ij} \mid X_i, X_j, \beta) |] < \infty$ provided $\mathbb{E} [W_{ij} W_{ij}'] < \infty$, as required.
  \end{description}
\end{proof}

\begin{lemma}
    For the NTU model with $\rho = 0$ and $W_{ij} = W_{ji}$ for all pairs $ij$, the log-likelihood function, $\hat{Q}_n(\beta) \coloneq \frac{1}{N} \sum^{n-1}_{i=1} \sum^{n}_{j=i+1} \big[Y_{ij} \log \big(\Phi(W_{ij}^{'}\beta)^2 \big) + (1 - Y_{ij})\log \big(1 - \Phi(W_{ij}^{'}\beta)^2 \big) \big]$, is concave.
\end{lemma}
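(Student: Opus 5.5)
The plan is to reduce the claim to two one-dimensional facts about $\Phi$ and then use the fact that concavity is preserved under affine composition and nonnegative combination. For each dyad, write $v = W_{ij}'\beta$, which is linear in $\beta$. The summand of $\hat{Q}_n(\beta)$ is then $Y_{ij}\, a(W_{ij}'\beta) + (1-Y_{ij})\, b(W_{ij}'\beta)$, where
\[
a(v) \coloneq 2\log \Phi(v), \qquad b(v) \coloneq \log\bigl(1-\Phi(v)^2\bigr).
\]
If $a$ and $b$ are concave on $\mathbb{R}$, then each of $\beta \mapsto a(W_{ij}'\beta)$ and $\beta \mapsto b(W_{ij}'\beta)$ is concave, because a concave function composed with an affine map is concave. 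The weights $Y_{ij}$, $1-Y_{ij}$ and $1/N$ are nonnegative, so $\hat{Q}_n$ is a nonnegative combination of concave functions and hence concave, for every realisation of the data.

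Concavity of $a$ is the standard log-concavity of the normal CDF used for probit in NM. For completeness, $(\log\Phi)' = \phi/\Phi$ is nonincreasing, since $(\phi/\Phi)' = -(\phi/\Phi)\bigl(v + \phi/\Phi\bigr) \le 0$ by the classical inequality $v\Phi(v) + \phi(v) > 0$.

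The main obstacle is $b$. The natural split is $b(v) = \log(1-\Phi(v)) + \log(1+\Phi(v))$. The first piece equals $\log\Phi(-v)$ and is concave. The second piece is not concave by itself: its second derivative has the sign of $-\phi(v)\bigl(v(1+\Phi(v)) + \phi(v)\bigr)$, which is positive for very negative $v$. So I will not argue term by term. Instead I will work with the derivative directly:
\[
b'(v) = -2\,g(v), \qquad g(v) \coloneq \frac{\Phi(v)\phi(v)}{1-\Phi(v)^2} = \frac{\phi(v)}{1-\Phi(v)}\cdot\frac{\Phi(v)}{1+\Phi(v)}.
\]
Concavity of $b$ is equivalent to $g$ being nondecreasing. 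Both factors are positive. The first factor is the normal hazard rate $\phi(v)/\Phi(-v)$, which is nondecreasing because $\log\Phi(-v)$ is concave; this is exactly the previous paragraph's fact with $v$ replaced by $-v$. The second factor is $x/(1+x)$ evaluated at $x = \Phi(v)$, which is increasing in $x$ and therefore increasing in $v$. A product of positive nondecreasing functions is nondecreasing, so $g$ is nondecreasing and $b$ is concave.

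Combining these steps gives concavity of $\hat{Q}_n(\beta)$ on all of $\mathbb{R}^k$, and hence on the convex set $\Theta$. This argument uses $W_{ij} = W_{ji}$ essentially: it reduces both $\Phi(W_{ij}'\beta)\Phi(W_{ji}'\beta)$ and $1-\Phi(W_{ij}'\beta)\Phi(W_{ji}'\beta)$ to functions of the single index $W_{ij}'\beta$. That reduction is what fails in Section 2.2, where the log-likelihood is no longer globally concave.
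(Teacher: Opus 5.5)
Your proof is correct. It shares the paper's outer structure: reduce to concavity in the scalar index $v = W_{ij}'\beta$, and handle $Y=1$ by log-concavity of $\Phi$. The difference is in the $Y=0$ term.

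The paper computes $\frac{d^2}{dv^2}\log(1-\Phi(v)^2)$ explicitly and reduces its negativity to $v < \frac{\phi(v)}{\Phi(v)}\frac{1+\Phi(v)^2}{1-\Phi(v)^2}$. It then splits into cases. For $v<0$ the inequality is trivial. For $v\ge 0$ it imports Gordon's Mills-ratio bound $\frac{1-\Phi(v)}{\phi(v)} < 1/v$ together with the comparison $\frac{\phi}{1-\Phi} \le \frac{\phi}{\Phi}\frac{1+\Phi^2}{1-\Phi^2}$.

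You instead factor $b'(v)$ as $-2$ times the hazard rate $\phi(v)/\Phi(-v)$ times $\Phi(v)/(1+\Phi(v))$. Both factors are positive and nondecreasing, so you need no second derivative and no case split.

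The analytic content is ultimately the same. Monotonicity of the hazard rate $\lambda(v)=\phi(v)/\Phi(-v)$ is equivalent to $\lambda(v)\ge v$, since $\lambda' = \lambda(\lambda - v)$, and that is exactly Gordon's bound. You obtain it by reflecting the log-concavity of $\Phi$ already used for $Y=1$. As a result, your whole argument rests on the single inequality $v\Phi(v)+\phi(v)>0$ and is self-contained.

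Your factorization also shows clearly why squaring $\Phi$ preserves concavity: the extra factor $\Phi/(1+\Phi)$ is itself increasing. Your remark that the naive split $\log(1-\Phi)+\log(1+\Phi)$ fails explains why some such device is needed.

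The paper's route gives an explicit second-derivative expression, of the kind that reappears when it bounds $\nabla_{\beta\beta}\log f$ for Condition (3.3.v) in Theorem 2. It also states strict negativity directly. Your argument yields strict concavity of $b$ as well, since $\Phi/(1+\Phi)$ is strictly increasing and the hazard rate is positive.
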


\begin{proof}[Proof of Lemma 2]
   \vspace{1ex}  % forces a line break
  Since $v \coloneq W_{ij}'\beta$ is linear in $\beta$, concavity of $\hat{Q}_n(\beta)$ follows if we can show that $f(v) \coloneq Y\log(\Phi(v)^2) + (1-Y)\log(1 - \Phi(v)^2)$ is concave in $v$. Moreover, since \( Y \) is binary, consider the cases of \( Y = 1 \) and \( Y = 0 \) separately:
  \begin{description}
     \item[Case 1: \( Y = 1 \)]
     \ \\
     So $f(v) = \log(\Phi(v)^2)$. \\
     Then $\frac{df(v)}{dv} = \frac{2 \phi(v)}{\Phi(v)}$, and $\frac{d^2f(v)}{dv^2} < 0$ follows from the well-known fact that $\frac{\phi(v)}{\Phi(v)}$ is decreasing (NM).
     \item[Case 2: \( Y = 0 \)]
     \ \\
     So $f(v) = \log(1 - \Phi(v)^2)$. \\
     Then $\frac{df(v)}{dv} = \frac{-2 \Phi(v) \phi(v)}{1 - \Phi(v)^2}$ and $\frac{d^2f(v)}{dv^2} = \frac{-2 (1 - \Phi(v)^2)(\Phi(v) \frac{d\phi(v)}{dv} + \phi(v)^2) - 4 \Phi(v)^2 \phi(v)^2}{(1 - \Phi(v)^2)^2}$. \\
     Rearranging yields that $\frac{d^2f(v)}{dv^2} < 0$ provided $-\frac{d\phi(v)}{dv} < \frac{\phi(v)^2}{\Phi(v)} \frac{1 + \Phi(v)^2}{1 - \Phi(v)^2}$. \\
     Then note that $\frac{d\phi(v)}{dv} = - v \phi(v)$. So WTS: $v < \frac{\phi(v)}{\Phi(v)} \frac{1 + \Phi(v)^2}{1 - \Phi(v)^2}$.
       \begin{description}
     \item[Case 2a: \( v < 0 \)]
     \ \\ Then $v < \frac{\phi(v)}{\Phi(v)} \frac{1 + \Phi(v)^2}{1 - \Phi(v)^2}$ is immediate since LHS $< 0$ by assumption and RHS $\ge 0$ since $\phi(v) \ge 0$, $\Phi(v) \ge 0$, and $\Phi(v) \le 1$.
     \item[Case 2b: \( v \ge 0 \)]
     \ \\ Then, as shown in \cite{gordon_values_1941}, the inverse Mills ratio is bounded from above by $1/v$. i.e., $\frac{1 - \Phi(v)}{\phi(v)} < \frac{1}{v}$. \\
     So $v < \frac{\phi(v)}{1 - \Phi(v)} \le \frac{\phi(v)}{\Phi(v)} \frac{1 + \Phi(v)^2}{1 - \Phi(v)^2}$, since $\Phi(v) \le 1$, as required.
       \end{description}
  \end{description}
\end{proof}

\begin{lemma}
    Consider the NTU model with $\rho = 0$ and $W_{ij} = W_{ji}$ for all pairs $ij$. If $\Theta$ is bounded, $X_i$ is i.i.d., and $\mathbb{E} [\|W_{ij}\|^8]$ exists, then $\sqrt{N} \nabla_{\beta} \hat{Q}_n(\beta_0) \coloneq \frac{1}{\sqrt{N}} \sum^{n-1}_{i=1} \sum^{n}_{j=i+1} \big[Y_{ij} \nabla_{\beta}\log \\ \big(\Phi(W_{ij}^{'}\beta_0)^2 \big) + (1 - Y_{ij}) \nabla_{\beta}\log \big(1 - \Phi(W_{ij}^{'}\beta_0)^2 \big) \big]$ is asymptotically normal with mean zero.
\end{lemma}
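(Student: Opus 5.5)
The approach is the one the paper anticipates: condition on the characteristics $\mathcal{X}_n = (X_1,\dots,X_n)$, apply a Lindeberg--Feller CLT for the conditionally independent triangular array of dyadic scores, and then pass from conditional to unconditional convergence.

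\textbf{Score and its conditional mean.} Write the score for dyad $ij$ as
\[
s_{ij} \coloneq \bigl(Y_{ij} - \Phi(W_{ij}'\beta_0)^2\bigr)\,\frac{2\phi(W_{ij}'\beta_0)}{\Phi(W_{ij}'\beta_0)\bigl(1-\Phi(W_{ij}'\beta_0)^2\bigr)}\,W_{ij},
\]
which follows from differentiating the two log terms and combining them. Given $\mathcal{X}_n$, the $Y_{ij}$ are independent across dyads, because the $\varepsilon$'s are independent across dyads. Moreover $\mathbb{E}[Y_{ij}\mid\mathcal{X}_n] = \Phi(W_{ij}'\beta_0)^2$. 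Hence the $s_{ij}$ are conditionally independent with conditional mean zero, and $\sqrt{N}\nabla_\beta \hat{Q}_n(\beta_0) = N^{-1/2}\sum_{i<j} s_{ij}$.

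\textbf{Conditional variance.} The conditional variance is
\[
V_n \coloneq \frac{1}{N}\sum_{i<j} \frac{4\phi(W_{ij}'\beta_0)^2}{\Phi(W_{ij}'\beta_0)^2\bigl(1-\Phi(W_{ij}'\beta_0)^2\bigr)}\,\Phi(W_{ij}'\beta_0)^2\,W_{ij}W_{ij}' = \frac{1}{N}\sum_{i<j} g(X_i,X_j),
\]
where $g(X_i,X_j) \coloneq \frac{4\phi(W_{ij}'\beta_0)^2}{1-\Phi(W_{ij}'\beta_0)^2}W_{ij}W_{ij}'$. This $g$ is a symmetric kernel, using $W_{ij}=W_{ji}$. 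The weight is bounded: $\phi(v)^2 \le \phi(0)^2$ and $1-\Phi(v)^2 \ge 1-\Phi(v)$, so for $v \ge 0$ the weight is at most $4\phi(v)^2/(1-\Phi(v)) \le 4\phi(v)\cdot C(1+|v|)$, which is bounded by the Mills-ratio bound already used in Lemma 1. For $v<0$ it is trivially bounded. Thus $\|g\| \le C'\|W_{ij}\|^2$, which is integrable, and the strong LLN for U-statistics (Serfling, Theorem 5.4.A), exactly as used for Theorem 1, gives $V_n \to J_1$ almost surely.

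\textbf{Conditional Lindeberg step.} I would verify the conditional Lyapunov condition with fourth moments:
\[
\frac{1}{N^2}\sum_{i<j}\mathbb{E}\bigl[\|s_{ij}\|^4 \mid \mathcal{X}_n\bigr] \to 0 \quad\text{a.s. (or in probability)}.
\]
Since $|Y-\Phi^2|\le 1$, the same Mills-ratio bounds give $\|s_{ij}\| \le C(1+|W_{ij}'\beta_0|)\|W_{ij}\| \le C''(\|W_{ij}\|+\|W_{ij}\|^2)$ for bounded $\Theta$. So $\|s_{ij}\|^4 \lesssim \|W_{ij}\|^4+\|W_{ij}\|^8$. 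The sum above is then $N^{-1}$ times a U-statistic whose kernel has finite mean by $\mathbb{E}\|W_{ij}\|^8<\infty$, so it converges to zero, again by the U-statistic SLLN. This is where the eighth moment enters.

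\textbf{Conclusion and main obstacle.} A Cram\'er--Wold device combined with the conditional Lyapunov CLT yields $\mathbb{E}[e^{it'\sqrt{N}\nabla\hat Q_n}\mid\mathcal{X}_n] \to e^{-t'J_1t/2}$ almost surely (along the a.s. event on which the two LLNs hold). Dominated convergence then gives the unconditional limit $\mathcal{N}(0,J_1)$. The main obstacle is making this conditional-to-unconditional passage rigorous for a triangular array with random (U-statistic) variances. My plan is to state the Lindeberg--Feller CLT pathwise for fixed realizations of $(X_i)$ in the probability-one set, and to use bounded characteristic functions for the final step. A secondary concern is that $J_1$ must be nonsingular for the Cram\'er--Wold normalization, but mean-zero normality holds regardless, which is all the lemma asserts.
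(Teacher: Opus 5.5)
Your proposal is correct and follows essentially the same route as the paper. You condition on the characteristics, apply the Lindeberg--Feller CLT to the conditionally independent, conditionally mean-zero dyadic scores, and obtain the covariance limit from Serfling's U-statistic SLLN. You then control the Lindeberg (in your case Lyapunov) term through fourth conditional moments of the score, which is where the eighth moment of $W_{ij}$ enters, and pass to the unconditional limit by iterated expectations.

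Your version is somewhat more explicit than the paper's. You identify the limiting covariance as $J_1$, you invoke the U-statistic SLLN to show the Lyapunov sum vanishes, and you handle the unconditional step via characteristic functions and dominated convergence. Your worry about nonsingularity of $J_1$ is unnecessary, since the multivariate Lindeberg--Feller theorem does not require it.
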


\begin{proof}[Proof of Lemma 3]
 \vspace{1ex}  % forces a line break
Let $\mathcal{X} = \{X_1, ..., X_n\}$ be the set of individual characteristics. Conditional on $\mathcal{X}$, $h_2(X_i, X_j) \coloneq Y_{ij} \nabla_{\beta} \log \big(\Phi(W_{ij}^{'}\beta_0)^2 \big) + (1 - Y_{ij}) \nabla_{\beta} \log \big(1 -\Phi(W_{ij}^{'}\beta_0)^2 \big)$ is independent across dyads because $\varepsilon_{ij}$ is assumed to be independent across dyads. This means we can use the Lindeberg-Feller CLT to establish asymptotic normality of $\sqrt{N} \nabla_{\beta} \hat{Q}_n(\beta_0) = \frac{1}{\sqrt{N}} \sum^{n-1}_{i=1} \sum^{n}_{j=i+1} h_2(X_i, X_j)$ conditional on $\mathcal{X}$, which then implies unconditional asymptotic normality (see below). Namely, Proposition 2.27 in \cite{van_der_vaart_asymptotic_1998} (Lindeberg-Feller CLT) conditional on $\mathcal{X}$ states that if:

\begin{enumerate}[
  label=(2.27.\roman*),
  align=left,
  labelwidth=4em,
  labelsep=0.6em,
  leftmargin=!
]
    \item $\mathbb{V}\text{ar} \left[\frac{1}{\sqrt{N}} h_2(X_i, X_j) \mid \mathcal{X} \right] < \infty$
    \item $\sum^{n-1}_{i=1} \sum^{n}_{j=i+1} \mathbb{E}\left[\left\|\frac{1}{\sqrt{N}} h_2(X_i, X_j)\right\|^2 \mathds{1} \left\{\left\|\frac{1}{\sqrt{N}} h_2(X_i, X_j)\right\| > \varepsilon \right\} \mid \mathcal{X} \right] \to 0 \quad \forall \varepsilon > 0$
    \item $\sum^{n-1}_{i=1} \sum^{n}_{j=i+1} \mathbb{C}\text{ov}\left[\frac{1}{\sqrt{N}} h_2(X_i, X_j) \mid \mathcal{X} \right] \to \Sigma_1$,
\end{enumerate}
then:
$$
\sum^{n-1}_{i=1} \sum^{n}_{j=i+1} \left( \frac{1}{\sqrt{N}} h_2(X_i, X_j) - \mathbb{E}\left[\frac{1}{\sqrt{N}} h_2(X_i, X_j) \mid \mathcal{X} \right]\right) \mid \mathcal{X} \xrightarrow{d} \mathcal{N}(0, \Sigma_1)  \quad \text{as} \quad n \to \infty.
$$
Therefore, since $\mathbb{E}\left[\frac{1}{\sqrt{N}} h_2(X_i, X_j) \mid \mathcal{X} \right] = 0$ (the expected score is zero at $\beta_0$), $\sqrt{N} \nabla_{\beta} \hat{Q}_n(\beta_0)$ is asymptotically normal with mean zero, conditional on $\mathcal{X}$, if Conditions (2.27.i), (2.27.ii), and (2.27.iii) hold.

Condition (2.27.i) holds almost surely if $\mathbb{E} \left[\| h_2(X_i, X_j)\|^2\right] < \infty$. We show this holds under $\mathbb{E}[\|W_{ij}\|^4] < \infty$ and boundedness of $\Theta$. Let $V_{ij} := W_{ij}'\beta_0$. Then:
\[
\nabla_{\beta} \log \big(\Phi(V_{ij})^2 \big) 
= \frac{2\phi(V_{ij})}{\Phi(V_{ij})} W_{ij},
\]
and
\[
\nabla_{\beta} \log \big(1 - \Phi(V_{ij})^2 \big) 
= -\frac{2\Phi(V_{ij}) \phi(V_{ij})}{1 - \Phi(V_{ij})^2} W_{ij}.
\]
Hence:
\[
h_2(X_i, X_j)
=
\left[
Y_{ij}\frac{2\phi(V_{ij})}{\Phi(V_{ij})}
-
(1 - Y_{ij})\frac{2\Phi(V_{ij}) \phi(V_{ij})}{1 - \Phi(V_{ij})^2}
\right] W_{ij}.
\]
Taking squared Euclidean norms gives:
\[
\begin{aligned}
\|h_2(X_i, X_j)\|^2
&=
\left[
Y_{ij}\frac{2\phi(V_{ij})}{\Phi(V_{ij})}
-
(1 - Y_{ij})\frac{2\Phi(V_{ij}) \phi(V_{ij})}{1 - \Phi(V_{ij})^2}
\right]^2
\|W_{ij}\|^2 \\
&=
Y_{ij}^2 \left[\frac{2\phi(V_{ij})}{\Phi(V_{ij})}\right]^2 \|W_{ij}\|^2
+
(1 - Y_{ij})^2 \left[\frac{2\Phi(V_{ij}) \phi(V_{ij})}{1 - \Phi(V_{ij})^2}\right]^2 \|W_{ij}\|^2 \\
&\quad
- 2 Y_{ij}(1 - Y_{ij})
\frac{2\phi(V_{ij})}{\Phi(V_{ij})}
\frac{2\Phi(V_{ij}) \phi(V_{ij})}{1 - \Phi(V_{ij})^2}
\|W_{ij}\|^2.
\end{aligned}
\]
Since $Y_{ij} \in \{0,1\}$, we have $Y_{ij}(1 - Y_{ij}) = 0$, so the cross term vanishes. Moreover, $Y_{ij}^2 \le 1$ and $(1 - Y_{ij})^2 \le 1$, so:
\[
\|h_2(X_i, X_j)\|^2
\le
\left[\frac{2\phi(V_{ij})}{\Phi(V_{ij})}\right]^2 \|W_{ij}\|^2
+
\left[\frac{2\Phi(V_{ij}) \phi(V_{ij})}{1 - \Phi(V_{ij})^2}\right]^2 \|W_{ij}\|^2.
\]
As shown in the proof of Lemma 1, both $\frac{\phi(v)}{\Phi(v)}$ and $\frac{2\Phi(v)\phi(v)}{1 - \Phi(v)^2}$ are bounded from above by $C(1 + |v|)$, so their squares are bounded by $D(1 + |v|^2)$ for some constant $D > 0$. Hence:
\[
\|h_2(X_i, X_j)\|^2
\le
2D(1 + |V_{ij}|^2)\|W_{ij}\|^2.
\]
Since $V_{ij} = W_{ij}'\beta_0$, by the Cauchy--Schwarz inequality:
\[
|V_{ij}|^2 \le \|\beta_0\|^2 \|W_{ij}\|^2,
\]
so
\[
\|h_2(X_i, X_j)\|^2
\le
2D\|W_{ij}\|^2 + 2D\|\beta_0\|^2 \|W_{ij}\|^4.
\]
Taking expectations yields:
\[
\mathbb{E}\big[\|h_2(X_i, X_j)\|^2\big]
\le
2D\,\mathbb{E}[\|W_{ij}\|^2]
+
2D\|\beta_0\|^2 \mathbb{E}[\|W_{ij}\|^4],
\]
and therefore, $\mathbb{E}[\|h_2(X_i, X_j)\|^2] < \infty$ provided $\mathbb{E}[\|W_{ij}\|^4] < \infty$ and $\Theta$ is bounded.

Condition (2.27.ii), i.e. the Lindeberg condition, holds if $\mathbb{E}[\|W_{ij}\|^8] < \infty$. To see this, use the Cauchy-Schwarz inequality followed by Markov's inequality to bound the Lindeberg term as follows:

\[
\begin{aligned}
L_n(\varepsilon)
&\coloneq
\frac{1}{N}\sum^{n-1}_{i=1} \sum^{n}_{j=i+1} \mathbb{E}\left[\left\| h_2(X_i, X_j)\right\|^2 \mathds{1} \left\{\left\|\frac{1}{\sqrt{N}} h_2(X_i, X_j)\right\| > \varepsilon \right\} \mid \mathcal{X} \right] \\
&\le
\frac{1}{N}\sum^{n-1}_{i=1} \sum^{n}_{j=i+1}
\mathbb{E}\!\left[\|h_2(X_i,X_j)\|^4 \mid \mathcal{X}\right]^{1/2}
\,
\mathbb{P}\!\left(\|h_2(X_i,X_j)\|>\varepsilon\sqrt{N}\mid \mathcal{X}\right)^{1/2} \\
&\le
\frac{1}{\varepsilon N^{3/2}}
\sum^{n-1}_{i=1} \sum^{n}_{j=i+1}
\mathbb{E}\!\left[\|h_2(X_i,X_j)\|^4 \mid \mathcal{X}\right]^{1/2}
\,
\mathbb{E}\!\left[\|h_2(X_i,X_j)\|^2 \mid \mathcal{X}\right]^{1/2}.
\end{aligned}
\]
Thus, $L_n(\varepsilon) \to 0$ as $n \to \infty$ $\forall \varepsilon > 0$ provided $\mathbb{E}\!\left[\|h_2(X_i,X_j)\|^4\right] < \infty$. Since we showed above that $\|h_2(X_i, X_j)\|^2
\le
2D\|W_{ij}\|^2 + 2D\|\beta_0\|^2 \|W_{ij}\|^4$, it follows that:

$$
\mathbb{E}\big[\|h_2(X_i, X_j)\|^4\big]
\le
4D^2\,\mathbb{E}[\|W_{ij}\|^4]
+
4D^2\|\beta_0\|^2 \mathbb{E}[\|W_{ij}\|^6]
+
4D^2\|\beta_0\|^4 \mathbb{E}[\|W_{ij}\|^8],
$$
so the Lineberg condition holds if $\mathbb{E}[\|W_{ij}\|^8] < \infty$ and $\Theta$ is bounded, as required.

Condition (2.27.iii) can be established by recognizing that:

$$
\sum^{n-1}_{i=1} \sum^{n}_{j=i+1} \mathbb{C}\text{ov}\left[\frac{1}{\sqrt{N}} h_2(X_i, X_j) \mid \mathcal{X} \right] = \frac{1}{N}\sum^{n-1}_{i=1} \sum^{n}_{j=i+1} \mathbb{E}\left[ h_2(X_i, X_j) h_2(X_i, X_j)' \mid \mathcal{X} \right]
$$
is a (degenerate) U-statistic and then applying the strong LLN for U-statistics from \cite{serfling_approximation_1980} again. Namely, letting $H(X_i, X_j) \coloneq \mathbb{E}\left[ h_2(X_i, X_j) h_2(X_i, X_j)' \mid \mathcal{X} \right]$, it is clear that (5.4.A.i) $X_i$ is i.i.d. (by assumption), (5.4.A.ii) $H(X_i, X_j) = H(X_j, X_i)$ since $h_2(X_i, X_j) = h_2(X_j, X_i)$, and (5.4.A.iii) $\mathbb{E}\!\left[\|H(X_i, X_j)\|\right] < \infty$ because:

\[
\begin{aligned}
    \|H(X_i, X_j)\| &= \|\mathbb{E}\left[ h_2(X_i, X_j) h_2(X_i, X_j)' \mid \mathcal{X}
    \right]\| \\
    &\leq \mathbb{E}\left[\| h_2(X_i, X_j) h_2(X_i, X_j)' \|\mid \mathcal{X}
    \right] \\
    &\leq \mathbb{E}\left[\| h_2(X_i, X_j) \|^2\mid \mathcal{X}
    \right]
\end{aligned}
\]
by Jensen's inequality. So, by the law of iterated expectations, $\mathbb{E}\!\left[\|H(X_i, X_j)\|\right] < \infty$ if $\mathbb{E}\!\left[\|h_2(X_i,X_j)\|^2\right] < \infty$, which we know holds if $\mathbb{E}[\|W_{ij}\|^4] < \infty$ from above.

Finally, we note that conditional convergence in distribution implies unconditional convergence in distribution here. Namely, because $\sqrt{N} \nabla_{\beta} \hat{Q}_n(\beta_0)$ is asymptotically normal conditional on $\mathcal{X}$, we have that:

$$
\mathbb{E}[f(\sqrt{N} \nabla_{\beta} \hat{Q}_n(\beta_0)) \mid \mathcal{X}] \to \mathbb{E}[f(Z) \mid \mathcal{X}]
$$
for any bounded and continuous function $f$, where $Z$ is normally distributed. Therefore, by the law of iterated expectations twice:
$$
\mathbb{E}[f(\sqrt{N} \nabla_{\beta} \hat{Q}_n(\beta_0))] = \mathbb{E}[\mathbb{E}[f(\sqrt{N} \nabla_{\beta} \hat{Q}_n(\beta_0)) \mid \mathcal{X}]] \to \mathbb{E}[\mathbb{E}[f(Z) \mid \mathcal{X}]] = \mathbb{E}[f(Z)],
$$
so $\sqrt{N} \nabla_{\beta} \hat{Q}_n(\beta_0)$ is asymptotically normal unconditionally, as required.

\end{proof}

\begin{proof}[Proof of Theorem 2]
 \vspace{1ex}  % forces a line break
We establish the conditions of Theorem 3.3 in NM. Only $X_i$ is assumed to be i.i.d., not $Y_{ij}$, but we know from Lemma 3 that $\sqrt{N} \nabla_{\beta} \hat{Q}_n(\beta_0)$ will still be asymptotically normal given the finite eighth moment condition (assumed). Since this is all that i.i.d-ness of $(Y_{ij}, X_i, X_j)$ is required for, we can replace this i.i.d-ness assumption with the weaker assumption that only $X_i$ is i.i.d. Consistency is guaranteed by Theorem 1 given the conditions of Theorem 2, noting that finite eighth moments imply finite second moments. So it remains to verify Conditions (3.3.i) - (3.3.v):

Condition (3.3.i) is assumed.

Condition (3.3.ii): It is easiest to express the likelihood as 
\[
f(Y_{ij} \mid X_i, X_j, \beta) = Y_{ij} \Phi\left(W_{ij}'\beta\right)^2 + (1-Y_{ij})\left(1 - \Phi\left(W_{ij}'\beta\right)^2\right)
\]
instead of 
\[
f(Y_{ij} \mid X_i, X_j, \beta) = \left[\Phi\left(W_{ij}'\beta\right)^2\right]^{Y_{ij}} \left[1 - \Phi\left(W_{ij}'\beta\right)^2\right]^{1 - Y_{ij}},
\]
which can be shown to be an equivalent formulation by considering the cases of $Y_{ij} = 1$ and $Y_{ij} = 0$ separately.
Next, we compute the first derivative of the likelihood:
\[
\nabla_{\beta} f(Y_{ij} \mid X_i, X_j,\beta) = (4Y_{ij} - 2) \Phi(W_{ij}'\beta) \phi(W_{ij}'\beta) W_{ij},
\]
and then the second derivative of the likelihood:
\[
\begin{aligned}
\nabla_{\beta \beta} f(Y_{ij} \mid X_i, X_j, \beta) &=
(4Y_{ij} - 2)
\left[\phi(W_{ij}'\beta)^2 +\Phi(W_{ij}'\beta)\,\phi_v(W_{ij}'\beta)\right]
\, W_{ij}W_{ij}'.
\end{aligned}
\]
which is clearly continuous since $\Phi, \phi,$ and $\phi_v$ are all continuous. Moreover, it is non-zero at $\beta_0$, thus establishing Condition (3.3ii).

Condition (3.3.iii): We computed $\nabla_{\beta} f(Y_{ij} \mid X_i, X_j,\beta)$ and $\nabla_{\beta \beta} f(Y_{ij} \mid X_i, X_j,\beta)$ above.
Since $Y_{ij} \in \{0, 1\}, 0 < \Phi(W_{ij}'\beta) < 1, 0 < \phi(W_{ij}'\beta) < 0.4$, and using Hermite polynomials $|\phi_v(W_{ij}'\beta)| < \phi(1) \approx 0.242$, we can bound the norm of both derivatives uniformly by $C(1 + \| W_{ij} \|^2)$ for some finite constant $C$. Then, as in NM's proof for the probit example, we have that:

\[
\int \sup_{\beta \in \mathcal{N}_0} \|\nabla_{\beta} f(Y_{ij} \mid X_i, X_j,\beta)\| dz \leq \int C(1 + \| W_{ij} \|^2) dz = 2C + 2C \mathbb{E} \left[ \|W_{ij} \|^2 \right],
\]
and that:
\[
\int \sup_{\beta \in \mathcal{N}_0} \|\nabla_{\beta \beta} f(Y_{ij} \mid X_i, X_j,\beta)\| dz \leq \int C(1 + \| W_{ij} \|^2) dz = 2C + 2C \mathbb{E} \left[ \|W_{ij} \|^2 \right].
\]
Thus, Condition (3.3.iii) holds provided $\mathbb{E} \left[ \|W_{ij} \|^2 \right] < \infty$, which is guaranteed by the assumption $\mathbb{E} \left[ \|W_{ij} \|^8 \right] < \infty$.

Condition (3.3.iv): In Lemma 3, we computed the first derivative of the log-likelihood:

\[
\begin{aligned}
\nabla_{\beta} \log f(Y_{ij} \mid X_i, X_j,\beta_0) &= 2 \phi(W_{ij}'\beta_0) \left[ \frac{Y_{ij}}{\Phi(W_{ij}'\beta_0)} - \frac{(1 - Y_{ij}) \Phi(W_{ij}'\beta_0)}{1 - \Phi(W_{ij}'\beta_0)^2} \right] W_{ij}.
\end{aligned}
\]

Then, letting $v = W_{ij}'\beta_0$ to simplify notation:
\[
\begin{aligned}
    J &\coloneq \mathbb{E}\left[\nabla_{\beta} \log f(Y_{ij} \mid X_i, X_j,\beta_0) \nabla_{\beta} \log f(Y_{ij} \mid X_i, X_j,\beta_0)^\prime\right] \\
      &= \mathbb{E}\left[4 \phi(v)^2 \left[ \frac{Y_{ij}}{\Phi(v)} - \frac{(1 - Y_{ij}) \Phi(v)}{1 - \Phi(v)^2} \right]^2 W_{ij} W_{ij}' \right] \\
      &= \mathbb{E}\left[4 \phi(v)^2 \left[ \frac{\Phi(v)^2}{\Phi(v)^2} - \frac{(1 - \Phi(v)^2) \Phi(v)^2}{(1 - \Phi(v)^2)^2} \right] W_{ij} W_{ij}' \right] \\
      &= \mathbb{E}\left[\frac{4\phi(v)^2}{1 - \Phi(v)^2} W_{ij} W_{ij}' \right].
\end{aligned}
\]
Because $0 < \phi(v) < 0.4$ and $0 < \Phi(v) < 1$, we have that $0 < \phi(v)^2 < 0.16$ and $0 < 1 - \Phi(v)^2 < 1$, so $\frac{4\phi(v)^2}{1 - \Phi(v)^2}$ is positive and finite. So existence and nonsingularity of $\mathbb{E} [W_{ij} W_{ij}']$ guarantees existence and nonsingularity of $J$, as required.

Condition (3.3.v): Given the formula for $\nabla_{\beta} \log f(Y_{ij} \mid X_i, X_j,\beta)$ above and letting $v = W_{ij}'\beta$ to simplify notation, we can compute the second derivative:

\[
\begin{aligned}
\nabla_{\beta \beta} \log f(Y_{ij} \mid X_i, X_j,\beta) &= 2 \left[ Y_{ij} \frac{\Phi(v)\phi_v(v) - \phi(v)^2}{\Phi(v)^2} \right. \\
&\quad\left. - (1 - Y_{ij}) \frac{(1 - \Phi(v)^2)(\Phi(v)\phi_v(v) + \phi(v)^2)}{(1 - \Phi(v)^2)^2}\right. \\
&\quad\left. + (1 - Y_{ij}) \frac{2\phi(v)^2\Phi(v)^2}{(1 - \Phi(v)^2)^2}\right] \cdot W_{ij} W_{ij}'.
\end{aligned}
\]
Then it is simple to bound the term in square brackets since $Y_{ij} \in \{0, 1\}, 0 < \Phi(v) < 1, 0 < \phi(v) < 0.4,$ and $|\phi_v(v)| < 0.25$. Finally, we see that $\mathbb{E} [W_{ij} W_{ij}'] < \infty$ guarantees $\mathbb{E}\left[\sup_{\beta \in \mathcal{N}_0} \|\nabla_{\beta\beta} \log f(Y_{ij} \mid X_i, X_j,\beta)\|\right] < \infty$, as required.

\end{proof}

\vspace{0.5cm}

\begin{proof}[Proof of Proposition 1]
 \vspace{1ex}  % forces a line break
  Identification requires that if:
  \[
  \begin{aligned}
    \Phi (\beta_0 + \beta_1 W_{1ij} + ... + \beta_{k-1} W_{k-1ij})
    \Phi (\beta_0 + \beta_1 W_{1ji} + ... + \beta_{k-1} W_{k-1ji})
    &= \\
    \Phi (\tilde{\beta}_0 + \tilde{\beta}_1 W_{1ij} + ... + \tilde{\beta}_{k-1} W_{k-1ij})
    \Phi (\tilde{\beta}_0 + \tilde{\beta}_1 W_{1ji} + ... + \tilde{\beta}_{k-1} W_{k-1ji}),
  \end{aligned}
  \]
  then $\beta = \tilde{\beta}$. Assume the premise holds and vary the values of $(W_{1ij}, W_{1ji}, ..., W_{k-1ij}, W_{k-1ji})$ to derive a system of equations of the form say $\Phi (\beta_0 + \beta_1) \Phi (\beta_0 + \beta_2) =
    \Phi (\tilde{\beta}_0 + \tilde{\beta}_1) \Phi (\tilde{\beta}_0 + \tilde{\beta}_2 )$.

  WLOG re-order the regressors so that the first $p$ regressors (after the constant) are the asymmetric ones, i.e. those for which $W_{sij} \neq W_{sji} \ \forall ij$. By assumption, these can all take on at least 3 distinct values. For now, suppose that they can all take on exactly 3 distinct values $\{0, 1, 2\}$. Also suppose that the remaining $k - 1 - p$ symmetric regressors, i.e. those for which $W_{sij} = W_{sji} \ \forall ij$, can take on exactly 2 distinct values $\{0, 1\}$.

  Set all symmetric regressors equal to 0. Then, using the fact that $\mathbb{E} [W_{ij} W_{ij}']$ and $\mathbb{E} [W_{ji} W_{ji}']$ are nonsingular so that there is no perfect multicollinearity, we can vary the values of the asymmetric regressors to derive the following triplet of equations, for example:
\begin{enumerate}
    \item 
    \[
    \begin{aligned}
        &\Phi (\beta_0 + \beta_1 + \beta_2 + \beta_4 + ... + \beta_{p-1}) 
        \Phi (\beta_0 + 2\beta_1 + 2\beta_2 + ... + 2\beta_p) = \\
        &\Phi (\tilde{\beta}_0 + \tilde{\beta}_1 + \tilde{\beta}_2 + \tilde{\beta}_4 + ... + \tilde{\beta}_{p-1})
        \Phi (\tilde{\beta}_0 + 2\tilde{\beta}_1 + 2\tilde{\beta}_2 + ... + 2\tilde{\beta}_p)
    \end{aligned}
    \]
    \item 
    \[
    \begin{aligned}
        &\Phi (\beta_0 + \beta_3 + \beta_5 + ... + \beta_{p}) 
        \Phi (\beta_0 + 2\beta_1 + 2\beta_2 + ... + 2\beta_p) = \\
        &\Phi (\tilde{\beta}_0 + \tilde{\beta}_3 + \tilde{\beta}_5 + ... + \tilde{\beta}_{p})
        \Phi (\tilde{\beta}_0 + 2\tilde{\beta}_1 + 2\tilde{\beta}_2 + ... + 2\tilde{\beta}_p)
    \end{aligned}
    \]
    \item 
    \[
    \begin{aligned}
        &\Phi (\beta_0 + \beta_1 + \beta_2 + \beta_4 + ... + \beta_{p-1}) 
        \Phi (\beta_0 + \beta_3 + \beta_5 + ... + \beta_{p}) = \\
        &\Phi (\tilde{\beta}_0 + \tilde{\beta}_1 + \tilde{\beta}_2 + \tilde{\beta}_4 + ... + \tilde{\beta}_{p-1})
        \Phi (\tilde{\beta}_0 + \tilde{\beta}_3 + \tilde{\beta}_5 + ... + \tilde{\beta}_{p}).
    \end{aligned}
    \]
\end{enumerate}

 This construction means that we can take the ratio of equation 1 to equation 2 so that $\Phi(\beta_0 + 2\beta_1 + 2\beta_2 + ... + 2\beta_p )$ and $\Phi (\tilde{\beta}_0 + 2\tilde{\beta}_1 + 2\tilde{\beta}_2 + ... + 2\tilde{\beta}_p)$ cancel, and then combine this ratio with equation 3 to show that:
 $$
 \Phi (\beta_0 + \beta_1 + \beta_2 + \beta_4 + ... + \beta_{p-1})^2 = \Phi (\tilde{\beta}_0 + \tilde{\beta}_1 + \tilde{\beta}_2 + \tilde{\beta}_4 + ... + \tilde{\beta}_{p-1})^2.
 $$
And therefore, by strict monotonicity of $\Phi$, $\beta_0 + \beta_1 + \beta_2 + \beta_4 + ... + \beta_{p-1} = \tilde{\beta}_0 + \tilde{\beta}_1 + \tilde{\beta}_2 + \tilde{\beta}_4 + ... + \tilde{\beta}_{p-1}$.
 
 Now construct a similar triplet of equations where the only difference is that $\beta_1$ appears in equation 2 instead of equation 1 now, and therefore $\beta_1$ should appear in the second $\Phi$ in equation 3 instead of the first $\Phi$:

 \begin{enumerate}
    \item 
    \[
    \begin{aligned}
        &\Phi (\beta_0 + \beta_2 + \beta_4 + ... + \beta_{p-1}) 
        \Phi (\beta_0 + 2\beta_1 + 2\beta_2 + ... + 2\beta_p) = \\
        &\Phi (\tilde{\beta}_0 + \tilde{\beta}_2 + \tilde{\beta}_4 + ... + \tilde{\beta}_{p-1})
        \Phi (\tilde{\beta}_0 + 2\tilde{\beta}_1 + 2\tilde{\beta}_2 + ... + 2\tilde{\beta}_p)
    \end{aligned}
    \]
    \item 
    \[
    \begin{aligned}
        &\Phi (\beta_0 + \beta_1 + \beta_3 + \beta_5 + ... + \beta_{p}) 
        \Phi (\beta_0 + 2\beta_1 + 2\beta_2 + ... + 2\beta_p) = \\
        &\Phi (\tilde{\beta}_0 + \tilde{\beta}_1 + \tilde{\beta}_3 + \tilde{\beta}_5 + ... + \tilde{\beta}_{p})
        \Phi (\tilde{\beta}_0 + 2\tilde{\beta}_1 + 2\tilde{\beta}_2 + ... + 2\tilde{\beta}_p)
    \end{aligned}
    \]
    \item 
    \[
    \begin{aligned}
        &\Phi (\beta_0 + \beta_2 + \beta_4 + ... + \beta_{p-1}) 
        \Phi (\beta_0 + \beta_1 + \beta_3 + \beta_5 + ... + \beta_{p}) = \\
        &\Phi (\tilde{\beta}_0 + \tilde{\beta}_2 + \tilde{\beta}_4 + ... + \tilde{\beta}_{p-1})
        \Phi (\tilde{\beta}_0 + \tilde{\beta}_1 + \tilde{\beta}_3 + \tilde{\beta}_5 + ... + \tilde{\beta}_{p}).
    \end{aligned}
    \]
\end{enumerate}

 Following the same procedure above, we can derive $\beta_0 + \beta_2 + \beta_4 + ... + \beta_{p-1} = \tilde{\beta}_0 + \tilde{\beta}_2 + \tilde{\beta}_4 + ... + \tilde{\beta}_{p-1}$. Combined with $\beta_0 + \beta_1 + \beta_2 + \beta_4 + ... + \beta_{p-1} = \tilde{\beta}_0 + \tilde{\beta}_1 + \tilde{\beta}_2 + \tilde{\beta}_4 + ... + \tilde{\beta}_{p-1}$ from above, we have that $\beta_1 = \tilde{\beta}_1$. In a similar manner, by varying which equations $\beta_s$ for $s = 1, ..., p$ appears in, we can show that $\beta_0 = \tilde{\beta}_0$, $\beta_1 = \tilde{\beta}_1$, ..., $\beta_p = \tilde{\beta}_p$. 
 
 The procedure for the symmetric regressors is simple: since we know $\beta_0 = \tilde{\beta}_0$, $\beta_1 = \tilde{\beta}_1$, ..., $\beta_p = \tilde{\beta}_p$, by switching each symmetric regressor from 0 to 1 one at a time we can show that $\beta_{p+1} = \tilde{\beta}_{p+1}$, ..., $\beta_{k-1} = \tilde{\beta}_{k-1}$. In this way, we have shown that $\beta_{0} = \tilde{\beta}_{0}$, ..., $\beta_{k-1} = \tilde{\beta}_{k-1}$, i.e. $\beta = \tilde{\beta}$, as is required for identification.

 If the asymmetric regressors can take on more than 3 distinct values and/or the symmetric regressors can take on more than 2 distinct values, we obtain additional $\Phi$-equations that we do not need. We can simply discard these equations and use the earlier equations that led to the contradiction. Moreover, the analysis is identical if the regressors take on values besides 0, 1, and 2, just with different multipliers on the betas. Thus, we have shown identification when all asymmetric regressors can take on at least 3 distinct values.

\end{proof}

\begin{lemma}
    Consider the NTU model with $\rho = 0$ but now allow for $W_{ij} \neq W_{ji}$ for some pairs $ij$. If $\Theta$ is bounded, $\beta$ is identified, and both $\mathbb{E} [W_{ij} W_{ij}']$ and $\mathbb{E} [W_{ji} W_{ji}']$ exist, then $Q_0 (\beta) \coloneq \mathbb{E} [ \log f(Y_{ij} \mid X_i, X_j, \beta) ]$ has a unique maximum at $\beta_0$.
\end{lemma}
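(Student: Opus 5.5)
Following the approach of Lemma 1, I will invoke Lemma 2.2 in NM (the information inequality). That result says that if $\beta_0$ is identified, in the sense that $\beta \neq \beta_0$ implies $f(Y_{ij} \mid X_i, X_j, \beta) \neq f(Y_{ij} \mid X_i, X_j, \beta_0)$ with positive probability, and if $\mathbb{E}[|\log f(Y_{ij} \mid X_i, X_j, \beta)|] < \infty$ for all $\beta \in \Theta$, then $Q_0(\beta)$ has a unique maximum at $\beta_0$.

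Identification is assumed directly in the statement, so the only work is the integrability condition. With asymmetric regressors the log-likelihood is
\[
\log f(Y_{ij} \mid X_i, X_j, \beta) = Y_{ij}\big[\log \Phi(W_{ij}'\beta) + \log \Phi(W_{ji}'\beta)\big] + (1-Y_{ij})\log\big(1 - \Phi(W_{ij}'\beta)\Phi(W_{ji}'\beta)\big).
\]
I bound the $Y_{ij}=1$ and $Y_{ij}=0$ terms separately.

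The $Y_{ij}=1$ term is a sum of two univariate pieces. For each, I use the same mean-value expansion around $\beta=0$ as in Term 1 of Lemma 1, together with the NM bound $\phi(v)/\Phi(v) \le C(1+|v|)$. This gives
\[
|\log \Phi(W_{ij}'\beta)| \le |\log \Phi(0)| + C(1+\|W_{ij}\|\|\tilde\beta\|)\|W_{ij}\|\|\beta\|,
\]
and the analogous bound with $W_{ji}$ in place of $W_{ij}$.

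The $Y_{ij}=0$ term is the main obstacle, because $\Phi(a)\Phi(b)$ is not a function of a single index. I have two routes.

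\emph{Monotonicity route (preferred).} Since $\Phi(W_{ji}'\beta) \le 1$,
\[
0 \ge \log\big(1-\Phi(W_{ij}'\beta)\Phi(W_{ji}'\beta)\big) \ge \log\big(1-\Phi(W_{ij}'\beta)\big) = \log \Phi(-W_{ij}'\beta).
\]
This reduces the term to a univariate probit piece, which is bounded exactly as above by a constant plus $C(1+\|W_{ij}\|\|\tilde\beta\|)\|W_{ij}\|\|\beta\|$. It is clean and avoids the cross term entirely.

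\emph{Fallback route.} Take a mean-value expansion in $\beta$ of the bivariate expression. The gradient is
\[
-\frac{\phi(a)\Phi(b)W_{ij} + \Phi(a)\phi(b)W_{ji}}{1-\Phi(a)\Phi(b)}, \qquad a = W_{ij}'\tilde\beta,\; b = W_{ji}'\tilde\beta,
\]
and one uses
\[
\frac{\phi(a)\Phi(b)}{1-\Phi(a)\Phi(b)} \le \frac{\phi(a)}{1-\Phi(a)} = \frac{\phi(-a)}{\Phi(-a)} \le C(1+|a|),
\]
which holds because $1-\Phi(a)\Phi(b) \ge 1-\Phi(a)$ and $\Phi(b) \le 1$; the $b$ term is symmetric.

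Combining the pieces via the triangle inequality and $Y_{ij} \in \{0,1\}$ gives
\[
|\log f| \le K + C'\big(\|W_{ij}\| + \|W_{ji}\| + \|W_{ij}\|^2 + \|W_{ji}\|^2\big)
\]
for constants depending only on $\sup_{\beta \in \Theta}\|\beta\|$, which is finite because $\Theta$ is bounded. Taking expectations, this is finite given existence of $\mathbb{E}[W_{ij}W_{ij}']$ and $\mathbb{E}[W_{ji}W_{ji}']$; first moments are controlled by second moments via Cauchy–Schwarz. This verifies the conditions of NM Lemma 2.2 and yields the unique maximum at $\beta_0$.
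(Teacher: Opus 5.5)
Your proposal is correct and follows the paper's proof. You invoke NM Lemma 2.2 with identification assumed, then verify $\mathbb{E}[|\log f(Y_{ij}\mid X_i,X_j,\beta)|]<\infty$ via mean-value expansions around $\beta=0$ and the bound $\phi(v)/\Phi(v)\le C(1+|v|)$, and your fallback for the $Y_{ij}=0$ term is exactly the paper's argument, using the same inequality $\Phi(b)\phi(a)/(1-\Phi(a)\Phi(b))\le\phi(-a)/\Phi(-a)$; your preferred monotonicity step, $0\ge\log(1-\Phi(a)\Phi(b))\ge\log\Phi(-a)$, is also valid and is a slightly cleaner shortcut that reduces the bivariate term directly to a univariate probit term without expanding it.
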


\begin{proof}[Proof of Lemma 4]
 \vspace{1ex}  % forces a line break
 Identification is assumed, so we remain to show that $\mathbb{E} [| \log f(Y_{ij} \mid X_i, X_j, \beta) |] < \infty$ for all $\beta \in \Theta$.
 \[
 \begin{aligned}
  \log f(Y_{ij} \mid X_i, X_j, \beta) &= Y_{ij} \log\left(\Phi(W_{ij}'\beta)\Phi(W_{ji}'\beta)\right) + (1-Y_{ij})\log\left(1 - \Phi(W_{ij}'\beta) \Phi(W_{ji}'\beta\right)),
  \end{aligned}
  \]
  so consider the $\log\left(\Phi(W_{ij}'\beta)\Phi(W_{ji}'\beta)\right) =  \log\left(\Phi(W_{ij}'\beta)\right) + \log\left(\Phi(W_{ji}'\beta)\right)$ and $\log(1 - \Phi(W_{ij}'\beta) \\ \Phi(W_{ji}'\beta))$ terms separately:
  \begin{description}
     \item[Term 1: $g(\beta) \coloneq \log\left(\Phi\left(W_{ij}'\beta\right)\right) + \log\left(\Phi\left(W_{ji}'\beta\right)\right)$]
     \ \\
     A first-order Taylor expansion of $\log\left(\Phi\left(W_{ij}'\beta\right)\right)$ around $\beta = 0$ yields: \\
     \[
     \log\left(\Phi\left(W_{ij}'\beta\right)\right) = \log(\Phi(0)) + \frac{\phi(W_{ij}'\tilde{\beta})}{\Phi(W_{ij}'\tilde{\beta})} W_{ij}'\beta,
     \]
     where $\tilde{\beta}$ is in between $\beta$ and 0. Likewise, a first-order Taylor expansion of $\log(\Phi(W_{ji}'\beta))$ around $\beta = 0$ yields: \\
     \[
     \log\left(\Phi\left(W_{ji}'\beta\right)\right) = \log(\Phi(0)) + \frac{\phi(W_{ji}'\tilde{\beta})}{\Phi(W_{ji}'\tilde{\beta})} W_{ji}'\beta.
     \]
     So by the triangle inequality and using the bound from the symmetric case: \\
     \[
     \begin{aligned}
     |g(\beta)| &\leq |2 \log(\Phi(0))| + \frac{\phi(W_{ij}'\tilde{\beta})}{\Phi(W_{ij}'\tilde{\beta})} \left|W_{ij}'\beta\right| + \frac{\phi(W_{ji}'\tilde{\beta})}{\Phi(W_{ji}'\tilde{\beta})} \left|W_{ji}'\beta\right| \\
     &\leq |2 \log(\Phi(0))| + C (1 + |W_{ij}'\tilde{\beta}|) \left|W_{ij}'\beta\right| + C (1 + |W_{ji}'\tilde{\beta}|) \left|W_{ji}'\beta\right|.
     \end{aligned}
     \]
     \item[Term 2: $h(\beta) \coloneq \log\left(1 - \Phi\left(W_{ij}'\beta\right) \Phi\left(W_{ji}'\beta\right)\right)$]
     \ \\
     Then a first-order Taylor expansion around $\beta = 0$ yields: \\
     \[
     \begin{aligned}
     h(\beta) & = \log(1 - \Phi(0)^2) - \frac{\Phi(W_{ji}'\tilde{\beta}) \phi(W_{ij}'\tilde{\beta})}{1 - \Phi(W_{ij}'\tilde{\beta}) \Phi(W_{ji}'\tilde{\beta})} W_{ij}'\beta - \frac{\Phi(W_{ij}'\tilde{\beta}) \phi(W_{ji}'\tilde{\beta})}{1 - \Phi(W_{ij}'\tilde{\beta}) \Phi(W_{ji}'\tilde{\beta})} W_{ji}'\beta.
     \end{aligned} 
     \]
     To bound the absolute values of the second and third terms we use the fact that: \\
     \[
     \frac{\Phi(v_2) \phi(v_1)}{1 - \Phi(v_1)\Phi(v_2)} \leq \frac{\phi(-v_1)}{\Phi(-v_1)} \leq C (1 + |-v_1|) = C (1 + |v_1|),
      \]
     (and likewise with $v_1$ and $v_2$ interchanged) which is easy to show given that $\phi(-v_1) = \phi(v_1)$, $\Phi(-v_1) = 1 - \Phi(v_1)$, and $\Phi(v_2) \leq 1$. \\
     So given this bound and using the triangle inequality again, we have: \\
     \[
     \begin{aligned}
     \left|h(\beta)\right| & \leq \left|\log(1 - \Phi(0)^2) \right| + \frac{\Phi(W_{ji}'\tilde{\beta}) \phi(W_{ij}'\tilde{\beta})}{1 - \Phi(W_{ij}'\tilde{\beta}) \Phi(W_{ji}'\tilde{\beta})} \left|W_{ij}'\beta\right| + \frac{\Phi(W_{ij}'\tilde{\beta}) \phi(W_{ji}'\tilde{\beta})}{1 - \Phi(W_{ij}'\tilde{\beta}) \Phi(W_{ji}'\tilde{\beta})} \left|W_{ji}'\beta\right| \\
     & \leq \left|\log(1 - \Phi(0)^2) \right| + C (1 + \|W_{ij}'\tilde{\beta}\|) \left|W_{ij}'\beta\right| + C (1 + \|W_{ji}'\tilde{\beta}\|) \left|W_{ji}'\beta\right|.
     \end{aligned} 
     \] 
     \item[Combining Terms:]
     \ \\
     Use the triangle inequality again and the fact that $Y_{ij} \in \{0, 1\}$ to establish: 
     \[
\begin{aligned}
     \left| \log f(Y_{ij} \mid X_i, X_j, \beta) \right| &\leq \left|\log\left(\Phi\left(W_{ij}'\beta\right)\Phi\left(W_{ji}'\beta\right)\right)\right| + \left|\log\left(1 - \Phi\left(W_{ij}'\beta\right) \Phi\left(W_{ji}'\beta\right)\right)\right| \\
     &\leq |2 \log(\Phi(0))| + \left|\log\left(1 - \Phi(0)^2\right)\right| + 2 C (1 + |W_{ij}'\tilde{\beta}|) \left|W_{ij}'\beta\right| \\
     & \quad+ 2 C (1 + |W_{ji}'\tilde{\beta}|) \left|W_{ji}'\beta\right| \\
     &\leq |2 \log(\Phi(0))| + \left|\log\left(1 - \Phi(0)^2\right)\right| \\
     & \quad+ 2 C (1 + \norm{W_{ij}} \norm{\tilde{\beta}}) \norm{W_{ij}} \norm{\beta} \\
     & \quad+ 2 C (1 + \norm{W_{ji}} \norm{\tilde{\beta}}) \norm{W_{ji}} \norm{\beta}.
\end{aligned}
     \]
     Finally, after taking the expectation of both sides and noting that all other terms are finite because $\Theta$ is bounded, we see that $\mathbb{E} [| \log f(Y_{ij} \mid X_i, X_j, \beta) |] < \infty$ provided both $\mathbb{E} [W_{ij} W_{ij}'] < \infty$ and $\mathbb{E} [W_{ji} W_{ji}'] < \infty$, as required.
  \end{description}
\end{proof}

\begin{lemma}
    Consider the NTU model with $\rho = 0$ but now allow for $W_{ij} \neq W_{ji}$ for some pairs $ij$. If $\Theta$ is compact, $X_i$ is i.i.d., and both $\mathbb{E} [W_{ij} W_{ij}']$ and $\mathbb{E} [W_{ji} W_{ji}']$ exist, then Conditions (2.1.iii) and (2.1.iv) hold.
\end{lemma}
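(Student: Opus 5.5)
We verify the four conditions of Theorem 7 in \cite{nolan_u-processes_1987} as listed in the main text. Continuity of $Q_0$ (Condition 2.1.iii) then falls out of the same bounds via dominated convergence.

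\emph{Symmetrization.} The kernel $\log f(Y_{ij}\mid X_i,X_j,\beta)$ need not be symmetric in $(X_i,X_j)$ when $W_{ij}\neq W_{ji}$. It also depends on the dyad error $(\varepsilon_{ij},\varepsilon_{ji})$, not only on $(X_i,X_j)$. Both issues are handled at once. Set $Z_i = X_i$ and attach independent error pairs $\xi_{ij}=(\varepsilon_{ij},\varepsilon_{ji})$ to dyads. We then treat $\hat Q_n$ as the U-process indexed by $\beta$ with kernel
\[
k_\beta(X_i,X_j,\xi_{ij}) = \tfrac12\big[\ell_\beta(X_i,X_j,\xi_{ij})+\ell_\beta(X_j,X_i,\tilde\xi_{ij})\big],
\]
where $\tilde\xi_{ij}$ swaps the components. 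Because $Y_{ij}=Y_{ji}$ and the product $\Phi(W_{ij}'\beta)\Phi(W_{ji}'\beta)$ is invariant under swapping $i$ and $j$, the log-likelihood summand is in fact already symmetric. The step is therefore only bookkeeping. Nolan–Pollard-type ULLNs tolerate such dyad-level i.i.d. marks independent of the $X$'s, so condition (7.i) holds.

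\emph{Envelope (7.ii).} This step is immediate from the proof of Lemma 4. There, $|\log f|$ was bounded by
\[
d(X_i,X_j)=c_0+2C(1+\|W_{ij}\|B)\|W_{ij}\|B+2C(1+\|W_{ji}\|B)\|W_{ji}\|B,
\]
with $B=\sup_{\Theta}\|\beta\|<\infty$ by compactness. This bound is uniform in $\beta$, since $\tilde\beta$ also lies in the ball of radius $B$ (assuming, or replacing $\Theta$ by, its convex hull). Also $\mathbb E d<\infty$ under finite second moments of $W_{ij}$ and $W_{ji}$. Condition (7.iii) is just compactness of $\Theta\subset\mathbb R^k$.

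\emph{Lipschitz (7.iv).} This is the main step. By the mean value theorem,
\[
|\log f(\cdot,\beta_1)-\log f(\cdot,\beta_2)|\le \sup_{\bar\beta}\|\nabla_\beta \log f(\cdot,\bar\beta)\|\,\|\beta_1-\beta_2\|.
\]
The gradient is the sum of $\frac{\phi(W_{ij}'\beta)}{\Phi(W_{ij}'\beta)}W_{ij}$ and its $ji$ analogue when $Y=1$. When $Y=0$, it is minus $\frac{\Phi(W_{ji}'\beta)\phi(W_{ij}'\beta)}{1-\Phi\Phi}W_{ij}$ plus the analogous term. Each coefficient is bounded by $C(1+|v|)$ using the Mills-ratio bounds displayed in Lemma 4. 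This gives the envelope
\[
G(X_i,X_j)=2C\big[(1+B\|W_{ij}\|)\|W_{ij}\|+(1+B\|W_{ji}\|)\|W_{ji}\|\big],
\]
which is integrable under (3d). The function class is then Lipschitz in a finite-dimensional parameter with an integrable envelope, so its bracketing/covering numbers are polynomial, which supplies the entropy conditions. Hence $\sup_\Theta|\hat Q_n-Q_0|\to_p 0$. I expect the delicate points to be exactly two. The first is checking that the $Y=0$ coefficient bound from Lemma 4 holds uniformly, which it does since it does not depend on $\beta$. The second is justifying the extra dyad-level randomness in the U-process theorem. If the latter is troublesome, I would condition on $\mathcal X$ and split $\hat Q_n-Q_0$ into two parts. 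The first is a U-process in the $X$'s alone with kernel $\mathbb E[\ell_\beta\mid X_i,X_j]$, handled by Nolan–Pollard. The second is a conditionally independent, mean-zero average, whose uniform convergence follows from the same Lipschitz envelope and a pointwise Chebyshev bound combined with a finite-net argument.

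\emph{Continuity (2.1.iii).} $\beta\mapsto\log f$ is continuous for each data point, and it is dominated by $d$. Dominated convergence then gives continuity of $Q_0$. Alternatively, $|Q_0(\beta_1)-Q_0(\beta_2)|\le \mathbb E[G]\,\|\beta_1-\beta_2\|$ directly.
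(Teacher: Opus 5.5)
Your proposal is correct and follows the paper's proof essentially step for step. Both verify (7.i)--(7.iv) of Nolan--Pollard using the Lemma 4 envelope and a mean-value-theorem Lipschitz bound whose integrable envelope $G(X_i,X_j)$ comes from the Mills-ratio bounds, and both then get continuity of $Q_0$ from $|Q_0(\beta)-Q_0(\beta')|\le \mathbb{E}[G]\,\|\beta-\beta'\|$. Your extra care about the dyad-level errors entering through $Y_{ij}$, which the paper's proof quietly folds into a kernel in $(X_i,X_j)$ alone, is a sound refinement rather than a departure, and so is your fallback of conditioning on $\mathcal{X}$; the convex-hull caveat is unnecessary, since $\log f$ is defined on all of $\mathbb{R}^k$ and segments between points of norm at most $B$ stay in that ball.
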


\begin{proof}[Proof of Lemma 5]
 \vspace{1ex}  % forces a line break
    We begin by establishing the conditions of Theorem 7 in \cite{nolan_u-processes_1987}.

    Condition (7.i) that $X_i$ is i.i.d. is assumed.

    Condition (7.ii): For the existence of the dominating function $d(X_i, X_j)$, take $d(X_i, X_j) = |2 \log(\Phi(0))| + \left|\log\left(1 - \Phi(0)^2\right)\right| + 2 C (1 + \norm{W_{ij}}^2) + 2 C (1 + \norm{W_{ji}}^2)$, which we showed dominated $|\log f(Y_{ij} \mid X_i, X_j, \beta)|$ and had finite expectation under $\mathbb{E} [W_{ij} W_{ij}'] < \infty$, $\mathbb{E} [W_{ji} W_{ji}'] < \infty$, and boundedness of $\Theta$ (which is guaranteed under compactness of $\Theta$) in the proof of Lemma 4.

    Condition (7.iii): Compactness of $\Theta$ is assumed and its finite-dimensionality is immediate from the fact $\Theta \subset \mathbb{R}^k$ with $k < \infty$.

    Condition (7.iv): First, note that $\log f(Y_{ij} \mid X_i, X_j, \beta)$ is clearly differentiable (and hence continuous) in $\beta$ since $\Phi$ and $\log$ are both differentiable functions. Therefore, by the mean value theorem, there exists some $\tilde{\beta}$ in between $\beta$ and $\beta'$ such that:
    \[
    \log f(Y_{ij} \mid X_i, X_j, \beta) - \log f(Y_{ij} \mid X_i, X_j, \beta') = \langle\beta - \beta', \ \nabla_\beta \log f(Y_{ij} \mid X_i, X_j, \tilde{\beta})\rangle.
    \]
So, by the Cauchy-Schwarz inequality:
    \begin{align*}
    |\log f(Y_{ij} \mid X_i, X_j, \beta) - \log f(Y_{ij} \mid X_i, X_j, \beta')| &\leq \norm{\beta - \beta'} \cdot  \|\nabla_\beta \log f(Y_{ij} \mid X_i, X_j, \tilde{\beta})\|.
    \end{align*}
Then, letting $v_1 = W_{ij}'\tilde{\beta}$ and $v_2 = W_{ji}'\tilde{\beta}$ to simplify notation, it is straightforward to compute:
    \begin{align*}
    \nabla_{\beta} \log f(Y_{ij} \mid X_i, X_j,\tilde{\beta})
    &= Y_{ij}\frac{\phi(v_1)}{\Phi(v_1)}W_{ij}
    - (1 - Y_{ij}) \frac{\Phi(v_2)\phi(v_1)}{1 - \Phi(v_1)\Phi(v_2)}W_{ij} \\
    &\quad+ Y_{ij}\frac{\phi(v_2)}{\Phi(v_2)}W_{ji}
    - (1 - Y_{ij}) \frac{\Phi(v_1)\phi(v_2)}{1 - \Phi(v_1)\Phi(v_2)}W_{ji}.
    \end{align*}
As in the proof of Lemma 4, we can use the bounds $\frac{\phi(v_1)}{\Phi(v_1)} \leq C (1 + |v_1|)$ and $\frac{\Phi(v_2) \phi(v_1)}{1 - \Phi(v_1)\Phi(v_2)} \leq C (1 + |v_1|)$ (and likewise with $v_1$ and $v_2$ interchanged), the triangle inequality, and the fact that $Y_{ij} \in \{0, 1\}$ to obtain:
     \[
    \begin{aligned}
      \|\nabla_\beta \log f(Y_{ij} \mid X_i, X_j, \tilde{\beta})\| &\leq 2 C (1 + \norm{W_{ij}} \norm{\tilde{\beta}}) \norm{W_{ij}} + 2 C (1 + \norm{W_{ji}} \norm{\tilde{\beta}}) \norm{W_{ji}} \\
     & \equiv G(X_i, X_j),
    \end{aligned}
    \]
where $G(X_i, X_j)$ is integrable provided $\mathbb{E} [W_{ij} W_{ij}'] < \infty$ and $\mathbb{E} [W_{ji} W_{ji}'] < \infty$ again. Overall:
    \[
    |\log f(Y_{ij} \mid X_i, X_j, \beta) - \log f(Y_{ij} \mid X_i, X_j, \beta')| \leq \norm{\beta - \beta'} G(X_i, X_j),
    \]    
so $\log f(Y_{ij} \mid X_i, X_j, \beta)$ is globally Lipschitz (with $L = 1$) in $\beta$ with respect to the integrable envelope $G(X_i, X_j)$, as required. \\

    With the conditions of Theorem 7 in \cite{nolan_u-processes_1987} verified, we know that Condition (2.1.iv) holds. It remains to show that Condition (2.1.iii) holds too, i.e. that $Q_0(\beta) \coloneq \mathbb{E} [\log f(Y_{ij} \mid X_i, X_j, \beta)]$ is continuous in $\beta$. We will show that since $\log f(Y_{ij} \mid X_i, X_j, \beta)$ is Lipschitz in $\beta$, $\mathbb{E} [\log f(Y_{ij} \mid X_i, X_j, \beta)]$ is Lipschitz and therefore continuous in $\beta$. Use Jensen's inequality with the convex absolute value function, the fact that $\log f(Y_{ij} \mid X_i, X_j, \beta)$ is Lipschitz with integrable envelope $G(X_i, X_j)$, and the fact that integrating preserves inequalities, to show:
    \begin{align*}
    &|\mathbb{E}[\log f(Y_{ij} \mid X_i, X_j, \beta)] - \mathbb{E}[\log f(Y_{ij} \mid X_i, X_j, \beta')]| \\
    &\leq \mathbb{E}[|\log f(Y_{ij} \mid X_i, X_j, \beta) - \log f(Y_{ij} \mid X_i, X_j, \beta')|] \\
    &\leq \norm{\beta - \beta'} \mathbb{E}[G(X_i, X_j)],
    \end{align*}
so $\mathbb{E} [\log f(Y_{ij} \mid X_i, X_j, \beta)]$ is Lipschitz and therefore continuous in $\beta$, as required.
    
\end{proof}

\begin{lemma}
    Consider the NTU model with $\rho = 0$ but now allow for $W_{ij} \neq W_{ji}$ for some pairs $ij$. If $\Theta$ is bounded, $X_i$ is i.i.d., and both $\mathbb{E} [\|W_{ij}\|^8]$ and $\mathbb{E} [\|W_{ji}\|^8]$ exist, then $\sqrt{N} \nabla_{\beta} \hat{Q}_n(\beta_0) \coloneq \frac{1}{\sqrt{N}} \sum^{n-1}_{i=1} \sum^{n}_{j=i+1} \big[Y_{ij} \nabla_{\beta}\log \big(\Phi(W_{ij}^{'}\beta_0) \Phi(W_{ji}^{'}\beta_0)\big) + (1 - Y_{ij}) \nabla_{\beta}\log \big(1 - \Phi(W_{ij}^{'}\beta_0) \Phi(W_{ji}^{'}\beta_0) \big) \big]$ is asymptotically normal with mean zero.
\end{lemma}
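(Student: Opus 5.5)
The plan is to mirror the proof of Lemma 3. The only structural differences are that the per-dyad score now has two components, one in direction $W_{ij}$ and one in direction $W_{ji}$, and that symmetry of the kernel under swapping $i$ and $j$ must be checked directly rather than inherited from $W_{ij}=W_{ji}$.

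Let $\mathcal{X}=\{X_1,\dots,X_n\}$, $v_1=W_{ij}'\beta_0$, $v_2=W_{ji}'\beta_0$, and write the score kernel $h_3(X_i,X_j)$ using the gradient formula already computed in the proof of Lemma 5 (evaluated at $\beta_0$):
\[
h_3(X_i,X_j)=\Big[Y_{ij}\tfrac{\phi(v_1)}{\Phi(v_1)}-(1-Y_{ij})\tfrac{\Phi(v_2)\phi(v_1)}{1-\Phi(v_1)\Phi(v_2)}\Big]W_{ij}+\Big[Y_{ij}\tfrac{\phi(v_2)}{\Phi(v_2)}-(1-Y_{ij})\tfrac{\Phi(v_1)\phi(v_2)}{1-\Phi(v_1)\Phi(v_2)}\Big]W_{ji}.
\]
Since the errors are independent across dyads, the $h_3(X_i,X_j)$ are conditionally independent given $\mathcal{X}$. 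Moreover $\mathbb{E}[h_3\mid\mathcal{X}]=0$, because $\mathbb{E}[Y_{ij}\mid\mathcal{X}]=\Phi(v_1)\Phi(v_2)$ and each bracket then has conditional mean zero. I will therefore apply the Lindeberg--Feller CLT (Proposition 2.27 of van der Vaart) conditionally on $\mathcal{X}$, verify (2.27.i)--(2.27.iii), and finish with the same bounded-continuous-function and iterated-expectations argument used at the end of Lemma 3 to pass to the unconditional limit.

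\textbf{Moment bound.} Since $Y_{ij}(1-Y_{ij})=0$, each bracket is bounded in absolute value by the larger of its two terms. The bounds $\phi(v)/\Phi(v)\le C(1+|v|)$ and $\Phi(v_2)\phi(v_1)/(1-\Phi(v_1)\Phi(v_2))\le C(1+|v_1|)$ from Lemma 4 then give
\[
\|h_3\|\le C(1+\|\beta_0\|\|W_{ij}\|)\|W_{ij}\|+C(1+\|\beta_0\|\|W_{ji}\|)\|W_{ji}\|.
\]
Using $(a+b)^r\le 2^{r-1}(a^r+b^r)$, it follows that $\|h_3\|^2$ is bounded by a constant times $\sum_{W\in\{W_{ij},W_{ji}\}}(\|W\|^2+\|W\|^4)$, and $\|h_3\|^4$ by a constant times the corresponding sum of terms of order up to $\|W\|^8$. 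Boundedness of $\Theta$ and the eighth-moment assumption therefore give $\mathbb{E}\|h_3\|^2<\infty$ and $\mathbb{E}\|h_3\|^4<\infty$.

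\textbf{Conditions (2.27.i) and (2.27.ii).} The finite second moment yields (2.27.i) almost surely. For the Lindeberg condition (2.27.ii) I reuse the Cauchy--Schwarz plus Markov chain from Lemma 3: the Lindeberg term is bounded by $\varepsilon^{-1}N^{-3/2}\sum_{i<j}\mathbb{E}[\|h_3\|^4\mid\mathcal{X}]^{1/2}\mathbb{E}[\|h_3\|^2\mid\mathcal{X}]^{1/2}$. Applying Cauchy--Schwarz to the sum, this is at most $\varepsilon^{-1}N^{-1/2}$ times an average of conditional fourth moments times an average of conditional second moments, raised to the power $1/2$. Those averages converge almost surely by the U-statistic SLLN, so the whole bound is $O(N^{-1/2})\to0$.

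\textbf{Condition (2.27.iii), the main obstacle.} Here I need $N^{-1}\sum_{i<j}H(X_i,X_j)$ to converge, where $H(X_i,X_j)=\mathbb{E}[h_3h_3'\mid\mathcal{X}]$. Serfling's Theorem 5.4.A requires a symmetric kernel, and symmetry is no longer automatic. I will check it directly: $Y_{ij}=Y_{ji}$, and swapping $i$ and $j$ exchanges $(v_1,W_{ij})$ with $(v_2,W_{ji})$, which simply permutes the two summands in $h_3$. Hence $h_3(X_i,X_j)=h_3(X_j,X_i)$, and so $H$ is symmetric. (Should this fail in some variant of the model, I would fall back to the symmetrized kernel $\tfrac12[H(X_i,X_j)+H(X_j,X_i)]$, which has the same U-statistic.) Integrability follows from $\|H\|\le\mathbb{E}[\|h_3\|^2\mid\mathcal{X}]$ together with the moment bound above. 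The SLLN for U-statistics then gives almost sure convergence to $\Sigma_2=\mathbb{E}[h_3h_3']$. This completes the conditional CLT and hence the result.
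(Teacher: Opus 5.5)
Your proposal is correct and follows essentially the same route as the paper. The paper reruns the conditional Lindeberg--Feller argument of Lemma 3 with $h_3$ in place of $h_2$, and checks that $\mathbb{E}[\|h_3(X_i,X_j)\|^4]<\infty$ under the eighth-moment assumption using the same $C(1+|v|)$ bounds. You add two things the paper leaves implicit: an explicit check that $h_3(X_i,X_j)=h_3(X_j,X_i)$, which Serfling's SLLN needs and which Lemma 3 only justified via $W_{ij}=W_{ji}$, and a Cauchy--Schwarz step showing the Lindeberg bound is $O(N^{-1/2})$.
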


\begin{proof}[Proof of Lemma 6]
 \vspace{1ex}  % forces a line break
We replicate the proof of Lemma 3 but with $h_3(X_i, X_j) \coloneq Y_{ij} \nabla_{\beta} \log \\\big(\Phi(W_{ij}^{'}\beta_0) \Phi(W_{ji}^{'}\beta_0) \big) + (1 - Y_{ij}) \nabla_{\beta} \log \big(1 - \Phi(W_{ij}^{'}\beta_0) \Phi(W_{ji}^{'}\beta_0) \big)$ in place of $h_2(X_i, X_j)$. This means we only need to verify that $\mathbb{E}\!\left[\|h_3(X_i,X_j)\|^4\right] < \infty$ if $\mathbb{E} [\|W_{ij}\|^8] < \infty$, $\mathbb{E} [\|W_{ji}\|^8] < \infty$, and $\Theta$ is bounded.

Let $V_{ij} := W_{ij}'\beta_0$ and $V_{ji} := W_{ji}'\beta_0$. Then:
\[
\nabla_{\beta} \log \big(\Phi(V_{ij})\Phi(V_{ji}) \big) 
= \frac{\phi(V_{ij})}{\Phi(V_{ij})} W_{ij} + \frac{\phi(V_{ji})}{\Phi(V_{ji})} W_{ji},
\]
and
\[
\nabla_{\beta} \log \big(1 - \Phi(V_{ij})\Phi(V_{ji}) \big) 
=
-\frac{\Phi(V_{ji}) \phi(V_{ij})}{1 - \Phi(V_{ij})\Phi(V_{ji})} W_{ij}
-
\frac{\Phi(V_{ij}) \phi(V_{ji})}{1 - \Phi(V_{ij})\Phi(V_{ji})} W_{ji}.
\]
Hence:
\[
\begin{aligned}
h_3(X_i, X_j)
=
&Y_{ij} \left( \frac{\phi(V_{ij})}{\Phi(V_{ij})} W_{ij} + \frac{\phi(V_{ji})}{\Phi(V_{ji})} W_{ji} \right) \\
& - (1 - Y_{ij}) \left( 
\frac{\Phi(V_{ji}) \phi(V_{ij})}{1 - \Phi(V_{ij})\Phi(V_{ji})} W_{ij}
+
\frac{\Phi(V_{ij}) \phi(V_{ji})}{1 - \Phi(V_{ij})\Phi(V_{ji})} W_{ji}
\right).
\end{aligned}
\]
Taking squared Euclidean norms gives:
\[
\begin{aligned}
\|h_3(X_i, X_j)\|^2
&=
\left\|
Y_{ij} A_{ij} - (1 - Y_{ij}) B_{ij}
\right\|^2 \\
&=
Y_{ij}^2 \|A_{ij}\|^2 + (1 - Y_{ij})^2 \|B_{ij}\|^2
- 2 Y_{ij}(1 - Y_{ij}) \langle A_{ij}, B_{ij} \rangle,
\end{aligned}
\]
where
\[
A_{ij} := \frac{\phi(V_{ij})}{\Phi(V_{ij})} W_{ij} + \frac{\phi(V_{ji})}{\Phi(V_{ji})} W_{ji},
\quad
B_{ij} :=
\frac{\Phi(V_{ji}) \phi(V_{ij})}{1 - \Phi(V_{ij})\Phi(V_{ji})} W_{ij}
+
\frac{\Phi(V_{ij}) \phi(V_{ji})}{1 - \Phi(V_{ij})\Phi(V_{ji})} W_{ji}.
\]
Since $Y_{ij} \in \{0,1\}$, we have $Y_{ij}(1 - Y_{ij}) = 0$, so the cross term vanishes. Moreover, $Y_{ij}^2 \le 1$ and $(1 - Y_{ij})^2 \le 1$,
so:
\[
\|h_3(X_i, X_j)\|^2
\le
\|A_{ij}\|^2 + \|B_{ij}\|^2.
\]
Using $\|a + b\|^2 \le 2\|a\|^2 + 2\|b\|^2$, we obtain:
\[
\begin{aligned}
\|A_{ij}\|^2
&\le
2 \left[\frac{\phi(V_{ij})}{\Phi(V_{ij})}\right]^2 \|W_{ij}\|^2
+
2 \left[\frac{\phi(V_{ji})}{\Phi(V_{ji})}\right]^2 \|W_{ji}\|^2,
\\
\|B_{ij}\|^2
&\le
2 \left[\frac{\Phi(V_{ji}) \phi(V_{ij})}{1 - \Phi(V_{ij})\Phi(V_{ji})}\right]^2 \|W_{ij}\|^2
+
2 \left[\frac{\Phi(V_{ij}) \phi(V_{ji})}{1 - \Phi(V_{ij})\Phi(V_{ji})}\right]^2 \|W_{ji}\|^2.
\end{aligned}
\]
As shown in the proof of Lemma 4, the ratios are bounded by $C(1 + |v|)$, so their squares are bounded by $D(1 + |v|^2)$. Hence:
\[
\|h_3(X_i, X_j)\|^2
\le
4D(1 + |V_{ij}|^2)\|W_{ij}\|^2
+
4D(1 + |V_{ji}|^2)\|W_{ji}\|^2.
\]
Using $|V_{ij}|^2 \le \|\beta_0\|^2 \|W_{ij}\|^2$, we obtain:
\[
\|h_3(X_i, X_j)\|^2
\le
4D\|W_{ij}\|^2 + 4D\|\beta_0\|^2 \|W_{ij}\|^4
+
4D\|W_{ji}\|^2 + 4D\|\beta_0\|^2 \|W_{ji}\|^4.
\]
Squaring this inequality gives:
\[
\|h_3(X_i,X_j)\|^4
\le
64D^2\left(
\|W_{ij}\|^4
+\|\beta_0\|^4\|W_{ij}\|^8
+\|W_{ji}\|^4
+\|\beta_0\|^4\|W_{ji}\|^8
\right).
\]
Finally, taking expectations demonstrates that
$\mathbb{E}\big[\|h_3(X_i, X_j)\|^4\big]
< \infty$
provided $\mathbb{E}[\|W_{ij}\|^8] < \infty$, $\mathbb{E}[\|W_{ji}\|^8] < \infty$, and $\Theta$ is bounded, as required.

\end{proof}

\begin{proof}[Proof of Theorem 4]
 \vspace{1ex}  % forces a line break
 As with the symmetric regressors case, we establish the conditions of Theorem 3.3 in NM. Again, only $X_i$ is assumed to be i.i.d., not $Y_{ij}$, but we know from Lemma 6 that $\frac{1}{\sqrt{N}} \sum^{n-1}_{i=1} \sum^{n}_{j=i+1} \big[Y_{ij} \nabla_{\beta}\log \big(\Phi(W_{ij}^{'}\beta_0) \Phi(W_{ji}^{'}\beta_0)\big) + (1 - Y_{ij}) \nabla_{\beta}\log \big(1 - \Phi(W_{ij}^{'}\beta_0) \Phi(W_{ji}^{'}\beta_0) \big) \big]$ will still be asymptotically normal given the finite eighth moment conditions (assumed). Consistency is guaranteed by Theorem 3 given the conditions of Theorem 4, noting that finite eighth moments imply finite second moments. So it remains to verify Conditions (3.3.i) - (3.3.v):

Condition (3.3.i) is assumed.

Condition (3.3.ii): It is easiest to express the likelihood as 
\[
\begin{aligned}
f(Y_{ij} \mid X_i, X_j, \beta) &= Y_{ij} \Phi\left(W_{ij}'\beta\right) \Phi\left(W_{ji}'\beta\right) + (1-Y_{ij})\left(1 - \Phi\left(W_{ij}'\beta\right) \Phi\left(W_{ji}'\beta\right)\right)
\end{aligned}
\]
instead of 
\[
\begin{aligned}
f(Y_{ij} \mid X_i, X_j, \beta) &= \left[\Phi\left(W_{ij}'\beta\right)\Phi\left(W_{ji}'\beta\right)\right]^{Y_{ij}} \cdot\left[1 - \Phi\left(W_{ij}'\beta\right)\Phi\left(W_{ji}'\beta\right)\right]^{1 - Y_{ij}},
\end{aligned}
\]
which can be shown to be an equivalent formulation by considering the cases of $Y_{ij} = 1$ and $Y_{ij} = 0$ separately. Next, we compute the first derivative of the likelihood:
\begin{align*}
\nabla_{\beta} f(Y_{ij} \mid X_i, X_j,\beta) 
&= (2Y_{ij} - 1) \left[ \phi(W_{ij}'\beta) \Phi(W_{ji}'\beta) W_{ij}  + \ \phi(W_{ji}'\beta) \Phi(W_{ij}'\beta) W_{ji} \right],
\end{align*}
and then the second derivative of the likelihood:
\begin{align*}
\nabla_{\beta\beta} f(Y_{ij} \mid X_i, X_j,\beta) 
&= (2Y_{ij} - 1) \big[ \phi_v(W_{ij}'\beta)\Phi(W_{ji}'\beta) W_{ij}W_{ij}'  + \phi_v(W_{ji}'\beta)\Phi(W_{ij}'\beta) W_{ji}W_{ji}' \\
&\quad\quad\quad\quad\quad\quad + \phi(W_{ij}'\beta)\phi(W_{ji}'\beta) \cdot \left(W_{ij}W_{ji}' + W_{ji}W_{ij}'\right) \big],
\end{align*}
which is clearly continuous since $\Phi, \phi,$ and $\phi_v$ are all continuous. Moreover, it is non-zero at $\beta_0$, thus establishing Condition (3.3.ii).

Condition (3.3.iii): We computed $\nabla_{\beta} f(Y_{ij} \mid X_i, X_j,\beta)$ and $\nabla_{\beta \beta} f(Y_{ij} \mid X_i, X_j,\beta)$ above.

Since $Y_{ij} \in \{0, 1\}, 0 < \Phi(v) < 1, 0 < \phi(v) < 0.4$, and using Hermite polynomials $|\phi_v(v)| < \phi(1) \approx 0.242$, we can bound the norm of the first terms of both derivatives uniformly by $C_1(1 + \| W_{ij} \|^2)$ for some finite constant $C_1$, the norm of the second terms by $C_2(1 + \| W_{ji} \|^2)$, and, assuming $\| W_{ij} \| \geq \| W_{ji} \|$ WLOG,\footnote{Otherwise bound by $C_3(1 + \| W_{ji} \|^2)$} the norm of the third term of the second derivative by $C_3(1 + \| W_{ij} \|^2)$. Then, as in NM's proof for the probit example, we have that:

\begin{align*}
\int \sup_{\beta \in \mathcal{N}_0} \|\nabla_{\beta} f(Y_{ij} \mid X_i, X_j,\beta)\| dz
&\leq \int C_1(1 + \| W_{ij} \|^2) dz + \int C_2(1 + \| W_{ji} \|^2) dz \\
&= 2C_1 + 2C_1 \mathbb{E} \left[ \|W_{ij} \|^2 \right] + 2C_2 + 2C_2 \mathbb{E} \left[ \|W_{ji} \|^2 \right],\\
\end{align*}
and that:
\begin{align*}
\int \sup_{\beta \in \mathcal{N}_0} \|\nabla_{\beta\beta} f(Y_{ij} \mid X_i, X_j,\beta)\| dz
&\leq \int C_1(1 + \| W_{ij} \|^2) dz + \int C_2(1 + \| W_{ji} \|^2) dz \\
&\quad+ \int C_3(1 + \| W_{ij} \|^2) dz \\
&= 2C_1 + 2C_1 \mathbb{E} \left[ \|W_{ij} \|^2 \right] + 2C_2 + 2C_2 \mathbb{E} \left[ \|W_{ji} \|^2 \right] \\
&\quad+ 2C_3 + 2C_3 \mathbb{E} \left[ \|W_{ij} \|^2 \right].
\end{align*}
Thus, Condition (3.3.iii) holds provided $\mathbb{E} \left[ \|W_{ij} \|^2 \right] < \infty$ and $\mathbb{E} \left[ \|W_{ji} \|^2 \right] < \infty$, which is guaranteed by the assumption that $\mathbb{E} \left[ \|W_{ij} \|^8 \right] < \infty$ and $\mathbb{E} \left[ \|W_{ji} \|^8 \right] < \infty$.

Condition (3.3.iv): In the proof of Lemma 5, we computed:

\begin{align*}
\nabla_{\beta} \log f(Y_{ij} \mid X_i, X_j,\beta_0)
&= \phi\bigl(W_{ij}'\beta_0\bigr)\,\Phi\bigl(W_{ji}'\beta_0\bigr) \cdot\left[
  \frac{Y_{ij}}{\Phi\bigl(W_{ij}'\beta_0\bigr)\,\Phi\bigl(W_{ji}'\beta_0\bigr)}
\right. \\[6pt]
&\quad \left.
- \frac{1 - Y_{ij}}{1 - \Phi\bigl(W_{ij}'\beta_0\bigr)\,\Phi\bigl(W_{ji}'\beta_0\bigr)}
\right]W_{ij} \\
&\quad + \phi\bigl(W_{ji}'\beta_0\bigr)\,\Phi\bigl(W_{ij}'\beta_0\bigr) \left[
  \frac{Y_{ij}}{\Phi\bigl(W_{ij}'\beta_0\bigr)\,\Phi\bigl(W_{ji}'\beta_0\bigr)}
\right. \\[6pt]
&\quad \left.
- \frac{1 - Y_{ij}}{1 - \Phi\bigl(W_{ij}'\beta_0\bigr)\,\Phi\bigl(W_{ji}'\beta_0\bigr)}
\right]W_{ji}.
\end{align*}
Then, letting $v_1 = W_{ij}'\beta_0$ and $v_2 = W_{ji}'\beta_0$ to simplify notation:
\[
\begin{aligned}
    J &\coloneq \mathbb{E}\left[\nabla_{\beta} \log f(Y_{ij} \mid X_i, X_j,\beta_0) \nabla_{\beta} \log f(Y_{ij} \mid X_i, X_j,\beta_0)^\prime\right] \\
      &= \mathbb{E}\left[\phi(v_1)^2 \Phi(v_2)^2\left[ \frac{Y_{ij}}{\Phi(v_1)\Phi(v_2)} - \frac{1 - Y_{ij}}{1 - \Phi(v_1)\Phi(v_2)} \right]^2 W_{ij} W_{ij}' \right] \\
      &\quad+ \mathbb{E}\bigg[\phi(v_1)\phi(v_2) \Phi(v_1)\Phi(v_2)\left[ \frac{Y_{ij}}{\Phi(v_1)\Phi(v_2)} - \frac{1 - Y_{ij}}{1 - \Phi(v_1)\Phi(v_2)} \right]^2 \cdot \left(W_{ij} W_{ji}' + W_{ji} W_{ij}'\right) \bigg] \\
      &\quad+ \mathbb{E}\left[\phi(v_2)^2 \Phi(v_1)^2\left[ \frac{Y_{ij}}{\Phi(v_1)\Phi(v_2)} - \frac{1 - Y_{ij}}{1 - \Phi(v_1)\Phi(v_2)} \right]^2 W_{ji} W_{ji}' \right] \\
      &= \mathbb{E}\left[\frac{\phi(v_1)^2 \Phi(v_2)}{\Phi(v_1)(1 - \Phi(v_1)\Phi(v_2))} W_{ij} W_{ij}' \right] + \mathbb{E}\left[\frac{\phi(v_2)^2 \Phi(v_1)}{\Phi(v_2)(1 - \Phi(v_1)\Phi(v_2))} W_{ji} W_{ji}' \right] \\
      &\quad+ \mathbb{E}\left[\frac{\phi(v_1) \phi(v_2)}{1 - \Phi(v_1)\Phi(v_2)} \left(W_{ij} W_{ji}' + W_{ji} W_{ij}' \right)\right].
\end{aligned}
\]
Because $0 < \phi(v) < 0.4$ and $0 < \Phi(v) < 1$, we have that $0 < \phi(v_1)\phi(v_2) < 0.16$ and $0 < 1 - \Phi(v_1)\Phi(v_2) < 1$, so $\frac{\phi(v_1) \phi(v_2)}{1 - \Phi(v_1)\Phi(v_2)}$ is positive and finite. Moreover, $0 < \phi(v_1)^2\Phi(v_2) < 0.16$ and $0 < \Phi(v_1)(1 - \Phi(v_1)\Phi(v_2)) < 1$ (likewise with $v_1$ and $v_2$ interchanged), so $\frac{\phi(v_1)^2 \Phi(v_2)}{\Phi(v_1)(1 - \Phi(v_1)\Phi(v_2))}$ and $\frac{\phi(v_2)^2 \Phi(v_1)}{\Phi(v_2)(1 - \Phi(v_1)\Phi(v_2))}$ are also positive and finite. So existence and nonsingularity of $\mathbb{E} [W_{ij} W_{ij}']$, $\mathbb{E} [W_{ji} W_{ji}']$, and now also $\mathbb{E}[W_{ij} W_{ji}' + W_{ji} W_{ij}']$ guarantee existence and nonsingularity of $J$, as required.

Condition (3.3.v): Given the formula for $\nabla_{\beta} \log f(Y_{ij} \mid X_i, X_j,\beta)$ above and letting $v_1 = W_{ij}'\beta$ and $v_2 = W_{ji}'\beta$ to simplify notation, we can compute the second derivative, $\nabla_{\beta \beta} \log f(Y_{ij} \mid X_i, X_j,\beta)$:

\[
\begin{aligned} 
    &Y_{ij} \left[\frac{\phi_v(v_1)\Phi(v_1) - \phi(v_1)^2}{\Phi(v_1)^2}\right] W_{ij}W_{ij}' + Y_{ij} \left[\frac{\phi_v(v_2)\Phi(v_2) - \phi(v_2)^2}{\Phi(v_2)^2}\right] W_{ji}W_{ji}' \\
    &\quad- (1 - Y_{ij}) \left[\frac{\left(1 - \Phi(v_1)\Phi(v_2) \right)\phi_v(v_1)\Phi(v_2) + \phi(v_1)^2\Phi(v_2)^2}{(1 - \Phi(v_1)\Phi(v_2))^2}\right]W_{ij}W_{ij}' \\
    &\quad- (1 - Y_{ij}) \left[\frac{\left(1 - \Phi(v_1)\Phi(v_2) \right)\phi_v(v_2)\Phi(v_1) + \phi(v_2)^2\Phi(v_1)^2}{(1 - \Phi(v_1)\Phi(v_2))^2}\right]W_{ji}W_{ji}' \\
    &\quad- (1 - Y_{ij}) \left[\frac{\phi(v_1)\phi(v_2)}{(1 - \Phi(v_1)\Phi(v_2))^2}\right]\left(W_{ij}W_{ji}' + W_{ji}W_{ij}'\right). \\
\end{aligned}
\]
Then it is simple to bound the terms in square brackets since $0 < \Phi(v) < 1, 0 < \phi(v) < 0.4,$ and $|\phi_v(v)| < 0.25$. Finally, since $Y_{ij} \in \{0, 1\}$, we see that $\mathbb{E}[\sup_{\beta \in \mathcal{N}_0} \|\nabla_{\beta\beta} \log f(Y_{ij} \mid X_i, X_j,\beta)\|] < \infty$ provided $\mathbb{E} [W_{ij} W_{ij}'] < \infty$ and $\mathbb{E} [W_{ji} W_{ji}'] < \infty$, as required.

\end{proof}

\section{Correlated Errors}

Beyond allowing for asymmetric regressors, another way to extend the model is to allow the correlation between the errors, $\rho$, to take on any value in $(-1, 1)$ as in the most general NTU model introduced in Section 1.2. The issue is that many of the nice properties of the univariate standard normal distribution, $\Phi$, do not hold for the bivariate normal distribution, $\Phi_2^{\rho}$. This means that we cannot simply adapt the proofs from the $\rho = 0$ case above to the general $\rho$ setting. For example, it is not true that $\forall \rho \ [\Phi_2^{\rho}(u) = \Phi_2^{\rho}(v) \implies u = v]$, which was integral to establish identification when $\rho = 0$. Moreover, unlike in the univariate case where the fact that $\frac{d\phi(v)}{dv} = - v \phi(v)$ was used repeatedly, there is no analogous simple formula for the partial derivatives of the PDF of the bivariate normal distribution. Instead, the formula involves the variance-covariance matrix of $(\varepsilon_{ij}, \varepsilon_{ji})$, which contains $\rho$, and therefore significantly complicates the analysis.

Rather than pursuing some alternative proof strategy to address the complications that arise with the bivariate normal distribution, I opt for simulation-based evidence on the consistency of the NTU-MLE with general $\rho$ (with and without symmetric regressors). Specifically, I vary $\rho \in \{-1, -0.8, -0.6, ..., 0.6, 0.8, 1\}$ and in each case generate 500 networks, each with 200 individuals, from the NTU model with a single regressor and $\beta = 1$. With symmetric regressors, I generate $W_{ij} = W_{ji}$ from $N(0, 1)$. With asymmetric regressors, I generate $W_{ij} \neq W_{ji}$ from:\footnote{The reason why I set $\mathrm{Cov}(W_{ij}, W_{ji}) = 0.1$ is because Condition (4e) in Theorem 4 rules out independence between $W_{ij}$ and $W_{ji}$ (and zero means) to establish asymptotic normality when $\rho = 0$.}

$$
\begin{pmatrix}
W_{ij} \\
W_{ji}
\end{pmatrix}
\sim \mathcal{N} \left(
\begin{pmatrix}
0 \\
0
\end{pmatrix},
\begin{pmatrix}
1 & 0.1 \\
0.1 & 1
\end{pmatrix}
\right).
$$

Recall the most general log-likelihood function from Section 1.2, $\hat{Q}_n(\beta, \rho)$. For each simulated network, I compute the NTU-MLEs: $(\hat{\beta}, \hat{\rho}) \coloneq \argmax_{(\beta, \rho)} \hat{Q}_n(\beta, \rho)$. Table A1 contains the average maximum likelihood estimates for $\beta$ and $\rho$ across simulations with symmetric regressors; Table A2 takes averages across simulations with asymmetric regressors. We see that in all cases the average NTU-MLE for $\beta$ is very close to the true value of 1, and that the average NTU-MLE for $\rho$ is very close to the true value of $\rho$. That is, there is strong simulation-based evidence for the consistency of the NTU-MLEs with general $\rho$.

\setcounter{table}{0}
\renewcommand{\thetable}{A\arabic{table}}

\begin{table}[H]
\centering
\caption{Consistency with Correlated Errors and Symmetric Regressors}
\begin{tabular}{cccc}
  \hline
$\beta$ & $\hat{\beta}$ & $\rho$ & $\hat{\rho}$ \\ 
  \hline
  1.0 & 1.0005 & -0.8 & -0.8002 \\ 
  1.0 & 1.0005 & -0.6 & -0.5998 \\ 
  1.0 & 1.0003 & -0.4 & -0.3997 \\ 
  1.0 & 1.0004 & -0.2 & -0.1998 \\ 
  1.0 & 1.0003 & 0.0 & 0.0005 \\ 
  1.0 & 1.0004 & 0.2 & 0.1999 \\ 
  1.0 & 1.0005 & 0.4 & 0.3998 \\ 
  1.0 & 1.0002 & 0.6 & 0.5998 \\ 
  1.0 & 1.0003 & 0.8 & 0.8002 \\  
   \hline
\end{tabular}
\caption*{\footnotesize This table shows average NTU maximum likelihood estimates of $\beta$ and $\rho$ across 500 simulated networks, each with 200 individuals and a single symmetric regressor from a standard normal distribution. In all cases, the average of the estimates $\hat{\beta}$ and $\hat{\rho}$ are close to the true values of $\beta$ and $\rho$.}
\end{table}

\begin{table}[H]
\centering
\caption{Consistency with Correlated Errors and Asymmetric Regressors}
\begin{tabular}{cccc}
  \hline
$\beta$ & $\hat{\beta}$ & $\rho$ & $\hat{\rho}$ \\ 
  \hline
  1.0 & 0.9993 & -0.8 & -0.8000 \\ 
  1.0 & 0.9992 & -0.6 & -0.6001 \\ 
  1.0 & 0.9989 & -0.4 & -0.4004 \\ 
  1.0 & 0.9994 & -0.2 & -0.1998 \\ 
  1.0 & 0.9995 & 0.0 & -0.0007 \\ 
  1.0 & 0.9994 & 0.2 & 0.1986 \\ 
  1.0 & 0.9993 & 0.4 & 0.3987 \\ 
  1.0 & 0.9993 & 0.6 & 0.5983 \\ 
  1.0 & 0.9994 & 0.8 & 0.7989 \\ 
   \hline
\end{tabular}
\caption*{\footnotesize This table shows average NTU maximum likelihood estimates of $\beta$ and $\rho$ across 500 simulated networks, each with 200 individuals, but now with a single asymmetric regressor $W_{ij} \neq W_{ji}$ drawn from a bivariate normal distribution with $\mathrm{Cov}(W_{ij}, W_{ji}) = 0.1$. In all cases, the average of the estimates $\hat{\beta}$ and $\hat{\rho}$ are still close to the true values of $\beta$ and $\rho$.}
\end{table}

\section{Small-sample Behavior}

One might wonder what the simulation results above look like with a much smaller sample size. Table A3 is produced using identical simulation parameters as those used to produce Table A1, but with $n = 20$ instead of $n = 200$. Likewise, Table A4 recreates Table A2 but with $n = 20$ instead of $n = 200$. Even with this much smaller sample size, the NTU-MLEs are close to the true values of the parameters. As explained in Section 3 in the main text, the reason why we have such good finite-sample performance is because, due to the dyadic nature of the data, we have $\sqrt{N}$-convergence instead of $\sqrt{n}$-convergence in Theorems 2 and 4. The notable exception to this promising finite sample performance is when estimating $\rho$ close to 1 in the asymmetric regressor case (see the bottom row of Table A4). Nonetheless, performance improves significantly when say $n = 50$, which is auspicious since both applications in this paper have $n$ much greater than 50.

\begin{table}[H]
\centering
\caption{Table A1 with $n=20$}
\begin{tabular}{cccc}
  \hline
$\beta$ & $\hat{\beta}$ & $\rho$ & $\hat{\rho}$ \\ 
  \hline
  1.0 & 1.0083 & -0.8 & -0.8132 \\ 
  1.0 & 1.0159 & -0.6 & -0.6238 \\ 
  1.0 & 1.0164 & -0.4 & -0.4202 \\ 
  1.0 & 1.0184 & -0.2 & -0.2231 \\ 
  1.0 & 1.0222 & 0.0 & -0.0275 \\ 
  1.0 & 1.0204 & 0.2 & 0.1773 \\ 
  1.0 & 1.0211 & 0.4 & 0.3685 \\ 
  1.0 & 1.0200 & 0.6 & 0.5672 \\ 
  1.0 & 1.0261 & 0.8 & 0.7694 \\ 
   \hline
\end{tabular}
\caption*{\footnotesize This table is constructed identically to Table A1, except that now $n = 20$ instead of $n = 200$. That is, this table shows average NTU maximum likelihood estimates of $\beta$ and $\rho$ across 500 simulated networks, each with 20 individuals and a single symmetric regressor from a standard normal distribution. Despite the significant reduction in sample size, in all cases the average of the estimates $\hat{\beta}$ and $\hat{\rho}$ are still close to the true values of $\beta$ and $\rho$.}
\end{table}

\begin{table}[H]
\centering
\caption{Table A2 with $n=20$}
\begin{tabular}{cccc}
  \hline
$\beta$ & $\hat{\beta}$ & $\rho$ & $\hat{\rho}$ \\ 
  \hline
  1.0 & 1.0096 & -0.8 & -0.8125 \\ 
  1.0 & 1.0136 & -0.6 & -0.6195 \\ 
  1.0 & 1.0132 & -0.4 & -0.4051 \\ 
  1.0 & 1.0083 & -0.2 & -0.2171 \\ 
  1.0 & 1.0119 & 0.0 & -0.0242 \\ 
  1.0 & 1.0136 & 0.2 & 0.1712 \\ 
  1.0 & 1.0151 & 0.4 & 0.3713 \\ 
  1.0 & 1.0149 & 0.6 & 0.5590 \\ 
  1.0 & 1.0076 & 0.8 & 0.7256 \\ 
   \hline
\end{tabular}
\caption*{\footnotesize This table is constructed identically to Table A2, except that now $n = 20$ instead of $n = 200$. That is, this table shows average NTU maximum likelihood estimates of $\beta$ and $\rho$ across 500 simulated networks, each with 20 individuals and a single asymmetric regressor $W_{ij} \neq W_{ji}$ drawn from a bivariate normal distribution with $\mathrm{Cov}(W_{ij}, W_{ji}) = 0.1$. Despite the significant reduction in sample size, the average of the estimates $\hat{\beta}$ and $\hat{\rho}$ are still close to the true values of $\beta$ and $\rho$ in the majority of cases. The notable exception is when estimating $\rho$ close to 1. For example, when $\rho = 0.8$, $\hat{\rho} = 0.726$ here.}
\end{table}

\newpage

\section{Tests for $\beta$}

\begin{table}[ht]
\centering
\caption{Empirical Sizes of Wald and LR Tests for $H_0 : \beta_0 = 1$}
\begin{tabular}{ccc}
  \hline
$\rho$ & Size (Wald) & Size (LR) \\ 
  \hline
  -0.8 & 0.064 & 0.066 \\ 
  -0.6 & 0.054 & 0.054 \\ 
  -0.4 & 0.064 & 0.068 \\ 
  -0.2 & 0.066 & 0.066 \\ 
  0.0 & 0.052 & 0.050 \\ 
  0.2 & 0.062 & 0.062 \\ 
  0.4 & 0.064 & 0.062 \\ 
  0.6 & 0.070 & 0.072 \\ 
  0.8 & 0.068 & 0.068 \\ 
   \hline
\end{tabular}
\caption*{\footnotesize This table shows empirical sizes of Wald tests (reject $H_0 : \beta_0 = 1$ at level $\alpha = 0.05$ if $(1 - \hat{\beta})/{\text{SE}(\hat{\beta})} > Z(0.975)$) and likelihood ratio tests (reject $H_0 : \beta_0 = 1$ at level $\alpha = 0.05$ if $-2[\ell(1, \hat{\rho}) - \ell(\hat{\beta}, \hat{\rho})] > \chi^2_1(0.95)$) for various values of $\rho$. All size calculations are based on 500 networks generated from the general NTU model with the corresponding true values of $\rho$ and $\beta$, each with 200 individuals and a single symmetric regressor from a standard normal distribution. The empirical sizes are close to the nominal size of $\alpha = 0.05$ in all cases.}
\end{table}

\end{document}